\def\dOi{11(3:22)2015}
\def\@endtheorem{\endtrivlist}
\newcommand*{\imgcenter}[1]{\begingroup\setbox0=\hbox{#1}\parbox{\wd0}{\box0}\endgroup}
\newcommand\symiso{\gamma}
\newcommand\turniso{\sigma}
\newif\ifcolorversion
\def\DIAGRAMS{diagrams}
\def\DIAGRAMSTWO{diagrams2}
\def\DIAGRAMS{diagrams-nocolor}
\def\DIAGRAMSTWO{diagrams2-nocolor}
\declaretheorem[name=Theorem,numberwithin=section]{theorem}
\declaretheorem[sibling=theorem]{proposition}
\declaretheorem[sibling=theorem]{definition}
\def\cal#1{ {\mathcal #1} }
\newcommand\defeq{\stackrel{\text{\tiny def}}{=}}
\newcommand\id{\mathrm{id}}
\newcommand\plugL{leval}
\newcommand\plugR{reval}
\newcommand\lc[2][]{\lambda^{#1}[{#2}]}
\newcommand\rc[2][]{\rho^{#1}[{#2}]}
\newcommand\mul{\mathbin{\bullet}}
\newcommand\Lolli{\multimap}
\newcommand\G{\Gamma}
\newcommand\D{\Delta}
\newcommand\N{\mathbb{N}}
\newcommand\op{\mathrm{op}}
\newcommand\isoto{\overset{\sim}{\to}}
\newcommand\definand[1]{\emph{#1}}
\newcommand\sem[1]{\left\llbracket{#1}\right\rrbracket}
\newcommand\resL[2][\bot]{{#2} \mathbin{\setminus} {#1}}
\newcommand\resR[2][\bot]{{#1} \mathbin{/} {#2}}
\newcommand\plug[2]{\langle #1\mid #2\rangle}
\newcommand\focus[1]{\uline{#1}}
\newcommand\SLam{{\mathcal S}}
\newcommand\SNeu{{\mathcal S}_B}
\newcommand\SNF{{\mathcal S}_R}
\newcommand\wild{\textvisiblespace}
\newcommand\set[1]{\{\,#1\,\}}
\newcommand\rdual[1]{{#1}^*}
\newcommand\ldual[1]{{}^*#1}
\newcommand\enc[2][]{\sem{#2}_{#1}}
\newcommand\composIsthm{i_{c}}
\newcommand\decomposIsthm{i_{d}}
\newcommand\composNonIsthm[1]{n_{c}^{(#1)}}
\newcommand\decomposNonIsthm{n_{d}}
\newcommand\composFunOpen{f_{c}}
\newcommand\composValOpen[1]{v_{c}^{(#1)}}
\newcommand\fundecomp{\overset{\leadsto}{\imgcenter{\includegraphics[scale=0.4]{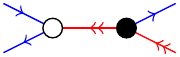}}}}
\newcommand\valdecomp{\overset{\leadsto}{\imgcenter{\includegraphics[scale=0.4]{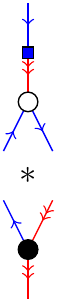}}}}
\newcommand\tmstk[2]{\left<#1\mid#2\right>}
\newcommand\evalsto{\Downarrow}
\newcommand\ONH[1]{O(#1)}
\begin{document}

\title[A correspondence between rooted planar maps and normal planar lambda terms]
        {A correspondence between rooted planar maps and \\ normal planar lambda terms}

\author[N.~Zeilberger]{Noam Zeilberger\rsuper a}
\address{{\lsuper a}MSR-Inria, Joint Centre, 91120 Palaiseau, France}
\email{noam.zeilberger@gmail.com}

\author[A.~Giorgetti]{Alain Giorgetti\rsuper b}
\address{{\lsuper b}FEMTO-ST institute (UMR CNRS 6174 - UBFC/UFC/ENSMM/UTBM), University
of Franche-Comt\'e, 25030 Besan\c{c}on, France / 
CASSIS project, Inria, 54600 Villers-les-Nancy, France}
\email{alain.giorgetti@femto-st.fr}

\keywords{lambda calculus, combinatorics, string diagrams, rooted maps, planarity}

\begin{abstract}
A \emph{rooted planar map} is a connected graph embedded in the 2-sphere, with one edge marked and assigned an orientation.
A term of the pure lambda calculus is said to be \emph{linear} if every variable is used exactly once, \emph{normal} if it contains no $\beta$-redexes,  and \emph{planar} if it is linear and the use of variables moreover follows a deterministic stack discipline.
We begin by showing that the sequence counting normal planar lambda terms by a natural notion of size coincides with the sequence (originally computed by Tutte) counting rooted planar maps by number of edges.
Next, we explain how to apply the machinery of string diagrams to derive a graphical language for normal planar lambda terms, extracted from the semantics of linear lambda calculus in symmetric monoidal closed categories equipped with a \emph{linear reflexive object} or a \emph{linear reflexive pair}.
Finally, our main result is a size-preserving bijection between rooted planar maps and normal planar lambda terms, which we establish by explaining how \emph{Tutte~decomposition} of rooted planar maps (into vertex maps, maps with an isthmic root, and maps with a non-isthmic root) may be naturally replayed in linear lambda calculus, as certain surgeries on the string diagrams of normal planar lambda terms.
\end{abstract}

\maketitle

\section{Introduction: a curious correspondence}
\label{sec:intro}

The pure lambda calculus is a universal programming language based on only two primitive operations: for any pair of terms $t$ and $u$, there is a term
$t(u)$
representing the \emph{application} of $t$ to $u$, while for any pair of a term $t$ and a variable $x$, there is a term
$\lambda x.t$
representing the \emph{abstraction} of $t$ in $x$.
Terms are always considered up to renaming of abstracted variables, so that for example
$\lambda x.x$ and $\lambda y.y$
both represent the same term (intuitively standing for the identity function).
The main source of computation is the rule of \emph{$\beta$-reduction}:
$$(\lambda x.t)(u) \to t[u/x]$$
Here $t[u/x]$ denotes the \emph{substitution} of $u$ for $x$ in $t$ (which technically must be ``capture-avoiding'' in a sense we need not get into here \cite{barendregt1984}).
Equating terms modulo $\beta$-reduction yields a theory known as $\Lambda_\beta$, wherein, for example, we can derive
$$
(\lambda x.xx)(\lambda y.y) = (\lambda y.y)(\lambda y.y) = \lambda y.y
\qquad\text{and}\qquad
((\lambda x.\lambda y.x)w)(\lambda z.z) = (\lambda y.w)(\lambda z.z) = w.
$$
Although it is remarkable that such a simple and conceptual language is Turing-complete, the focus of this paper will be on a much more restrictive but still important subset of lambda calculus known as the \emph{linear} fragment, defined by the requirement that every abstracted variable must be used exactly once.
For example, all of the terms
$$\lambda x.\lambda y.yx
\qquad
\lambda x.x(\lambda y.y)
\qquad
\lambda x.\lambda y.xy
$$
are linear, but all of the terms
$$\lambda x.xx
\qquad
\lambda x.\lambda y.x
\qquad
\lambda x.\lambda y.y
$$
are non-linear.
As one example of the special properties of the linear fragment, note that the problem of computing the $\beta$-normal form of a linear lambda term is PTIME-complete \cite{mairson2004}.

Among the linear terms, it is possible to identify an even more restrictive subset of terms which are \emph{planar}, in the sense that (reading left-to-right) variables are used in the reverse order which they are abstracted.
Thus the two linear terms
$$\lambda x.\lambda y.yx
\qquad
\lambda x.x(\lambda y.y)
$$
are planar, but the linear term
$$
\lambda x.\lambda y.xy
$$
is non-planar.

Motivated by questions related to the study of \emph{type refinement} \cite{mz15popl}, the first author of this paper counted $\beta$-normal planar lambda terms along a natural notion of size, and obtained the following sequence of first numbers through a simple recurrence equation:
$$ 1, 2, 9, 54, 378, 2916, 24057 $$ 
Surprisingly, this sequence already existed in the Online Encyclopedia of Integer Sequences \cite{oeis}, corresponding to the first entries of a series which is indexed in the OEIS as {\bf A000168}.

It turns out that this series is well-known in combinatorics (see, e.g., \cite[VII.8.2]{flajolet-sedgewick}), and counts \emph{rooted planar maps} by number of edges.
A rooted planar map is essentially a connected graph drawn on the sphere with no crossing-edges, and with one edge marked and assigned an orientation.
Here are two small examples, where we have chosen to project the maps onto the page so that the ``infinite'' (outer) face is to the left of the oriented root edge:
\begin{center}
\imgcenter{\includegraphics[width=5cm]{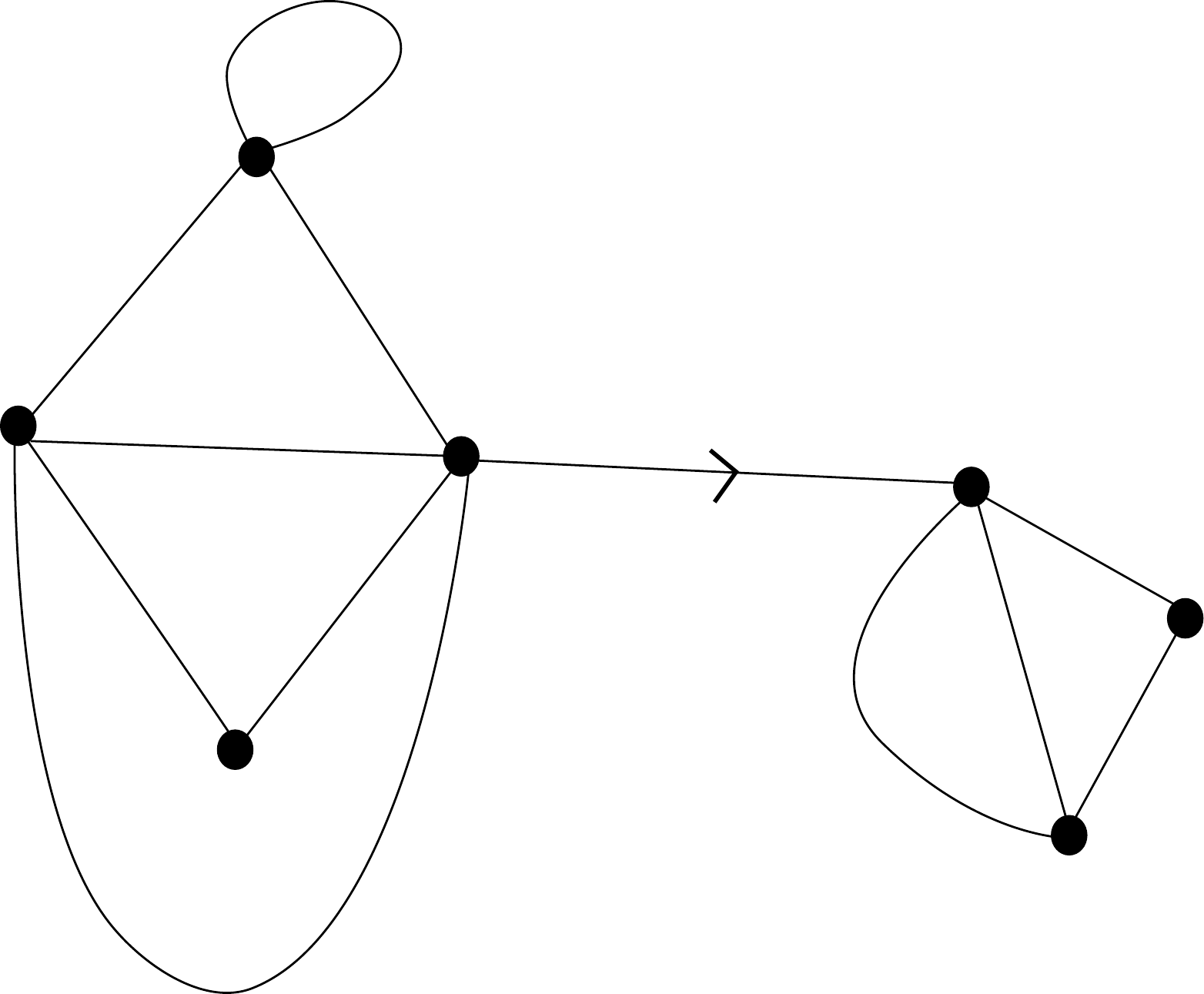}}
\qquad\qquad
\imgcenter{\includegraphics[width=3cm]{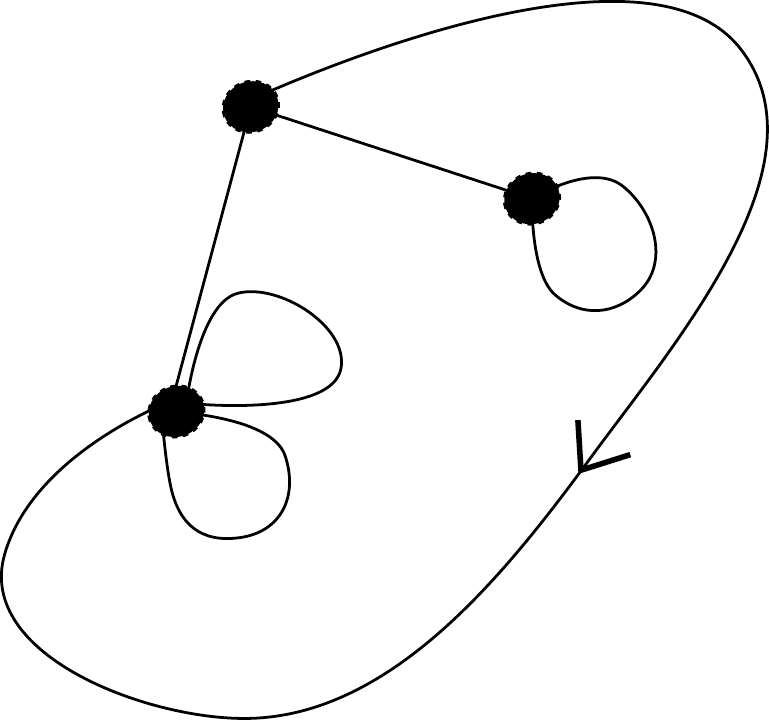}}
\end{center}

Rooted planar maps were originally enumerated in the 1960s by Tutte \cite{tutte1963,tutte1968} as part of an attack on the four-color theorem (which, of course, is about unrooted planar maps).
What makes rooted maps easier to count than unrooted maps is that the latter can have non-trivial symmetries but the former cannot.
Tutte was even able to derive a closed form for the total number of rooted planar maps with $n$ edges: $\frac{2\cdot (2n)!\cdot 3^n}{n!\cdot (n+2)!}$.

The main result of this paper is a size-preserving bijection between rooted planar maps and normal planar lambda terms.
We work towards this result as follows:
\begin{itemize}
\item
In \Cref{sec:genfun}, we introduce linear lambda calculus from a combinatorial perspective, defining linear lambda terms as certain decorations of ``lambda skeletons''.
We then give inductive definitions of various properties of linear lambda terms, and use these to derive functional equations for the generating functions counting normal (and ``neutral'') planar terms by size and number of free variables.
By solving these equations, we demonstrate in particular that the sequence counting closed normal planar lambda terms indeed coincides with A000168.
\item 
In \Cref{sec:diagrams}, we explain how to apply the machinery of string diagrams to derive a graphical language for normal planar lambda terms.
Our first step is a rational reconstruction of the well-known ``lambda-graphs'', as string diagrams extracted from the semantics of linear lambda calculus in symmetric monoidal closed categories equipped with a \emph{linear reflexive object}.
We then introduce the concept of a \emph{linear reflexive pair} as a refinement of linear reflexive object, and use this to extract a coloring protocol for the string diagrams representing normal linear lambda terms.
\item
Finally, in \Cref{sec:tutte} we give the size-preserving bijection between rooted planar maps and normal planar lambda terms.
The idea of the bijection is based on Tutte's analysis of rooted planar maps now known as \emph{Tutte decomposition} \cite{tutte1968}, which starts by establishing a trichotomy on rooted planar maps as either being the degenerate vertex map (with no edges), or else having an isthmic root, or else having a non-isthmic root.
After giving a review of Tutte decomposition, we explain how his analysis may be naturally replayed in linear lambda calculus as certain surgeries on the string diagrams of normal planar lambda terms, and show how to use this to obtain a size-preserving bijection.
\end{itemize}

\section{Lambda skeletons, planarity, neutral and normal terms}
\label{sec:genfun}

The main objects we study in this paper are lambda terms satisfying a combination of three properties: linearity, planarity, and absence of $\beta$-redexes.
Since this is just a small fragment of lambda calculus, we do not need to introduce the full machinery of classical lambda calculus (for which the reader can see \cite{barendregt1984}), and instead take an approach inspired by the operadic perspective advocated by Hyland \cite{hyland-lambda-calculus} since it makes the underlying combinatorial structure of linear/planar lambda terms more apparent.

We begin by defining ``skeletons'' of lambda terms, standing for lambda terms with placeholders for variable names.
\begin{definition}\label{defn:lskeleton}\emph{
A \definand{lambda skeleton} is an element of the set $\SLam(i)$, defined for $i\in \mathbb{N}$ as the least graded set satisfying the following rules:
$$
\infer[V]{\wild \in \SLam(1)}{}
\qquad
\infer[A]{p(q) \in \SLam(j+k)}{p \in \SLam(j) & q \in \SLam(k)}
\qquad
\infer[L]{\lambda \wild.p\in\SLam(i)}{p \in \SLam(i+1)}
$$
The \definand{degree} of a lambda skeleton $p \in \SLam(i)$ is the index $i$.}
\end{definition}
It is worth remarking that lambda skeletons are simply \emph{unary-binary trees} (also known as \emph{Motzkin trees}), where unary ``$L$-nodes'' stand for lambdas, binary ``$A$-nodes'' stand for applications, and ``($V$-)leaves'' stand for variables.
But whereas unary-binary trees are usually parameterized by total number of nodes (which gives rise to the \emph{Motzkin numbers}, as the number of unary-binary trees with a given number of nodes \cite[I.39]{flajolet-sedgewick}),  lambda skeletons are parameterized by the \emph{difference} between the number of leaves and the number of $L$-nodes.
In particular, all of the sets $\SLam(i)$ are infinite, so lambda skeletons cannot be counted directly along their degree.

Linear lambda terms will be defined as certain \emph{decorations} of lambda skeletons.
\newcommand\cterm[2]{[#1]#2}
\newcommand\LLin{\Lambda_1}
\newcommand\LPla{\Lambda_1^0}
\newcommand\islinear[3]{\cterm{#1}{#2} \in \LLin(#3)}
\newcommand\isplanar[3]{\cterm{#1}{#2} \in \LPla(#3)}
\begin{definition}\label{defn:linear}\emph{
A \definand{pseudo lambda term} is a lambda skeleton in which all occurrences of ``$\wild$'' have been replaced by variable names.
Given a lambda skeleton $p \in \SLam(i)$, a list of variable names $\Gamma = x_1,\dots,x_i$, and a pseudo lambda term $t$, we write $\islinear{\Gamma}{t}{p}$ to indicate that $t$ is a \definand{linear lambda term} (with free variables $\Gamma$) \definand{decorating} $p$, as defined by the following rules:
$$
\infer[V]{\islinear{x}{x}{\wild}}{}
\qquad
\infer[A]{\islinear{\Gamma,\Delta}{t(u)}{p(q)}}{\islinear{\Gamma}{t}{p} & \islinear{\Delta}{u}{q}}
\qquad
\infer[L]{\islinear{\Gamma}{\lambda x.t}{\lambda \wild.p}}{\islinear{x,\Gamma}{t}{p}}
$$
$$
\infer[T]{\islinear{\Gamma,x,y,\Delta}{t}{p}}{\islinear{\Gamma,y,x,\Delta}{t}{p}}
$$
We write $[\Gamma]t \in \LLin$ to indicate that $\islinear{\Gamma}{t}{p}$ for some $p$.
In general, linear lambda terms should always be considered with free variables indicated, though at times we will leave this implicit.
We say that two linear lambda terms $[\Gamma]t, [\Delta]u \in \LLin$ are \definand{$\alpha$-equivalent} if one can be obtained from the other by renaming of variables (in $\Gamma$ and $\Delta$, as well as the variables introduced by lambda abstraction within $t$ and $u$).
Linear lambda terms are always considered modulo $\alpha$-equivalence.}
\end{definition}
Intuitively, a linear lambda term is \emph{planar} if it is possible to show that it is linear without using the $T$(ransposition)-rule.
Explicitly, this is equivalent to the following definition.
\begin{definition}\label{defn:planar}\emph{
Let $[\Gamma]t \in \LLin$ be a linear lambda term.
We write $\isplanar{\Gamma}{t}{p}$ to indicate that $t$ is a \definand{planar lambda term} (with free variables $\Gamma$) \definand{decorating} the lambda skeleton $p$, as defined by the following rules:
$$
\infer[V]{\isplanar{x}{x}{\wild}}{}
\qquad
\infer[A]{\isplanar{\Gamma,\Delta}{t(u)}{p(q)}}{\isplanar{\Gamma}{t}{p} & \isplanar{\Delta}{u}{q}}
\qquad
\infer[L]{\isplanar{\Gamma}{\lambda x.t}{\lambda \wild.p}}{\isplanar{x,\Gamma}{t}{p}}
$$
Similarly we write $[\Gamma]t \in \LPla$ to indicate that $\isplanar{\Gamma}{t}{p}$ for some $p$.}
\end{definition}
One important observation about planar lambda terms is that they are entirely determined by their lambda skeleton, and conversely, that any lambda skeleton may be decorated by a (necessarily unique) planar lambda term.
\begin{proposition}
\label{prop:skelplam:unique}
Let $\isplanar{\Gamma}{t}{p}$ and $\isplanar{\Delta}{u}{p}$ be two planar lambda terms decorating the same skeleton $p$.
Then $[\Gamma]t$ and $[\Delta]u$ are $\alpha$-equivalent.
\end{proposition}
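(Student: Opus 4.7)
The plan is to proceed by induction on the lambda skeleton $p$. The key observation is that, unlike the rules of \Cref{defn:linear} which include the transposition rule $T$, the rules of \Cref{defn:planar} are entirely syntax-directed in $p$: each constructor of $p$ matches exactly one rule, and at an application node the splitting of the context $\Gamma = \Gamma_1,\Gamma_2$ into the two subcontexts is forced by the degrees of the two subskeletons. Hence if $\isplanar{\Gamma}{t}{p}$ and $\isplanar{\Delta}{u}{p}$, then $t$ and $u$ already share the same outer shape as $p$, and $|\Gamma| = |\Delta|$ equals the degree of $p$.

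Base case, $p = \wild$: the only applicable rule is $V$, forcing $t = x$, $\Gamma = x$ and $u = y$, $\Delta = y$ for some variable names; the renaming $x \mapsto y$ witnesses the $\alpha$-equivalence. Application case, $p = p_1(q_1)$: we must have $t = t_1(t_2)$, $u = u_1(u_2)$, with contexts splitting as $\Gamma = \Gamma_1,\Gamma_2$ and $\Delta = \Delta_1,\Delta_2$, and $\isplanar{\Gamma_i}{t_i}{p_i}$, $\isplanar{\Delta_i}{u_i}{p_i}$ on the corresponding halves; two induction-hypothesis renamings (first made to act on disjoint pools of bound names by an initial $\alpha$-conversion of $t$ and $u$) combine to a single renaming witnessing $[\Gamma]t \equiv_\alpha [\Delta]u$. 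Abstraction case, $p = \lambda\wild.p'$: then $t = \lambda x.t'$, $u = \lambda y.u'$ with $\isplanar{x,\Gamma}{t'}{p'}$ and $\isplanar{y,\Delta}{u'}{p'}$, and the induction hypothesis on $p'$ yields a renaming matching $x$ with $y$ and $\Gamma$ with $\Delta$, from which the desired $\alpha$-equivalence on the outer terms follows by the definition of $\alpha$-equivalence under a binder.

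The only subtlety is bookkeeping around $\alpha$-equivalence: one must ensure that the renamings constructed for the two subterms at an application node do not accidentally identify variables or clash on bound names. This is handled uniformly by first $\alpha$-converting so that the bound variables appearing inside $t_1,t_2,u_1,u_2$ are drawn from mutually disjoint sets, disjoint also from the free-variable lists; then the composite renaming is well-defined and injective on each scope. This is the only real step requiring care, and it is standard.

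As a corollary of the inductive argument we also see, for free, that every lambda skeleton can be decorated to a planar lambda term, simply by reading off the appropriate rule at each node and choosing fresh variable names; together with the uniqueness statement, this gives a bijection (up to $\alpha$-equivalence) between $\SLam(i)$ and the set of planar lambda terms with $i$ free variables, which will be used in the subsequent generating-function analysis.
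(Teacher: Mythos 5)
Your proof is correct and follows the same route as the paper, which simply states that the claim is immediate by induction on $p$; your elaboration of the syntax-directedness of the rules (in particular, that the context split at an application node is forced by the degrees of the subskeletons) is exactly the reason the induction goes through. The bookkeeping about disjoint bound-variable pools is sensible but not a point the paper dwells on, since terms are taken modulo $\alpha$-equivalence throughout.
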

\begin{proof}
Immediate by induction on $p$.
\end{proof}
\begin{proposition}
\label{prop:skelplam:exists}
For any lambda skeleton $p \in \SLam(i)$, there is a planar lambda term $\isplanar{x_i,\dots,x_1}{p^\dagger}{p}$.
\end{proposition}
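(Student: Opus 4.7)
The plan is to proceed by straightforward structural induction on the lambda skeleton $p$, simultaneously constructing the decorating planar term $p^\dagger$ and verifying that it has the prescribed free variable context $x_i, \dots, x_1$. Because planar terms are considered modulo $\alpha$-equivalence, we are free to rename the variables produced by the induction hypothesis whenever needed to line up with the required naming.

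In the base case $p = \wild \in \SLam(1)$, we simply set $p^\dagger \defeq x_1$ and apply rule $V$ to obtain $\isplanar{x_1}{x_1}{\wild}$.

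For the application case $p = q(r)$ with $q \in \SLam(j)$ and $r \in \SLam(k)$, so that $p \in \SLam(j+k)$, we invoke the induction hypothesis twice. After renaming the resulting planar terms via $\alpha$-equivalence, we may assume we have derivations
$$\isplanar{x_{j+k},\dots,x_{k+1}}{q^\dagger}{q} \qquad\text{and}\qquad \isplanar{x_{k},\dots,x_{1}}{r^\dagger}{r},$$
and rule $A$ then yields $\isplanar{x_{j+k},\dots,x_1}{q^\dagger(r^\dagger)}{q(r)}$, so that $p^\dagger \defeq q^\dagger(r^\dagger)$. For the abstraction case $p = \lambda\wild.q$ with $q \in \SLam(i+1)$, the induction hypothesis (after suitable $\alpha$-renaming) provides $\isplanar{x_{i+1},x_i,\dots,x_1}{q^\dagger}{q}$, and a single application of rule $L$ gives $\isplanar{x_i,\dots,x_1}{\lambda x_{i+1}.q^\dagger}{\lambda\wild.q}$, so we set $p^\dagger \defeq \lambda x_{i+1}.q^\dagger$.

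There is no substantive obstacle: the only subtlety is bookkeeping of variable indices, and in particular checking that the convention of listing free variables in \emph{reverse} order of introduction is exactly what makes the $L$-rule's premise match the induction hypothesis (the binder always consumes the most recently introduced variable, which sits at the left of the context), and makes the $A$-rule's concatenation compatible with the degrees of $q$ and $r$. Combined with Proposition~\ref{prop:skelplam:unique}, this establishes a canonical bijection between lambda skeletons of degree $i$ and $\alpha$-equivalence classes of planar lambda terms with $i$ free variables, which will be convenient in the enumerative and diagrammatic developments that follow.
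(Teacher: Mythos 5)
Your proof is correct and takes essentially the same approach as the paper: both construct $p^\dagger$ by recursion on the skeleton, the only difference being that the paper threads a stack of variable names through a left-to-right traversal (letting the left subterm of an application consume as many names as it needs), while you split the context up front using the known degrees $j$ and $k$ — the same construction in a different presentation.
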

\begin{proof}
There is a simple algorithm for computing $p^\dagger$ recursively, by traversing the lambda skeleton $p$ while maintaining a current list of free variables as a stack (initialized to $\Gamma = x_i,\dots,x_1$, for some $i$ distinct variable names, with $x_i$ at the top):
\begin{itemize}
\item (Case $p = \wild$): pop a variable name from the top of the stack.
\item (Case $p = \lambda\wild.q$): generate a fresh variable name not in
$\Gamma$, push it onto the stack and continue traversing $q$.
\item (Case $p = q(r)$): traverse the application left-to-right, i.e., decorate $q$ (while consuming some variables from the stack) and then decorate $r$.
\end{itemize}
In \Cref{skeleton2planar:fig} we show this algorithm described in the style of operational semantics, as a relation between a pair of a lambda skeleton and an input stack and a pair of a planar lambda term and an output stack.
\begin{figure}[tb!] $$
\infer{\tmstk{\wild}{x,\Gamma} \evalsto \tmstk{x}{\Gamma}}{}
\quad
\infer{\tmstk{\lambda\wild.q}{\Gamma} \evalsto \tmstk{\lambda x.t}{\Gamma'}}
      {x\text{ fresh} & \tmstk{q}{x,\Gamma} \evalsto \tmstk{t}{\Gamma'}}
\quad
\infer{\tmstk{p_1 (p_2)}{\Gamma} \evalsto \tmstk{t_1(t_2)}{\Gamma''}}
      {\tmstk{p_1}{\Gamma} \evalsto \tmstk{t_1}{\Gamma'} &
       \tmstk{p_2}{\Gamma'} \evalsto \tmstk{t_2}{\Gamma''}}
$$
\caption{From lambda skeletons to planar lambda terms
\label{skeleton2planar:fig}}
\end{figure}
For example, the skeleton $\lambda\wild.\lambda\wild.\wild(\lambda \wild.\wild\wild) \in \SLam(0)$ can be decorated by the (closed) planar lambda term $\lambda x.\lambda y.y(\lambda z.zx)$, which is the unique planar decoration of this skeleton up to $\alpha$-equivalence.
\end{proof}

By consequence of \Cref{prop:skelplam:unique,prop:skelplam:exists}, the problem of enumerating planar lambda terms is equivalent to that of enumerating lambda skeletons.
From the point of view of lambda calculus, though, it is natural to further restrain the problem by asking that we only count \emph{$\beta$-equivalence classes}, which is equivalent (by the normalization theorem for linear lambda calculus) to only counting \emph{$\beta$-normal} lambda terms.

Let us recall the following well-known characterization of ($\beta$-)normal lambda terms, in mutual induction with so-called ``neutral'' terms:
\begin{itemize}
\item Any variable $x$ is neutral.
\item If $t$ is neutral and $u$ is normal then the application $t(u)$ is neutral.
\item If $t$ is neutral then $t$ is normal.
\item If $t$ is normal then the abstraction $\lambda x.t$ is normal.
\end{itemize}
These definitions ensure recursively that any term which is neutral or normal cannot contain a $\beta$-redex $(\lambda x.t)u$ as a subterm.
On the other hand, the standard formulation of neutral and normal terms can also plainly be recast as a property of the underlying lambda skeletons.
For reasons which will become apparent in \Cref{sec:diagrams:colored}, we refer to the proof that a lambda skeleton is neutral or normal as a ``coloring'' of that skeleton (and likewise for linear lambda terms, by reference to their underlying skeletons).
\begin{definition}\label{defn:coloring}\emph{
Letting $B$ and $R$ stand for ``blue'' and ``red'', we define two graded sets of lambda skeletons $\SNeu(i)$ and $\SNF(i)$, for $i\in\N$, via the following rules:
$$
\infer[v]{\wild \in \SNeu(1)}{}
\qquad
\infer[a]{p(q) \in \SNeu(j+k)}{p \in \SNeu(j) & q \in \SNF(k)}
\qquad
\infer[s]{p\in\SNF(i)}{p \in \SNeu(i)}
\qquad
\infer[\ell]{\lambda \wild.p\in\SNF(i)}{p \in \SNF(i+1)}
$$
Let $p \in \SLam(i)$ be a lambda skeleton.  A \definand{$c$-coloring} of $p$, for some $c \in \set{B,R}$, consists of a derivation $\pi$ of $p \in \SLam_c(i)$ using the rules $v$, $a$, $s$, and $\ell$.
We write $\pi : (p \in \SLam_c(i))$ to indicate that $\pi$ is a $c$-coloring of $p$.
In turn, a $c$-coloring of a linear lambda term consists of a $c$-coloring of its lambda skeleton.
We say that a linear lambda term is \definand{neutral} if it has a $B$-coloring, and \definand{normal} if it has a $R$-coloring.}
\end{definition}
\begin{proposition}
A linear lambda term is normal if and only if it has no subterms of the form $(\lambda x.t)(u)$.
\end{proposition}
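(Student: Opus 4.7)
The plan is to prove both directions by structural induction, working at the level of lambda skeletons since colorings, normality, and the subterm property $(\lambda x.t)(u)$ all descend from terms to their underlying skeletons.

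For the forward direction, I would strengthen the statement to a mutual induction on the derivation of an $R$- or $B$-coloring, establishing simultaneously that (i) every $B$-colored skeleton contains no subskeleton of the shape $\lambda\wild.p\,(q)$ and is itself not of the form $\lambda\wild.p$, and (ii) every $R$-colored skeleton contains no such subskeleton. The non-trivial case is rule $a$: from $p \in \SNeu(j)$ and $q \in \SNF(k)$ we must check that $p(q)$ is not itself a $\beta$-redex, which is exactly what the auxiliary ``not a lambda'' clause of (i) provides for $p$. The cases $v$, $s$, and $\ell$ are immediate by the induction hypothesis.

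For the backward direction, the key is a small inversion lemma: \emph{if $q \in \SNF(k)$ and $q$ is not of the form $\lambda \wild.q'$, then $q \in \SNeu(k)$}. This is immediate, since the only rule producing a normal skeleton whose root is $\lambda$ is $\ell$, so any other normal coloring of $q$ must end with $s$. Given this lemma, I proceed by induction on the skeleton $p$, assuming no subskeleton is a $\beta$-redex:
\begin{itemize}
\item If $p = \wild$, apply $v$ and then $s$ to get a $R$-coloring.
\item If $p = \lambda\wild.q$, then $q$ inherits the hypothesis, so by IH $q \in \SNF(i+1)$, and rule $\ell$ yields $p \in \SNF(i)$.
\item If $p = q(r)$, both $q$ and $r$ inherit the hypothesis, so by IH they are in $\SNF$. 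Since $p$ itself is not a $\beta$-redex, $q$ is not of the form $\lambda\wild.q'$, so by the inversion lemma $q \in \SNeu$. Rule $a$ now gives $p \in \SNeu$, and a final $s$ gives $p \in \SNF$.
\end{itemize}

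The only subtle step is the inversion lemma, which is the place where one must justify that ``normal but not a lambda'' collapses to ``neutral''. Everything else is a routine case split on the skeleton constructor. Because the statement is purely about the presence or absence of a $\beta$-redex subterm, and this property is preserved under the passage between a linear lambda term and its skeleton (subterms correspond to subskeletons, and $(\lambda x.t)(u)$ corresponds to $\lambda\wild.p\,(q)$), the skeleton-level argument transfers directly to linear lambda terms via \Cref{defn:coloring}.
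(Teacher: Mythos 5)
Your proof is correct. The paper in fact states this proposition without any proof, presenting it as a recollection of the well-known characterization of $\beta$-normal terms, so there is no argument of the paper's to compare against; your write-up supplies the missing details in the standard way. The two places where a naive induction would stall are exactly the ones you isolate: in the forward direction, the strengthened hypothesis that a $B$-colored skeleton is not itself a lambda abstraction (needed so that the application formed by rule $a$ is not itself a redex), and in the backward direction, the inversion lemma that a normal skeleton which is not a lambda abstraction must be neutral (since only rule $\ell$ or rule $s$ can conclude membership in $\SNF$). Both are handled correctly, the degrees are consistent throughout, and the passage between linear lambda terms and their skeletons is harmless since normality and the redex-subterm property are both determined by the underlying skeleton.
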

We emphasize that throughout this paper we will be interested in colorings themselves, rather than in the mere fact that a linear lambda term is neutral or normal.
In particular, the \emph{size} of a normal or neutral linear lambda term will be defined as a property of its associated coloring.
\begin{definition}\label{defn:size}\emph{
Let $p \in \SLam(i)$ be a lambda skeleton, and let $\pi : (p \in \SLam_c(i))$ be a $c$-coloring of $p$.
The \definand{size} $|\pi|$ of $\pi$ is defined as the number of uses of the $s$-rule in $\pi$.
In turn, the size of a normal or neutral linear lambda term is defined as the size of its associated coloring.}
\end{definition}
By inspection, any linear lambda term has at most \emph{one} $B$-coloring or $R$-coloring, which is why it makes sense to define the size of a neutral or normal linear lambda term as the size of its associated coloring.
However, a term could certainly have \emph{both} a $B$-coloring and a $R$-coloring -- since every neutral lambda term is also normal -- and so size is really a property of a given linear lambda term \emph{when viewed as either a neutral term or as a normal term}.
Note that this is an instance of the concept of \emph{coherence} of an interpretation of typing judgments (in the sense of Reynolds \cite[Ch.~16]{reynolds98topl}, viewing the colors $B$ and $R$ as ``types''), and we will make further use of this style of definition in \Cref{sec:tutte}.

Under this definition, for example, the normal linear terms
$$
[x]x(\lambda y.y)\qquad
[x]\lambda y.yx\qquad
[x]\lambda y.xy
$$
all have size two, as exhibited by the following $R$-colorings of their skeletons:
$$
\small
\infer[\fbox{$s$}]{\wild(\lambda \wild.\wild) \in \SNF(1)}{
\infer[a]{\wild(\lambda \wild.\wild) \in \SNeu(1)}{
 \infer[v]{\wild \in \SNeu(1)}{} &
 \infer[\ell]{\lambda \wild.\wild \in \SNF(0)}{
 \infer[\fbox{$s$}]{\wild \in \SNF(1)}{\infer[v]{\wild \in \SNeu(1)}{}}
}}}
\qquad\qquad
\infer[\ell]{\lambda \wild.\wild \wild \in \SNF(1)}{
\infer[\fbox{$s$}]{\wild \wild \in \SNF(2)}{
\infer[a]{\wild\wild \in \SNeu(2)}{
 \infer[v]{\wild \in \SNeu(1)}{} & 
 \infer[\fbox{$s$}]{\wild \in \SNF(1)}{\infer[v]{\wild \in \SNeu(1)}{}}
}}}
$$
Observe, however, that $[x]x(\lambda y.y)$ has size one when viewed as a neutral term rather than as a normal term.

We now turn to the problem of counting neutral and normal planar lambda terms by size, which by \Cref{prop:skelplam:unique,prop:skelplam:exists} is equivalent to counting $B$-colorings and $R$-colorings.
It is straightforward to go from \Cref{defn:coloring,defn:size} to the following families of \emph{generating functions} $B_i(z)$ and $R_i(z)$ counting $B$- and $R$-colorings by size, where the index $i$ stands for the degree of the underlying lambda skeletons, and the coefficient of $z^n$ in each $B_i(z)$ and $R_i(z)$ counts the total number of $B/R$-colorings of size $n$:
$$
B_i(z) = [i=1] + \sum_{j+k=i} B_j(z) R_k(z) \qquad
R_i(z) = zB_i(z) + R_{i+1}(z)
$$
(Here ``$[i=1]$'' denotes the Iverson bracket, 1 if $i=1$ and 0 otherwise.)
Next, we can formally aggregate these families
$$
B(z,x) \defeq \sum_{i\ge 0} B_i(z) x^i  \qquad
R(z,x) \defeq \sum_{i\ge 0} R_i(z) x^i
$$
to define a single pair of generating functions counting colorings along both size and degree. 
By unfolding definitions, we can then check that $B(z,x)$ and $R(z,x)$ satisfy the following functional equations:
\begin{align}
B(z,x) &= x + B(z,x)R(z,x) \label{genfnA}\\
R(z,x) &= zB(z,x) + \frac{1}{x}(R(z,x)-R_0(z)) \label{genfnB}
\end{align}
Equations of this form can be solved by a technique known as the \emph{quadratic method} \cite[p.515]{flajolet-sedgewick}.
In particular we can solve for $R_0(z)$, the generating function counting $R$-colorings of lambda skeletons of degree 0 (hence, closed normal planar lambda terms) by size:
\begin{proposition}\label{prop:genfnB0}
The generating function $R_0(z)$ satisfies
$$R_0(z) = -\frac{1}{54z}\left(1-18z-(1-12z)^{3/2}\right).$$
\end{proposition}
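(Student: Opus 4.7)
My plan is to apply the quadratic method \cite[p.~515]{flajolet-sedgewick} to the system \eqref{genfnA}--\eqref{genfnB}. The first step is to eliminate $R(z,x)$: solving \eqref{genfnA} for $R$ gives $R = (B-x)/B$, and substituting into \eqref{genfnB} and clearing denominators yields a single polynomial identity
$$P(B) \;\defeq\; xz B^2 + (1-x-R_0)B - x(1-x) \;=\; 0,$$
viewed as a quadratic in $B$ with coefficients depending on $x$, $z$, and $R_0 = R_0(z)$.

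The quadratic method proceeds by identifying a critical value $X = X(z)$ of $x$ at which $P$, viewed as a polynomial in $B$, has a double root. At $x = X$ we have simultaneously $P = 0$ and $\partial_B P = 0$, and by implicitly differentiating the identity $P(B(z,x),x,z) = 0$ with respect to $x$ and invoking $\partial_B P = 0$, also $\partial_x P = 0$. These are three polynomial equations in the four unknowns $B, x, z, R_0$, which is enough to extract a closed algebraic relation between $z$ and $R_0$.

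Carrying out the elimination: using $\partial_B P = 0$ to substitute for $xzB$ inside $P$ yields $B(1-x-R_0) = 2x(1-x)$, and combining this with $\partial_B P = 0$ itself produces the \emph{discriminant relation} $(1-x-R_0)^2 = -4x^2 z(1-x)$. Separately, $\partial_x P = 0$ reads $2x = 1 + B - zB^2$; substituting $B = -(1-x-R_0)/(2xz)$ and using the previous relations, one finds after routine simplification that the critical $x$ satisfies
$$9zx^2 - (12z+1)x + (4z+1) = 0, \qquad R_0 = 1 - (1+4z)x + 6zx^2.$$

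To conclude, I solve the quadratic for $x$ and pick the branch analytic at $z = 0$, namely $X(z) = \bigl((12z+1) - \sqrt{1-12z}\bigr)/(18z)$, which satisfies $X(0) = 1$ and makes $R_0(z)$ a power series with nonnegative coefficients vanishing at $z = 0$ (as required since the smallest closed normal planar term $\lambda x.x$ has size~$1$). Substituting this $X(z)$ into $R_0 = 1 - (1+4z)x + 6zx^2$ and reducing $x^2$ via $x^2 = ((12z+1)x - (4z+1))/(9z)$ (which eliminates the square root on the quadratic term and leaves only a linear occurrence) collapses the expression into the claimed closed form $-\frac{1}{54z}(1 - 18z - (1-12z)^{3/2})$. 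The only delicate point I expect is the branch selection and sign-tracking: the discriminant relation $(1-x-R_0)^2 = -4x^2 z(1-x)$ is formally consistent only because $X(z) > 1$ for small positive $z$, and it is this branch choice that must be justified before the purely algebraic manipulation goes through.
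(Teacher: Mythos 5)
Your proposal is correct and is essentially the paper's own argument: both apply the quadratic method to the system \eqref{genfnA}--\eqref{genfnB}, and your conditions $P=\partial_B P=\partial_x P=0$ are equivalent to the paper's $G=\partial_x G=0$ (indeed your discriminant $(1-x-R_0)^2+4x^2z(1-x)$ expands to exactly the paper's $G(z,x)$, even though you eliminate $R$ and work with a quadratic in $B$ while the paper eliminates $B$ and works with a quadratic in $R$). The only difference is that you carry out the elimination by hand, arriving at the same $X(z)$ and the same linear expression for $R_0$ in terms of $X$, where the paper delegates this step to Maple.
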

\begin{proof}
Formula (\ref{genfnA}) becomes
$$B(z,x) = \frac{x}{1-R(z,x)}$$
and after substituting into (\ref{genfnB}) we derive:
\begin{align*}
R(z,x) &= \frac{zx}{1-R(z,x)} + \frac{1}{x}(R(z,x)-R_0(z))
\\ x(1-R(z,x))R(z,x) &= zx^2 + (1-R(z,x))(R(z,x)-R_0(z))
\\ (x-1)(1-R(z,x))R(z,x) &= zx^2 - (1-R(z,x))R_0(z)
\\ ((x-1)(1-R(z,x))-R_0(z))R(z,x) &= zx^2 - R_0(z)
\end{align*}
Then the idea is to define auxiliary functions $F(z,x)$ and $G(z,x)$ by
\begin{align*}
F(z,x) &\defeq x-1-R_0(z) - 2(x-1)R(z,x) \\
G(z,x) &\defeq F(z,x)^2 
\end{align*}
and look for a function $X(z)$ such that $F(z, X(z)) = 0$, implying that $G$ has a double root at $X$.
We have chosen $F(z,x)$ so that by the quadratic formula $G(z,x)$ simplifies to 
$$G(z,x) = (x-1-R_0(z))^2 - 4(x-1)(zx^2-R_0(z)),$$
and combined with the constraints $G(z,X(z)) = 0$ and $\frac{\partial}{\partial x}G(z,x)|_{x=X(z)} = 0$ we have a system of two equations in two unknowns $X(z)$ and $R_0(z)$.
This system of equations can be solved mechanically (for example using Maple),
$$
X(z) = \frac{12z + 1 - \sqrt{1-12z}}{18z} \qquad
R_0(z) = \frac{(12z-1)X(z) - 8z + 1}{3}
$$
and we obtain the stated formula for $R_0(z)$ by algebraic simplification.
\if0
> G := (x-1-b0)^2 - 4*(x-1)*(z*x^2-b0);
> sys := {G=0, diff(G,x)=0};
> sols := solve(sys, {x,b0});
sols := {b0 = 1, x = 0}, {b0 = 4*RootOf(9*_Z^2*z+(-12*z-1)*_Z+4*z+1)*z-(1/3)*RootOf(9*_Z^2*z+(-12*z-1)*_Z+4*z+1)-(8/3)*z+1/3, x = RootOf(9*_Z^2*z+(-12*z-1)*_Z+4*z+1)}
> X0 := rhs(sols[2][2]);
(* X0 := RootOf(9*_Z^2*z+(-12*z-1)*_Z+4*z+1); *)
> B0 := rhs(sols[2][1]);
(* B0 := 4*RootOf(9*_Z^2*z+(-12*z-1)*_Z+4*z+1)*z-(1/3)*RootOf(9*_Z^2*z+(-12*z-1)*_Z+4*z+1)-(8/3)*z+1/3; *)
> series(B0,z);
z+2*z^2+9*z^3+54*z^4+378*z^5+O(z^6)
\fi
\end{proof}
Now, the formula for $R_0(z)$ given in \Cref{prop:genfnB0} is just one factor of $z$ times the known generating function for counting rooted planar maps by number of edges \cite[Proposition VII.11]{flajolet-sedgewick}:
$$-\frac{1}{54z^2}\left(1-18z-(1-12z)^{3/2}\right)$$
Since we also trivially have $R_1(z) = R_0(z)$ (corresponding to the fact that any closed normal lambda term must be a lambda abstraction), we obtain the
\begin{cor}
The number of rooted planar maps with $n$ edges is equal to the number of closed normal planar lambda terms (= $R$-colorings of degree 0) of size $n+1$, and to the number of normal planar lambda terms with one free variable (= $R$-colorings of degree 1) of size $n+1$.
\end{cor}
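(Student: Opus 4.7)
The plan is to deduce the corollary directly from \Cref{prop:genfnB0} by matching generating functions coefficient by coefficient. Let $M(z)$ denote the generating function for rooted planar maps counted by number of edges, whose known closed form
$$M(z) = -\frac{1}{54 z^2}\!\left(1 - 18z - (1-12z)^{3/2}\right)$$
is quoted in the paragraph preceding the corollary (with reference to \cite{flajolet-sedgewick}). Comparing this with the formula for $R_0(z)$ obtained in \Cref{prop:genfnB0}, the identity $R_0(z) = z\,M(z)$ is immediate by inspection. Extracting the coefficient of $z^{n+1}$ on both sides then yields
$$[z^{n+1}]\,R_0(z) = [z^n]\,M(z),$$
which by definition of $R_0$ says exactly that the number of $R$-colorings of degree $0$ of size $n+1$ equals the number of rooted planar maps with $n$ edges. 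This establishes the first equality.

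For the second equality, I would prove the identity $R_1(z) = R_0(z)$ in one of two equivalent ways. The analytic route is to specialize the recurrence $R_i(z) = z B_i(z) + R_{i+1}(z)$ at $i = 0$ and observe that $B_0(z) = 0$: indeed, the defining rules for $\SNeu$ force a neutral skeleton to have at least one free variable, since every neutral skeleton has the shape $x(q_1)\cdots(q_k)$ with a head variable $x$. Hence $R_0(z) = R_1(z)$, which combined with the previous paragraph immediately gives the second claim. Alternatively, and more combinatorially, the rule $\ell$ of \Cref{defn:coloring} provides a size-preserving bijection between $R$-colorings in $\SNF(0)$ and $R$-colorings in $\SNF(1)$ (since $R$-colorings of degree $0$ must, by the unavailability of $B_0$, terminate with an application of $\ell$), and the $\ell$-rule contributes no $s$-nodes, so this bijection preserves size in the sense of \Cref{defn:size}.

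The only real subtlety lies in the invocation of \Cref{prop:skelplam:unique,prop:skelplam:exists} to translate between the enumeration of colored lambda skeletons (on which the functional equations \eqref{genfnA}--\eqref{genfnB} act) and the enumeration of planar lambda terms themselves; but this translation has already been carried out in the paragraph just before \Cref{defn:coloring}, so it can be used without further comment. Aside from that, the proof is essentially a one-line algebraic observation followed by a one-line combinatorial observation, and there is no genuine obstacle; I would simply state it as a direct consequence of \Cref{prop:genfnB0} and the formula for $M(z)$, together with the identity $R_0(z) = R_1(z)$.
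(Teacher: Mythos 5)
Your proposal is correct and follows essentially the same route as the paper: the paper likewise observes that $R_0(z)$ is $z$ times the known generating function for rooted planar maps and that $R_1(z)=R_0(z)$ ``trivially,'' because any closed normal term must be a lambda abstraction (equivalently, $B_0(z)=0$). Your two justifications of $R_1=R_0$ merely spell out this same observation in slightly more detail.
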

\begin{figure}
\begin{center}
\begin{tabular}{c|cccccccccc}
\diagbox{$i$}{$n$} & 0 & 1 & 2 & 3 & 4 & 5 & 6 & 7 & 8 & 9 \\
\hline
1 & 1 & 1 & 3 & 14 & 83 & 570 & 4318 & 35068 & 299907 & 2668994 \\
2 & 0 & 1 & 4 & 20 & 120 & 820 & 6152 & 49448 & 418800 & 3694740 \\
3 & 0 & 0 & 2 & 15 & 105 & 770 & 5985 & 49014 & 419370 & 3720420 \\
4 & 0 & 0 & 0 & 5 & 56 & 504 & 4368 & 38136 & 339696 & 3094896 \\
5 & 0 & 0 & 0 & 0 & 14 & 210 & 2310 & 23100 & 224070 & 2161236 \\
6 & 0 & 0 & 0 & 0 & 0 & 42 & 792 & 10296 & 116688 & 1245816
\end{tabular}
\end{center}
\caption{The number of neutral planar lambda terms of size $n$ with $i$ free variables.}
\label{fig:counting-neutral}
\end{figure}
\begin{figure}
\begin{center}
\begin{tabular}{c|cccccccccc}
\diagbox{$i$}{$n$} & 1 & 2 & 3 & 4 & 5 & 6 & 7 & 8 & 9 & 10 \\
\hline
0 & 1 & 2 & 9 & 54 & 378 & 2916 & 24057 & 208494 & 1876446 & 17399772 \\
1 & 1 & 2 & 9 & 54 & 378 & 2916 & 24057 & 208494 & 1876446 & 17399772 \\
2 & 0 & 1 & 6 & 40 & 295 & 2346 & 19739 & 173426 & 1576539 & 14730778 \\
3 & 0 & 0 & 2 & 20 & 175 & 1526 & 13587 & 123978 & 1157739 & 11036038 \\
4 & 0 & 0 & 0 & 5 & 70 & 756 & 7602 & 74964 & 738369 & 7315618 \\
5 & 0 & 0 & 0 & 0 & 14 & 252 & 3234 & 36828 & 398673 & 4220722 \\
6 & 0 & 0 & 0 & 0 & 0 & 42 & 924 & 13728 & 174603 & 2059486
\end{tabular}
\end{center}
\caption{The number of normal planar lambda terms of size $n$ with $i$ free variables.}
\label{fig:counting}
\end{figure}
From the solution for $R_0(z)$ we can also derive algebraic generating functions for $B(z,x)$ and $R(z,x)$, and use these to compute tables of coefficients, such as the small ones listed in \Cref{fig:counting,fig:counting-neutral}.
As a couple of simple observations we note that:
\begin{itemize}
\item The series counting neutral planar lambda terms with one free variable (i.e., the coefficients of $B_1(z)$, corresponding to row $i=1$ of \Cref{fig:counting-neutral}) also appears in the OEIS as series {\bf A220910}.
\item Adding up each column of \Cref{fig:counting-neutral} gives the first row of \Cref{fig:counting}, what can be expressed in generating functions by the equation $R(z,0) = zB(z,1)$.
Bijectively, this corresponds to the fact that any closed normal lambda term begins with a series of $i$ lambda abstractions, applied to a neutral term with $i$ free variables.
\end{itemize}
\if0
> sols := solve(((x-1)*(1-b) - b0) * b = z*x^2 - b0, b);
sols := (1/2)*(-b0+x-1+sqrt(-4*x^3*z+4*x^2*z+b0^2+2*b0*x+x^2-2*b0-2*x+1))/(x-1), -(1/2)*(b0-x+1+sqrt(-4*x^3*z+4*x^2*z+b0^2+2*b0*x+x^2-2*b0-2*x+1))/(x-1)
> B := eval(sols[1], b0=B0);
> A := x/(1-B);
> mtaylor(B,[x,z],10);
70*x^4*z^5+1526*x^3*z^6+19739*x^2*z^7+208494*x*z^8+1876446*z^9+5*x^4*z^4+175*x^3*z^5+2346*x^2*z^6+24057*x*z^7+208494*z^8+20*x^3*z^4+295*x^2*z^5+2916*x*z^6+24057*z^7+2*x^3*z^3+40*x^2*z^4+378*x*z^5+2916*z^6+6*x^2*z^3+54*x*z^4+378*z^5+x^2*z^2+9*x*z^3+54*z^4+2*x*z^2+9*z^3+x*z+2*z^2+z
> mtaylor(A,[x,z],10)
14*x^5*z^4+504*x^4*z^5+5985*x^3*z^6+49448*x^2*z^7+299907*x*z^8+56*x^4*z^4+770*x^3*z^5+6152*x^2*z^6+35068*x*z^7+5*x^4*z^3+105*x^3*z^4+820*x^2*z^5+4318*x*z^6+15*x^3*z^3+120*x^2*z^4+570*x*z^5+2*x^3*z^2+20*x^2*z^3+83*x*z^4+4*x^2*z^2+14*x*z^3+x^2*z+3*x*z^2+x*z+x
\fi
Finally, although the notion of size given in \Cref{defn:size} turns out to be a natural one for neutral and normal linear terms, let us point out that it has various equivalent formulations.
For example, it is almost identical to counting the total number of variable uses in a term:
\begin{proposition}\label{prop:size}
The size of a normal linear lambda term $t$ with $R$-coloring $\pi$ is equal to the total number of ($V$-)leaves in its underlying lambda skeleton, or explicitly $|\pi| = |t|$, where
$$
|\lambda x.t| = |t| \quad\text{and}\quad
|x| = 1\quad\text{and}\quad
|t(u)| = |t| + |u|.
$$
On the other hand the size of a neutral linear lambda term $t$ with $B$-coloring $\pi$ is equal to the total number of leaves in its lambda skeleton minus one, i.e., $|\pi| = |t|-1$.
\end{proposition}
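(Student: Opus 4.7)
The plan is to prove both claims simultaneously by mutual induction on the structure of the $B$- or $R$-coloring $\pi$, following the four rules $v$, $a$, $s$, $\ell$ that generate $\SNeu$ and $\SNF$. The proof amounts to tracking how the inductively-defined quantity $|t|$ (the number of $V$-leaves of the underlying skeleton) relates to the number of $s$-rule occurrences in $\pi$, and observing that the off-by-one discrepancy between the neutral and normal statements is caused by exactly one use of the $s$-rule.

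For the neutral case ($|\pi| = |t| - 1$), I would handle the two rules producing judgments in $\SNeu$. If $\pi$ ends with $v$, then $t$ is a single variable, so $|t| = 1$ and $\pi$ contains no $s$-rule, giving $|\pi| = 0 = |t|-1$. If $\pi$ ends with $a$, then $t = t_1(t_2)$ and $\pi$ is built from a $B$-coloring $\pi_1$ of $t_1$ and an $R$-coloring $\pi_2$ of $t_2$; the induction hypotheses give $|\pi_1| = |t_1|-1$ and $|\pi_2| = |t_2|$, and since $a$ itself contributes no $s$-rule, adding yields $|\pi| = |\pi_1| + |\pi_2| = (|t_1|-1) + |t_2| = |t|-1$.

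For the normal case ($|\pi| = |t|$), I would handle the two rules producing judgments in $\SNF$. If $\pi$ ends with $\ell$, then $t = \lambda x.t'$ with $\pi$ built from an $R$-coloring $\pi'$ of $t'$; by the inductive hypothesis $|\pi'| = |t'|$, and since the size formula gives $|\lambda x.t'| = |t'|$ and $\ell$ introduces no $s$-rule, we get $|\pi| = |\pi'| = |t|$. If $\pi$ ends with $s$, then $t$ is simultaneously the neutral term underlying some $B$-coloring $\pi'$; the induction hypothesis for the neutral case gives $|\pi'| = |t|-1$, and since this application of $s$ contributes exactly one extra occurrence of the $s$-rule, $|\pi| = 1 + |\pi'| = |t|$.

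There is no genuine obstacle here: once the induction is set up in the correct mutual form, every case is a one-line calculation. The only point worth emphasizing is that the sole bridge between the $\SNeu$ and $\SNF$ derivations is the rule $s$, so the single off-by-one shift in the two formulas is forced and the two statements really do need to be proved together.
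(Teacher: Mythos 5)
Your proof is correct: the mutual induction on the coloring is well-founded, all four cases check out, and you have correctly identified that the single $s$-rule bridging $\SNeu$ to $\SNF$ is what forces the off-by-one shift between the two statements. However, the paper argues differently: rather than inducting, it exhibits an explicit one-to-one correspondence between the $s$-nodes of $\pi$ and the variable occurrences of $t$, obtained by walking up from the neutral body of each $s$-node to its head variable (in the sense of Definition~\ref{defn:headvar}) and back down; in the neutral case the head variable of the whole term is the unique occurrence not matched to any $s$-node, which accounts for the $|t|-1$. The two routes buy slightly different things. Your induction is more elementary and mechanical --- each case is a one-line arithmetic check and no auxiliary notions are needed. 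The paper's bijection is more conceptual: it explains \emph{which} $s$-node each variable occurrence is charged to, and it reuses the head-variable and neutral-body machinery that the paper introduces anyway for the decomposition arguments of Section~\ref{sec:tutte}. Either proof is acceptable; yours is a fine substitute.
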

To prove this, we first recall the standard lambda calculus notion of \emph{head normal form} \cite[p.173]{barendregt1984}, which will also be useful in \Cref{sec:tutte}.
\begin{definition}\label{defn:headvar}\emph{
Let $\pi : (p \in \SNeu(i))$ be a $B$-coloring of a skeleton $p$.
The \definand{head} of $\pi$ is defined as the unique occurrence of ``$\wild$'' in $p$ reached by walking up the derivation $\pi$ from the last rule applied, and always following the left branch of an $a$-rule until reaching a $v$-rule.
In turn, the \definand{head variable} of a linear lambda term $[\Gamma]t \in \LLin(p)$ with $B$-coloring $\pi : (p \in \SNeu(i))$ is defined as the variable annotating the head of $\pi$.
(Note that the head variable of a neutral linear term $[\Gamma]t$ necessarily occurs in $\Gamma$.)
Alternatively, let $\pi : (p \in \SNF(i))$ be a $R$-coloring of a skeleton $p$.
The \definand{body} of $\pi$ is defined as the unique subskeleton of $p$ reached by walking up the derivation $\pi$ from the last rule applied, always moving to the premise of an $\ell$-rule until reaching the premise of an $s$-rule.
In turn, the \definand{neutral body} of a linear lambda term $[\Gamma]t \in \LLin(p)$ with $R$-coloring $\pi : (p \in \SNF(i))$ is defined as the (neutral) linear lambda term annotating the head of $\pi$.}
\end{definition}
For example, $y$ is the head variable of the neutral term $[y,x](y(\lambda z.z))(\lambda w.wx)$, while $[w,x]wx$ is the neutral body of the normal subterm $[x]\lambda w.wx$.
To check that this agrees with \Cref{defn:headvar}, here is the corresponding coloring, where for clarity we have kept all of the variable names decorating the skeleton:
$$
\infer[a]{(y(\lambda z.z))(\lambda w.wx) \in \SNeu(2)}{
 \infer[a]{y(\lambda z.z) \in \SNeu(1)}{
   \infer[v]{y \in \SNeu(1)}{} &
   \infer[\ell]{\lambda z.z \in \SNF(0)}{\infer[s]{z \in \SNF(1)}{\infer[v]{z \in \SNeu(1)}{}}}
  } &
 \infer[\ell]{\lambda w.wx \in \SNF(1)}{
  \infer[s]{wx \in \SNF(2)}{
  \infer[a]{wx \in \SNeu(2)}{
  \infer[v]{w \in \SNeu(1)}{} &
  \infer[s]{x \in \SNF(1)}{\infer[v]{x \in \SNeu(1)}{}}}}}}
$$
\begin{proof}[Proof of \Cref{prop:size}]
Let $\pi$ be the $R$-coloring of a normal linear lambda term $t$.
There is a one-to-one correspondence between $s$-nodes in $\pi$ and variables occurring in $t$, by walking up from the neutral body of an $s$-node to the corresponding head variable, and walking back down to the conclusion of the $s$-node.
The same argument works if we begin with a neutral linear term $t$ with $B$-coloring $\pi$, except that the head variable of $t$ itself does not lead back to an $s$-node.
\end{proof}

\subsection{Related work}
\label{sec:related}

Although lambda calculus is an old subject, its combinatorial aspects have been relatively less studied.
In the published literature, Grygiel and Lescanne \cite{grygiel-lescanne} give recurrence formulas and generating functions for counting pure lambda terms of a given size and number of free variables, as well as for counting normal forms, while David et.~al \cite{david+2013} study asymptotic properties of normalization.
Most closely related to the present paper are works on the combinatorics of linear lambda calculus (under the alternative name of ``BCI'' combinatory logic), by Grygiel, Idziak, and Zaionc \cite{grygiel-idziak-zaionc} and by Bodini, Gardy, and Jacquot \cite{bodini-et-al}.
Both note a connection between the sequence counting general linear lambda terms (not necessarily $\beta$-normal) to series {\bf A062980} of the OEIS, but they do not consider planarity.

\section{A graphical language for (neutral and normal) linear lambda terms}
\label{sec:diagrams}

\Cref{sec:tutte} establishes a bijection between normal planar lambda terms and rooted planar maps, relying on an inductive classification of rooted planar maps due to Tutte.
On the other hand, rooted maps also have a very concrete topological interpretation.
So, for the purpose of \emph{explaining} the bijection -- as well as for better understanding the motivation for studying normal planar lambda terms in the first place -- it is helpful to have an analogous graphical representation of linear lambda terms.

The representation we will use is a variation of an old representation sometimes
referred to as \emph{lambda-graphs with back-pointers}
\cite{statman74phd,ariolablom97,mairson2002dilbert}, and which itself can be
seen as a variation on linear logic proof-nets \cite{girard-linear-logic}
adapted to the special case of linear lambda calculus \cite{guerrini2001}.
Other than some superficial syntactic differences, the main refinement we
introduce is the addition of a coloring protocol that exactly reflects the
restriction of the diagrams to normal and neutral linear lambda terms.
Rather than presenting these diagrams as ``colored lambda-graphs'' or ``colored
proof-nets'', however, one aim of the next section is to explain lambda-graphs
within the well-understood framework of \emph{string diagrams}, which were
originally introduced by Joyal and Street \cite{joyal-street-i} as a categorical
formalization of many different kinds of diagrammatic reasoning (such as Penrose
diagrams in physics).
As far as we know, this rational reconstruction of lambda-graphs is new,
although the ideas we present are quite simple -- just enough to motivate the
coloring protocol.
We will try to keep the exposition relatively elementary, but some background in
category theory and lambda calculus may be helpful for reading
\Cref{sec:diagrams:scott}.
On the other hand, the intrepid reader may try skipping straight to
\Cref{fig:cdiagrams,fig:lam<=3} (in \Cref{sec:diagrams:colored}) to get a quick
feel for the colored diagrams, before heading to \Cref{sec:tutte} where we make
extensive use of this graphical language.

\subsection{From reflexive objects to lambda-graphs}
\label{sec:diagrams:scott}

Our starting point is the insight, due to Dana Scott \cite{scott1980}, that whereas terms of simply-typed lambda calculus can be interpreted as morphisms in arbitrary cartesian closed categories (see, e.g., \cite{lambekscott}), terms of pure (or ``untyped'') lambda calculus can \emph{also} be modelled internally to a cartesian closed category, given an object of that category equipped with a certain special structure (turning it into a so-called \emph{reflexive object}).

Let us recall (for background and details see \cite{maclane}) that a cartesian closed category can be described as a \emph{closed symmetric monoidal category} in which the monoidal structure is cartesian:
\begin{itemize}
\item 
A \definand{monoidal category} is a category $\cal{C}$ equipped with a tensor product and unit operation
$$
\mul : \cal{C} \times \cal{C} \to \cal{C}
\qquad
I : 1 \to \cal{C}
$$
which are associative and unital up to coherent isomorphism.
\item
It is \definand{closed} if in addition it is equipped with left and right residuation operations
$$
\setminus : \cal{C}^\op  \times \cal{C} \to \cal{C}
\qquad
/ : \cal{C}  \times \cal{C}^\op \to \cal{C}
$$
which are right adjoint to the tensor product in each component:
$$
\cal{C}(y, \resL[z]{x}) \cong 
\cal{C}(x \mul y, z) \cong
\cal{C}(x, \resR[z]{y})
$$
Note that this is equivalent to the existence of a pair of natural transformations
$$
\xymatrix{\cal{C}(y,\resL[z]{x}) & \cal{C}(x\mul y,z) \ar[l]_{\lambda^x_{y,z}}\ar[r]^{\rho^y_{x,z}} & \cal{C}(x,\resR[z]{y})}
$$
together with a pair of \emph{evaluation maps}
$$
\xymatrix{x \mul (\resL[z]{x}) \ar[r]^-{\plugL_{x,z}} & z & (\resR[z]{y}) \mul y \ar[l]_-{\plugR_{y,z}}}
$$
satisfying equations
$$
((\id_x\mul \lc[x]{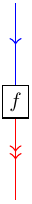});\plugL) = f = ((\rc[y]{f}\mul\id_y);\plugR)
$$

$$
g = \lc[x]{(\id_x\mul g);\plugL}
\qquad
\rc[y]{(h\mul \id_y);\plugR} = h
$$
for all morphisms $f : x \mul y \to z$ and $g : y \to \resL[z]{x}$ and $h : x
\to \resR[z]{y}$.

\item
It is \definand{symmetric} if there is a family of isomorphisms
$$
\symiso_{x,y} : x\mul y \overset\sim\to y \mul x
$$
which are involutive in the sense that $(\symiso_{x,y};\symiso_{y,x}) = \id_{x\mul y}$ for all $x,y \in \cal{C}$, and which satisfy a few additional, natural equations.
\item
It is \definand{cartesian} if the tensor product coincides with the categorical product, this meaning that we have a natural isomorphism
$$
\cal{C}(x, y\mul z) \cong \cal{C}(x,y) \times \cal{C}(x,z).
$$
Note that this is equivalent to the existence of a family of \emph{duplication} and \emph{erasure} operations
\begin{align*}
\Delta_x &: x \to x\mul x \\
e_x &: x \to I
\end{align*}
satisfying certain natural equations.
\end{itemize}
Any cartesian monoidal category is also symmetric, and the tensor product is usually written $x \mul y = x \times y$ and called a \emph{categorical product} (or simply a product), while the unit is written $I = 1$.
In a closed cartesian monoidal category (more often called a cartesian closed category, or ``ccc''), the left and right residuals, which are isomorphic, are usually written $\resL[y]{x} \cong \resR[y]{x} = y^x$ and called \emph{exponential objects}.

Now, Scott defined a \definand{reflexive object} in a ccc $\cal{C}$ as an object $u \in \cal{C}$ equipped with a pair of morphisms
$$
\xymatrix{
u \ar@<.5ex>[r]^{A} & u^u\ar@<.5ex>[l]^{L}
}
$$
such that the $L;A = \id_{u^u}$. The idea is that the two morphisms $A$ and $L$ model the operations of application and lambda abstraction, respectively, while the equation
\begin{align}
L;A &= \id_{u^u} \label{eqn:beta}
\end{align}
models $\beta$-conversion (or more precisely $\beta$-equivalence).
A trivial reflexive object in the cartesian closed category of sets and functions takes $u$ to be the one-element set $1 = \{*\}$, with $L$ and $A$ witnessing the isomorphism $1^1 \cong 1$.
In fact, for cardinality reasons, this is the \emph{only} reflexive object in the category of sets and functions.
On the other hand, Scott also gave an explicit construction of a non-trivial model of the pure lambda calculus by taking $u = \cal{P}(\N)$ to be the lattice of subsets of the natural numbers, and $u^u$ to be not the space of \emph{all} functions $\cal{P}(\N) \to \cal{P}(\N)$, but rather only those functions preserving directed joins \cite{scott1976}.
In terms of the abstract axiomatization introduced in \cite{scott1980}, Scott's (earlier) construction could be interpreted as building a reflexive object in the cartesian closed category of \emph{domains and continuous functions}.

It is also possible to consider a dual equation
\begin{align}
  \id_u &= A;L \label{eqn:eta}
\end{align}
modeling $\eta$-equivalence, which induces an isomorphism $u^u \cong u$, but Scott's definition of reflexive object (and his original model in \cite{scott1976}) only required that $u^u$ be a \emph{retract} of $u$.
A simple but important observation, however, is that the principle of $\alpha$-equivalence is valid by construction even without either equation (\ref{eqn:beta}) or (\ref{eqn:eta}), since the definition itself involves only the two operations $A$ and $L$, with no mention of formal variables.

The idea here is closely related to a technique sometimes used in programming languages and proof assistants under the heading of \emph{higher-order abstract syntax} (HOAS). 
Since a reflexive object lives inside a ccc, and since terms of \emph{simply-typed} lambda calculus may be interpreted in any ccc, we can use lambda calculus itself, in addition to the operations $A$ and $L$, in order to construct morphisms in $\cal{C}$ denoting pure lambda terms.
For example, the closed lambda term
$$\lambda x.xx$$
may be encoded in $\cal{C}$ as the morphism $\epsilon[\lambda x.xx] : 1 \to u$ defined by
\begin{align}
\epsilon[\lambda x.xx] = \lc[u]{\Delta;(A\times\id_u);\plugR};L \label{eqn:lamx.xx}
\end{align}
where we have applied the ``currying'' transformation $\lambda^u$ to the morphism
$$
\xymatrix{
u \ar[r]^-{\Delta_u} & u\times u \ar[rr]^{A\times \id_u} && u^u \times u \ar[r]^-{\plugR} & u
}
$$
to obtain a morphism $1 \to u^u$, and then composed with the operation $L : u^u \to u$.
But this can be more slickly written simply as
$$\epsilon[\lambda x.xx] = L(\bar\lambda x.A(x)@ x)$$
where the ``$\bar\lambda$'' and ``$@$'' in the definition of $\epsilon[t]$ correspond to lambda abstraction and application interpreted by appeal to the ``meta-level'', so to speak -- in other words, translated mechanically into the more explicit definition (\ref{eqn:lamx.xx}) by invoking the ccc structure of $\cal{C}$.
Similarly, the closed term
$$t = (\lambda x.xx)(\lambda y.y)$$
may be encoded in $\cal{C}$ as the morphism (again of type $1 \to u$)
\begin{align*}
\epsilon[t] &\defeq A(L(\bar\lambda x.A(x)@x))@(L(\bar\lambda y.y))
\end{align*}
and now by purely equational reasoning we can verify, for example, that the morphism encoding $t$ is equal to the morphism encoding $\lambda y.y$:
\begin{align*}
\epsilon[t] &= A(L(\bar\lambda x.A(x)@x))@(L(\bar\lambda y.y)) \tag{by definition} \\
 &= (\bar\lambda x.A(x)@x)@(L(\bar\lambda y.y)) \tag{by \ref{eqn:beta}} \\
 &= A(L(\bar\lambda y.y))@(L(\bar\lambda y.y)) \tag{by ccc axioms} \\
 &= (\bar\lambda y.y)@(L(\bar\lambda y.y)) \tag{by \ref{eqn:beta}} \\
 &= L(\bar\lambda y.y) \tag{by ccc axioms} \\
 &= \epsilon[\lambda y.y] \tag{by definition}
\end{align*}
Although it might at first appear circular, this kind of trick is often useful in practice.

Our next step is to observe that Dana Scott's idea also works perfectly well for modelling \emph{linear} lambda calculus in its pure, untyped form, if one simply drops the condition that $\cal{C}$ be a ccc and replaces it by the weaker condition that $\cal{C}$ be a closed symmetric monoidal category (smcc).
As in a ccc, in a smcc the left and right residuals are isomorphic $\resL[y]{x} \cong \resR[y]{x}$, and they are sometimes denoted collectively by $[x,y]$ (matching the notation for the internal hom in category theory) or by $x \Lolli y$ (matching the notation for the implication connective in linear logic).
For what comes next, however, it will be important for us to maintain the distinction between the two isomorphic forms of residuals, and moreover to give an explicit name
$$
\turniso_{x,y} : \resL[y]{x} \isoto \resR[y]{x}
$$
for the isomorphism from the left residual to the right residual.
\begin{definition}\label{defn:linref}\emph{
A \definand{linear reflexive object} in a smcc $\cal{C}$ is an object $u \in \cal{C}$ equipped with a pair of morphisms
$$
\xymatrix{
\resL[u]{u} \ar[r]^-{L} & u\ar[r]^-{A} & \resR[u]{u}
}
$$
such that $L;A = \turniso_{u,u}$.}
\end{definition}
There are certainly some degrees of freedom in this definition that one might consider.
For example, one could imagine defining a linear reflexive object as an object equipped with a pair of morphisms
$$
\xymatrix{
\resR[u]{u} \ar[r]^-{L'} & u\ar[r]^-{A'} & \resL[u]{u}
}
$$
such that $L';A' = \turniso^{-1}_{u,u}$, or perhaps as one equipped with a pair
$$
\xymatrix{
u \ar@<.5ex>[r]^{A''} & \resR[u]{u}\ar@<.5ex>[l]^{L''}
}
$$
such that $L'';A'' = \id_{\resR[u]{u}}$, and so on.
We will come back to the difference between these conventions later, but for now we want to take \Cref{defn:linref} as given, and explain how to go from there to a graphical representation of linear lambda terms, by applying the principles of string diagrams more or less mechanically.

We refer to Selinger's survey article \cite{selinger-survey} for background reading.
Briefly, the basic starting point for string diagrams is to dualize the usual object-and-arrow diagrams of category theory, so that objects become (possibly labelled) wires (or ``strings''), and arrows become nodes between wires:
$$\xymatrix{x \ar[r]^F & y} \quad\leadsto\quad 
\imgcenter{\includegraphics[scale=0.8]{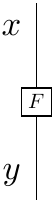}}$$
Moreover, composition of morphisms is depicted by connecting diagrams end-to-end,
$$\xymatrix{x \ar[r]^F & y \ar[r]^G & z} \quad\leadsto\quad 
\imgcenter{\includegraphics[scale=0.8]{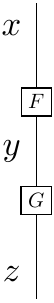}}$$
while the tensor product is depicted by laying out diagrams in parallel (with the tensor unit represented by the blank page):
$$\xymatrix{x\mul z \ar[r]^{F\mul G} & y \mul w} \quad\leadsto\quad 
\imgcenter{\includegraphics[scale=0.8]{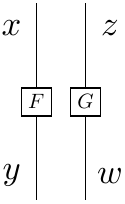}}
\qquad\qquad
\xymatrix{I \ar[r]^I & I} \quad\leadsto\quad
$$
In general, string diagrams may be treated up to deformation, meaning roughly that it is possible to freely stretch and bend wires and move around nodes, so long as the \emph{interface} of the diagram (i.e., the boundary of input and output wires) remains fixed (for a more precise definition, see \cite{joyal-street-i}).
For example, all of the diagrams
$$
\imgcenter{\includegraphics[height=2cm]{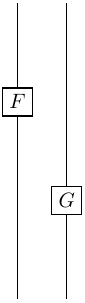}}
\quad=\quad
\imgcenter{\includegraphics[height=2cm]{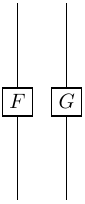}}
\quad=\quad
\imgcenter{\includegraphics[height=2cm]{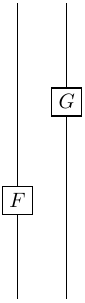}}
$$
are essentially interchangeable, where the isotopy of diagrams is justified by the equations
$$
(F; \id)\mul(\id; G) = (F\mul \id);(\id\mul G) = F\mul G = (\id\mul G);(F\mul \id) = (\id;F)\mul(G;\id)
$$
which hold in any monoidal category.

This basic setup may then be developed by supposing that the monoidal category is equipped with additional structure.
For example, the symmetry isomorphisms of a symmetric monoidal category are naturally depicted as crossing wires:
$$
\xymatrix{x\mul y \ar[r]^{\symiso_{x,y}} & y \mul x} \quad\leadsto\quad 
\imgcenter{\includegraphics[scale=0.8]{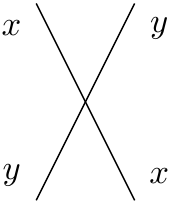}}$$
Representing the evaluation maps and currying transformations of a smcc is in general a bit more subtle (cf.~\cite{baez-stay-rosetta}), but there is a special class of closed symmetric monoidal categories known as \emph{compact closed categories}, which have a particularly simple and elegant graphical language.
A symmetric monoidal category is said to be \definand{compact closed} if every object is equipped with left and right \emph{duals}, where a \definand{right dual} of $x \in \cal{C}$ is an object $\rdual{x} \in \cal{C}$ together with a pair of maps
$$\xymatrix{I \ar[r]^-\eta & x \mul \rdual{x}}\qquad
\xymatrix{\rdual{x} \mul x \ar[r]^-\varepsilon & I}$$
such that $(\eta \mul \id_x);( \id_x\mul \varepsilon) = \id_x$ and $(\id_{\rdual{x}} \mul \eta);( \varepsilon \mul \id_{\rdual{x}}) = \id_{\rdual{x}}$, and similarly a \definand{left dual} of $x \in \cal{C}$ is an object $\ldual{x} \in \cal{C}$ together with a pair of maps
$$
\xymatrix{I \ar[r]^-{\eta'} & \ldual{x} \mul x}\qquad
\xymatrix{x \mul \ldual{x} \ar[r]^-{\varepsilon'} & I} 
$$
such that $(\eta' \mul \id_{\ldual{x}});( \id_{\ldual{x}}\mul \varepsilon') = \id_{\ldual{x}}$ and $(\id_x \mul \eta');( \varepsilon' \mul \id_x) = \id_x$.
Note that any compact closed category is also closed (i.e., has left and right residuals), by defining
\begin{align}
\resL[y]{x} \defeq \ldual{x}\mul y
\qquad
\resR[y]{x} \defeq y\mul \rdual{x} \label{defn:compactsmcc}
\end{align}
and using the maps $\eta^{(')}$  and $\varepsilon^{(')}$ to build the associated currying transformations and evaluation maps.
Also note that whenever both left and right duals exist in a symmetric monoidal category they are necessarily isomorphic, and hence the isomorphism $\ldual{x} \cong \rdual{x}$ holds in any compact closed category.

String diagrams for compact closed categories (cf.~\cite[\S4]{selinger-survey}) are constructed by first assigning \emph{orientations} to the wires to distinguish an object from its duals:
$$
x \in \cal{C} \quad\leadsto\quad
\imgcenter{\includegraphics{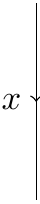}}
\qquad\qquad
\rdual{x},\ldual{x} \in \cal{C} \quad\leadsto\quad
\imgcenter{\includegraphics{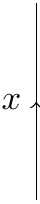}}
$$
Then, the $\eta^{(')}$  and $\varepsilon^{(')}$ maps are depicted as oriented caps and cups, 
$$\xymatrix{I \ar[r]^-\eta & x \mul \rdual{x}}
\quad\leadsto\quad
\imgcenter{\includegraphics[scale=0.8]{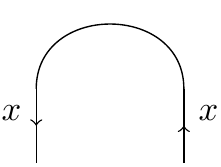}}
\qquad\qquad
\xymatrix{\rdual{x} \mul x \ar[r]^-\varepsilon & I}
\quad\leadsto\quad
\imgcenter{\includegraphics[scale=0.8]{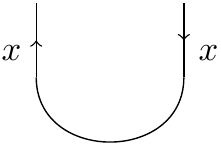}}
$$
$$\xymatrix{I \ar[r]^-{\eta'} & \ldual{x} \mul x}
\quad\leadsto\quad
\imgcenter{\includegraphics[scale=0.8]{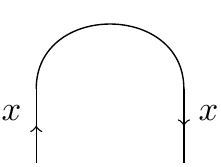}}
\qquad\qquad
\xymatrix{x \mul \ldual{x} \ar[r]^-{\varepsilon'} & I}
\quad\leadsto\quad
\imgcenter{\includegraphics[scale=0.8]{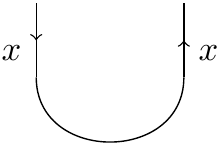}}
$$
while the equations governing them correspond to ``straightening'' the wires:
$$
(\eta \mul \id_x);( \id_x\mul \varepsilon) = \id_x
\ \leadsto\ 
\imgcenter{\includegraphics[height=1.4cm]{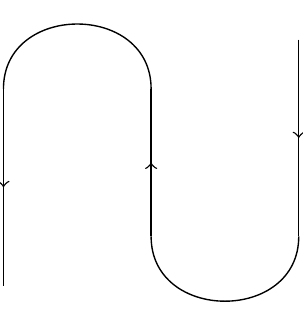}}\quad=\quad
\imgcenter{\includegraphics[height=1.4cm]{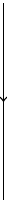}}
\qquad
(\id_{\rdual{x}} \mul \eta);( \varepsilon \mul \id_{\rdual{x}}) = \id_{\rdual{x}}
\ \leadsto\ 
\imgcenter{\includegraphics[height=1.4cm]{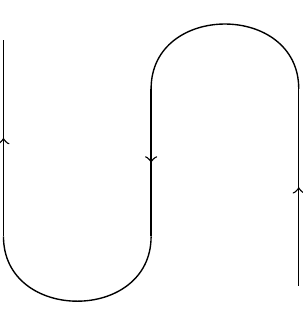}}\quad=\quad
\imgcenter{\includegraphics[height=1.4cm]{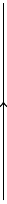}}
$$
$$(\eta' \mul \id_{\ldual{x}});( \id_{\ldual{x}}\mul \varepsilon') = \id_{\ldual{x}}
\ \leadsto\ 
\imgcenter{\includegraphics[height=1.4cm]{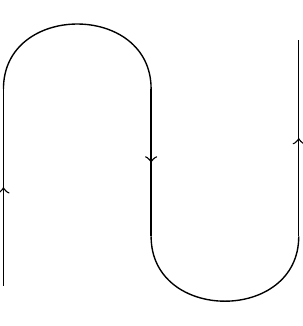}}\quad=\quad
\imgcenter{\includegraphics[height=1.4cm]{\DIAGRAMS/up-wire-plain.pdf}}
\qquad
(\id_x \mul \eta');( \varepsilon' \mul \id_x) = \id_x
\ \leadsto\ 
\imgcenter{\includegraphics[height=1.4cm]{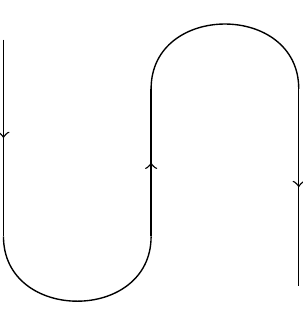}}\quad=\quad
\imgcenter{\includegraphics[height=1.4cm]{\DIAGRAMS/down-wire-plain.pdf}}
$$
Now, since every compact closed category is also a smcc, we can ask what it means for a compact closed category $\cal{C}$ to admit a linear reflexive object.
Expanding \Cref{defn:linref} in terms of the canonical description (\ref{defn:compactsmcc}) of left and right residuals in a compact closed category, a linear reflexive object in $\cal{C}$ consists of an object $u \in \cal{C}$ equipped with a pair of morphisms
$$
\xymatrix{
\ldual{u}\mul u \ar[r]^-{L} & u\ar[r]^-{A} & u\mul \rdual{u}
}
$$
such that $L;A = \turniso_{u,u}$, where the map $\turniso_{u,u}$ is constructed using the symmetry and the isomorphism $\ldual{u} \cong \rdual{u}$.
In turn, following the diagrammatic conventions for compact closed categories, such a structure corresponds to a pair of basic ``components'' $L$ and $A$,
$$\xymatrix{\ldual{u}\mul u \ar[r]^-{L} & u}
\quad\leadsto\quad
\imgcenter{\includegraphics[scale=0.8]{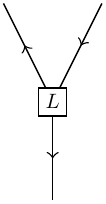}}
\qquad\qquad
\xymatrix{u\ar[r]^-{A} & u\mul \rdual{u}}
\quad\leadsto\quad
\imgcenter{\includegraphics[scale=0.8]{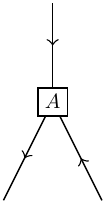}}$$
satisfying the following graphical equation:
$$
L;A = \turniso_{u,u}
\quad\leadsto\quad
\imgcenter{\includegraphics[scale=0.8]{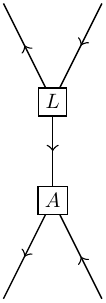}}
\quad=\quad
\imgcenter{\includegraphics[scale=0.8]{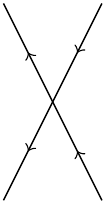}}
$$
To see how this plays out in the interpretation of linear lambda terms, let us first take the step of rendering $L$-nodes by black vertices and $A$-nodes by white vertices, so as to make the diagrams a bit more evocative:
$$
\imgcenter{\includegraphics[height=1.5cm]{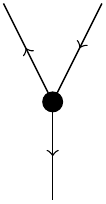}}
\qquad
\vrule
\qquad
\imgcenter{\includegraphics[height=1.5cm]{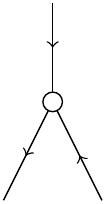}}
\qquad
\vrule
\qquad
\imgcenter{\includegraphics[height=2cm]{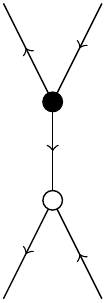}}
\quad=\quad
\imgcenter{\includegraphics[height=1.5cm]{\DIAGRAMS/turn.pdf}}
$$
As suggested by the orientations on the wires, $L$-nodes and $A$-nodes can actually be interpreted as \emph{operations on lambda terms}, with certain wires representing inputs and other wires representing outputs.
We can visualize this by annotating the wires explicitly with input and output terms:\footnote{These annotations are not to be confused with the convention of labelling wires by objects of the category. Here, every wire represents either the linear reflexive object $u$ or its duals $\ldual{u} \cong \rdual{u}$, and so the orientations suffice as object labels.}
\begin{equation}
\tag{LR}\label{rules:LR}
\imgcenter{\includegraphics[height=2.5cm]{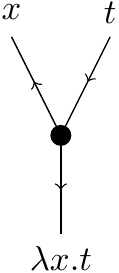}}
\qquad
\vrule
\qquad
\imgcenter{\includegraphics[height=2.5cm]{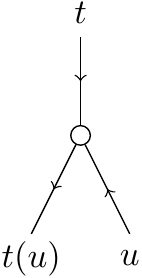}}
\qquad
\vrule
\qquad
\imgcenter{\includegraphics[height=3cm]{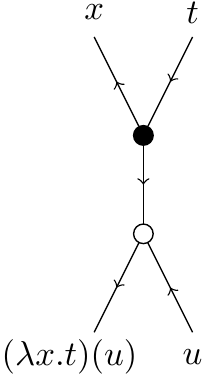}}
\quad=\quad
\imgcenter{\includegraphics[height=2.5cm]{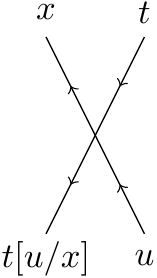}}
\end{equation}
Intuitively, an $L$-node emits a fresh variable $x$ on one wire, then \emph{binds} $x$ in the term $t$ on its incoming wire to produce a term $\lambda x.t$ on its other outgoing wire.
Similarly, an $A$-node takes two lambda terms $t$ and $u$ on its incoming wires, and outputs the application $t(u)$ on its outgoing wire.
Moreover, these interpretations are compatible with the graphical interpretation of $\beta$-conversion.

Formally, suppose we are given a lambda skeleton $p \in \SLam(i)$ and a \emph{derivation} $\pi$ of $\islinear{\G}{t}{p}$, witnessing the fact that $t$ is a linear lambda term decorating $p$.
Then whenever we have a linear reflexive object $(u,L,A)$ in a smcc $\cal{C}$, first of all we can define a morphism $\enc[u]{\pi} : \enc[u]{\Gamma} \to u$ in $\cal{C}$, where $\enc[u]{\Gamma} \in \cal{C}$ is defined inductively by $\enc[u]{x} = u$, $\enc[u]{\Gamma,\Delta} = \enc[u]{\Gamma} \mul \enc[u]{\Delta}$, $\enc[u]{\cdot} = I$.
Moreover, we can define a diagram $\pi^d$ with $i$ incoming wires and one outgoing wire
$$
\imgcenter{\includegraphics[height=2.5cm]{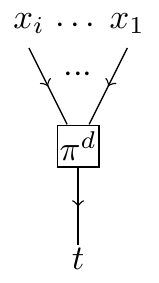}}
$$
which can be seen as a representation of the image of $\enc[u]{\pi}$ in the free compact closed category over $\cal{C}$.
The morphism $\enc[u]{\pi}$ and diagram $\pi^d$ are defined by induction on $\pi$ as follows:
\begin{description}
\item[{Case $\pi = \infer[V]{\islinear{x}{x}{\wild}}{}$}\ ]
Then $\enc[u]{\pi} = \xymatrix{u \ar[r]^{\id_u} & u}$
and
$\pi^d = \imgcenter{\includegraphics{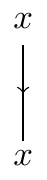}}$. (Draw an oriented wire.)

\item[{Case $\pi = \infer[A]{\islinear{\Gamma,\Delta}{t(u)}{p(q)}}{\deduce{\islinear{\Gamma}{t}{p}}{\pi_1} & \deduce{\islinear{\Delta}{u}{q}}{\pi_2}}$}\ ]
Then 
\\
$\enc[u]\pi = \xymatrix{\enc[u]\Gamma \mul \enc[u]\Delta \ar[rr]^-{\enc[u]{\pi_1} \mul \enc[u]{\pi_2}} && u \mul u \ar[r]^-{A\mul\id_u} & (\resR[u]{u}) \mul u \ar[r]^-{\plugR} & u}$
and $\pi^d = \hspace*{-1.5em}\imgcenter{\includegraphics{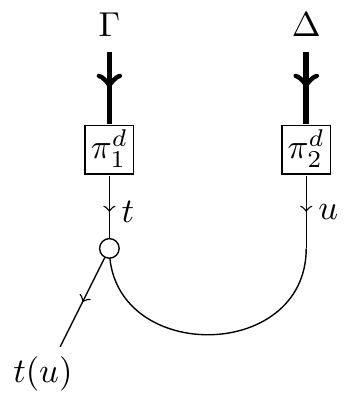}}$. \\
(Connect the outgoing wire of $\pi_2^d$ and the outgoing wire of $\pi_1^d$ to an $A$-node.)


\item[{Case $\pi = \infer[L]{\islinear{\Gamma}{\lambda x.t}{\lambda \wild.p}}{\deduce{\islinear{x,\Gamma}{t}{p}}{\pi_1}}$}\ ]
Then $\enc[u]\pi = \xymatrix{\enc[u]\G \ar[r]^-{\lc[u]{\enc[u]{\pi_1}}} & \resL[u]{u} \ar[r]^-{L} & u}$
and $\pi^d = \imgcenter{\includegraphics{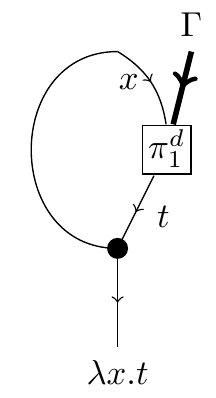}}$.\\
(Connect the outgoing wire and the leftmost incoming wire of $\pi_1^d$ to an $L$-node.)
\item[{Case $\pi = \infer[T]{\islinear{\Gamma,x,y,\Delta}{t}{p}}{\deduce{\islinear{\Gamma,y,x,\Delta}{t}{p}}{\pi_1}}$}\ ]
Then \\
$\enc[u]\pi = \xymatrixcolsep{1.25pc}\xymatrix{\enc[u]\G \mul u \mul u \mul \enc[u]\D \ar[rrr]^-{\id_{\enc[u]\G}\mul\symiso_{u,u}\mul \id_{\enc[u]\D}} &&& \enc[u]\G\mul u \mul u \mul \enc[u]\D \ar[r]^-{\enc[u]{\pi_1}} & u}$
and $\pi^d = \imgcenter{\includegraphics{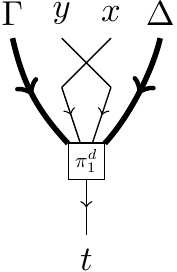}}$.\\
(Crossover the input wires representing $x$ and $y$.)
\end{description}
Here are some example lambda terms together with their annotated diagrams:
$$
[x]\lambda y.yx \quad\leadsto\quad
\imgcenter{\includegraphics[scale=0.8]{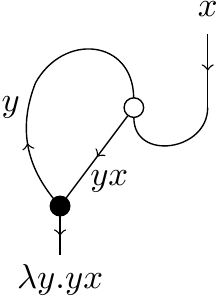}}
\qquad\qquad
[u]\lambda v.(\lambda w.wv)u\quad\leadsto\quad
\imgcenter{\includegraphics[scale=0.8]{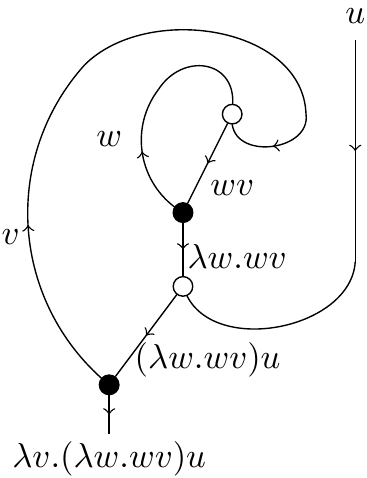}}
$$
The nice thing is that once we've defined the inductive procedure for translating linear lambda terms (i.e., decorated lambda skeletons) into string diagrams, we can work directly with the diagrams in a much more abstract way.
For instance, by the principle of $\alpha$-conversion, the following are also perfectly legal annotations of the above diagrams:
$$
\imgcenter{\includegraphics[scale=0.8]{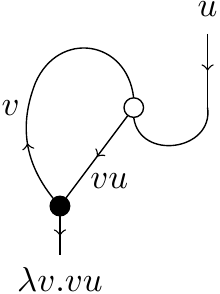}}
\qquad\qquad
\imgcenter{\includegraphics[scale=0.8]{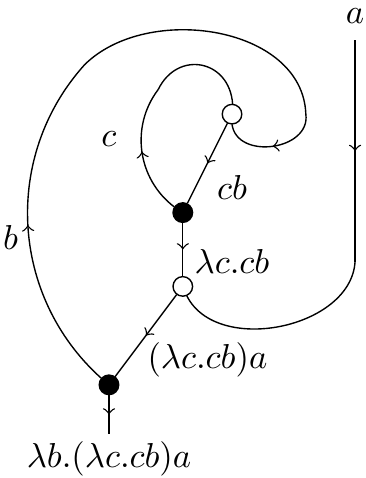}}
$$
By looking at the underlying wiring rather than the annotations (which are merely a guide for relating the diagrams to traditional syntax), we can represent lambda terms intrinsically up to renaming of variables.
Likewise, another important advantage of string diagrams is that substitution can be represented simply by plugging one diagram into another.
For example, the diagram 
$$
\imgcenter{\includegraphics[scale=0.5]{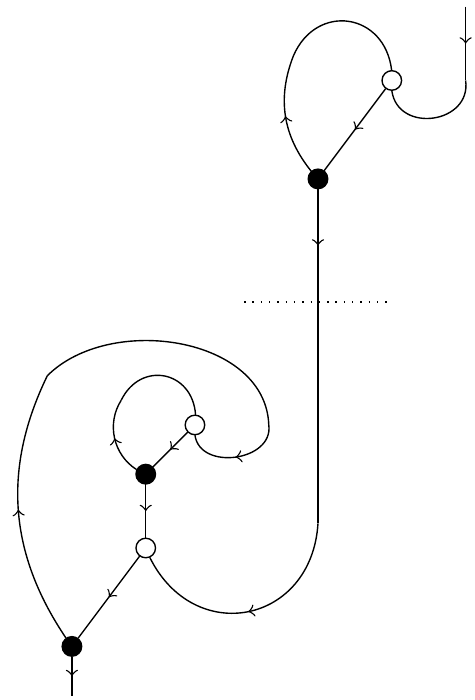}}
$$
represents the lambda term $[x]\lambda v.(\lambda w.wv)(\lambda y.yx)$ that results from substitution of the term $[x]\lambda y.yx$ for the variable $u$ in the term $[u]\lambda v.(\lambda w.wv)u$.\footnote{Note that these two advantages -- the representation of terms modulo $\alpha$-equivalence, and the ability to easily express substitution -- are well-known arguments for the use of HOAS in proof assistants.}

We conclude this section with a few historical and technical remarks:
\begin{enumerate}
\item The graphical language we have derived using the general mechanisms of string diagrams really is not much more than the linear fragment of the language of lambda-graphs (sometimes called ``lambda-graphs with back-pointers'' \cite{ariolablom97}).
In particular, the idea of representing lambda terms as directed graphs with explicit links from a variable to its binding lambda abstraction may be traced at least as far back as Richard Statman's thesis \cite{statman74phd}, if not further.\footnote{Pierre Lescanne (personal communication) notes that a similar convention appears in the opening chapter of Bourbaki's {\em Theory of Sets}, as a variable-free syntax for logical formulas involving quantifiers.}
Moreover, linear lambda-graphs have a very simple relationship with proof-nets for the implicative fragment of linear logic \cite{guerrini2001}.

\item 
However, something to emphasize is that string diagrams are \emph{not} simply special kinds of directed graphs (with two kinds of vertices, and with open edges representing inputs and outputs), but rather they are graphs drawn on the page, and so the order in which wires are positioned around $L$-nodes and $A$-nodes matters for determining planarity.
Observe that the three example diagrams we showed above are all planar diagrams in the sense that they involve no crossing wires, and indeed the three linear lambda terms
$$[x]\lambda y.yx
\qquad
[u]\lambda v.(\lambda w.wv)u
\qquad
[x]\lambda v.(\lambda w.wv)(\lambda y.yx)$$
are all planar in the sense of \Cref{defn:planar}.
On the other hand, the $\beta$-reduction of $[u]\lambda v.(\lambda w.wv)u$ results in a term $[u]\lambda v.uv$ which is not planar in the sense of \Cref{defn:planar}, and whose string diagram is not planar in the sense that it contains a crossing:
$$
\imgcenter{\includegraphics[scale=0.8]{\DIAGRAMS/non-normal-ann.pdf}}
\qquad=\qquad
\imgcenter{\includegraphics[scale=0.8]{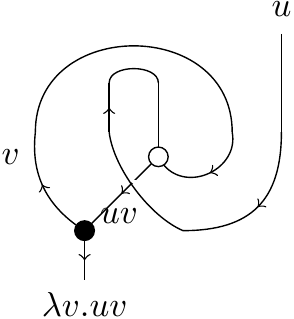}}
$$
Now, we need to be a bit careful here: if planarity is really a topological invariant of string diagrams, technically speaking it does not make sense for two diagrams to be equivalent when one is planar and the other is not.
But this just reflects the fact that $\beta$-reduction is naturally oriented, and our definition of a linear reflexive object (following the pattern of Scott's original definition of a reflexive object in a ccc) does not take that into account.
A more ``honest'' version of \Cref{defn:linref} would take place inside a higher category, so that the rule of $\beta$-reduction could be more faithfully described as an oriented cell
$$L;A \Rightarrow \turniso_{u,u}$$
rather than as an equation (cf.~\cite{seely1987}).
On the other hand, we do not need to pursue this additional level of sophistication here, because in the next section we will describe a simple way of restricting to only the string diagrams which represent $\beta$-normal lambda terms, such that this question does not even arise.
\item Since order matters, some alternative definitions of linear reflexive object would have given rise to a different notion of planarity.
For example, if we had asked for a pair of morphisms
$$
\xymatrix{
u \ar@<.5ex>[r]^{A''} & \resR[u]{u}\ar@<.5ex>[l]^{L''}
}
$$
such that $L'';A'' = \id_{\resR[u]{u}}$ (or $L'';A'' \Rightarrow \id_{\resR[u]{u}}$), then the various components would be drawn as follows:
\begin{equation}
\tag{RL}\label{rules:RL}
\imgcenter{\includegraphics[height=2.5cm]{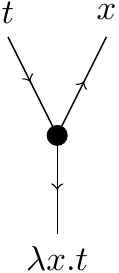}}
\qquad
\vrule
\qquad
\imgcenter{\includegraphics[height=2.5cm]{\DIAGRAMS/app-ann.pdf}}
\qquad
\vrule
\qquad
\imgcenter{\includegraphics[height=3cm]{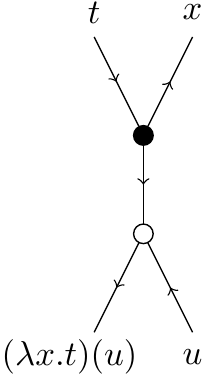}}
\quad=\quad
\imgcenter{\includegraphics[height=2.5cm]{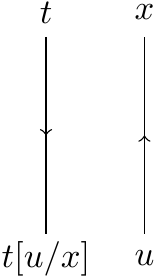}}
\end{equation}
In the literature on lambda-graphs, both of the alternative conventions (\ref{rules:LR}) and (\ref{rules:RL}) appear.
For example, Guerrini \cite{guerrini2001} displays $\beta$-reduction as a crossing, while Mairson \cite{mairson2002dilbert} displays it as a planar rewriting (and Buliga \cite{buliga2013GLC} considers both possibilities).
Notably, Abramsky \cite{abramsky-2009} has discussed a notion of planarity that coincides with the fragment of linear lambda terms whose string diagrams are planar by the (\ref{rules:RL}) convention.
For example, the two terms $[x]\lambda y.xy$ and $[x]x(\lambda y.y)$ are planar according to Abramsky's definition (``RL-planar''), whereas the (``LR-planar'') term $[x]\lambda y.yx$ is not RL-planar.

There is actually a trivial bijection between LR-planar terms and RL-planar terms, based on the fact that both are fully determined by their lambda skeletons.
Indeed, it is possible to adapt the algorithm described in \Cref{skeleton2planar:fig} to annotate a lambda skeleton with an RL-planar term, where the only modification needed is that instead of traversing applications left-to-right, they are traversed right-to-left (hence the mnemonics ``LR'' and ``RL'').
Therefore, at this level of abstraction, the choice of planarity convention might seem like just a matter of taste.
However, we believe the (\ref{rules:LR}) convention to be more natural when viewing planarity as a \emph{property} of linear lambda terms, rather than as defining an independent ``planar lambda calculus''.
We will provide some evidence for this view in \Cref{sec:tutte}, by showing that normal LR-planar lambda terms admit a computationally-natural analogue of \emph{Tutte decomposition}.
In particular, although the said bijection between LR-planar terms and RL-planar terms preserves the property of being $\beta$-normal (so that normal RL-planar terms are also in size-preserving bijection with rooted planar maps), it considerably changes the computational structure of terms.

\item
Finally, let us point out that not every possible string diagram composed out of $L$-nodes and $A$-nodes results in a valid linear lambda term.
For example, the diagram
\begin{center}\includegraphics{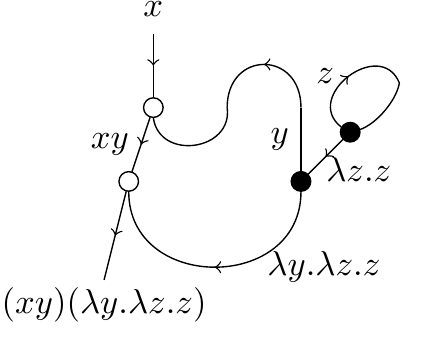}\end{center}
only represents a \emph{pseudo} lambda term,\footnote{Thanks to Ed Morehouse for this example.} which is \emph{ill-scoped} in the sense that the variable $y$ is used before it is bound by $\lambda y$.
This is a phenomenon which is well-known in the literature on lambda-graphs (see, e.g., the \emph{scoped lambda-graphs} of \cite{ariolablom97}), as well as in the literature on proof-nets (where it leads to so-called \emph{correctness criteria}).
In terms of the categorical semantics, this corresponds to the fact that arbitrary diagrams composed of $L$-nodes and $A$-nodes can be interpreted as morphisms in any compact closed category containing a linear reflexive object, but not every smcc is compact closed.
Again, though, this is perfectly fine for our purposes here, since we will always be able to verify that the diagrams we consider come from the interpretation of a linear lambda term.
\end{enumerate}

\subsection{A coloring protocol for neutral and normal terms}
\label{sec:diagrams:colored}

Now that we have explained the semantic basis of lambda-graphs as string diagrams for linear lambda terms, we will move more quickly in describing how to color these string diagrams to obtain a graphical representation of neutral and normal terms.
Our coloring protocol is again derived mechanically from a \emph{refinement} of the definition of a linear reflexive object.
\begin{definition}\label{defn:linrefpair}\emph{
A \definand{linear reflexive pair} in a smcc $\cal{D}$ is a pair of objects $B,R \in \cal{D}$ equipped with a quadruple of morphisms
$$
\xymatrix{\resL[R]{B} \ar[r]^-{\ell} & R\ar@<.5ex>[r]^c & B \ar@<.5ex>[l]^s\ar[r]^-a & \resR[B]{R}}
$$
such that $s;c = \id_B$ and $\ell;c;a = (\resL[c]{\id_B});\turniso_{b,b};(\resR[\id_B]{c})$.}
\end{definition}
This is actually a ``refinement'' in a technical sense: if one ignores the morphism $c : R \to B$ and associated equations (which we shall explain shortly), this is essentially a linear variation of the \emph{refinement type signature} originally presented by Pfenning in \cite{pfenning93types}.
There, he gave an elegant formulation of the standard inductive definition of neutral and normal lambda terms (which we recalled in \Cref{sec:genfun}), as a refinement of the higher-order abstract syntax representation of lambda terms.\footnote{In fact, he considered a slightly more sophisticated, dependently-typed HOAS representation of natural deduction proofs (which are isomorphic to simply-typed lambda terms), and its refinement to represent normal and neutral proofs. But the adaptation of the refinement type signature in \cite{pfenning93types} to the case of pure lambda calculus is straightforward.}
Our categorical reformulation is based on a functorial view of type refinement \cite{mz15popl}, the idea being that one should view a linear reflexive pair in some smcc $\cal{D}$ as living over a linear reflexive object in another smcc $\cal{C}$, equipped with a (smcc) functor $|{-}| : \cal{D} \to \cal{C}$ such that 
$$
|B| = |R| = u \qquad
|\ell| = L \qquad
|s| = \id_u = |c| \qquad
|a| = A.
$$
This definition of linear reflexive pair may also be compared to Melliès' definition of \emph{Frobenius pair} \cite{mellies2013frobenius}, which is a refinement of the notion of a Frobenius monoid.

\begin{figure}
$$
\imgcenter{\includegraphics[height=3cm]{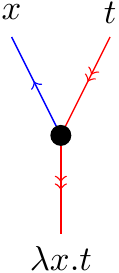}}
\quad
\vrule
\qquad
\imgcenter{\includegraphics[height=3cm]{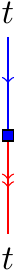}}
\qquad
\vrule
\qquad
\imgcenter{\includegraphics[height=3cm]{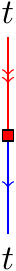}}
\qquad
\vrule
\quad
\imgcenter{\includegraphics[height=3cm]{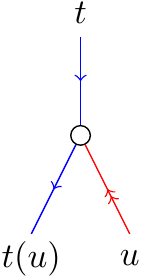}}
\quad\vrule\quad
\imgcenter{\includegraphics[scale=0.7]{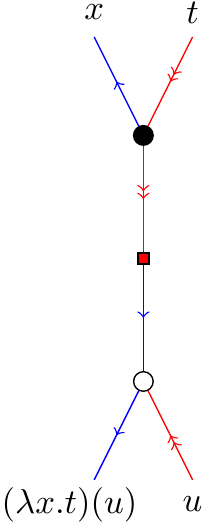}}
\quad=\quad
\imgcenter{\includegraphics[scale=0.7]{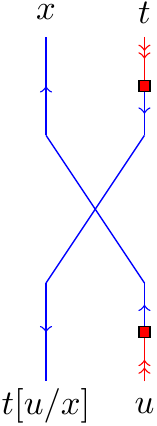}}
\qquad
\vrule
\qquad
\imgcenter{\includegraphics[scale=0.8]{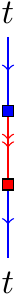}}
\quad=\quad
\imgcenter{\includegraphics[scale=0.8]{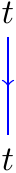}}
$$
\caption{Basic components and reduction rules representing a linear reflexive pair, annotated with their corresponding actions on lambda terms.}
\label{fig:cdiagrams}
\end{figure}

In \Cref{fig:cdiagrams}, we describe the graphical language that results from interpreting a linear reflexive pair in a compact closed category (reading the components left to right as $\ell$, $s$, $c$, and $a$, followed by the two equations).
The recipe is precisely analogous to the one we detailed in \Cref{sec:diagrams:scott}, but we comment on a few points:
\begin{enumerate}
\item The objects $B$ and $R$ are interpreted respectively as blue and red oriented wires.
To increase visual contrast and make the diagrams readable without color, we place an extra stroke on red wires.
\item 
$\ell$-nodes are a colored version of $L$-nodes, where reading counterclockwise the wires run as follows: outgoing-red, incoming-red, outgoing-blue.
\item 
$a$-nodes are a colored version of $A$-nodes, where reading counterclockwise the wires run as follows: incoming-blue, outgoing-blue, incoming-red.
\item The annotations are derived from the forgetful functor $|{-}| : \cal{D} \to \cal{C}$ described above. In particular, observe that $s$-nodes and $c$-nodes act as identity operations on lambda terms.
\end{enumerate}
Moreover, by a simple extension of the inductive procedure described in \Cref{sec:diagrams:scott}, any neutral or normal linear lambda term $[x_i,\dots,x_1]t$ may be assigned a morphism of the form $\enc{\pi} : B\mul\dots\mul B \to B$ or $\enc{\pi} : B\mul\dots\mul B \to R$ in any smcc with a linear reflexive pair, as well as a corresponding colored diagram of the form
$$
\imgcenter{\includegraphics[scale=0.8]{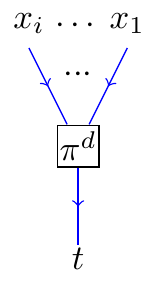}}
\qquad\text{or}\qquad
\imgcenter{\includegraphics[scale=0.8]{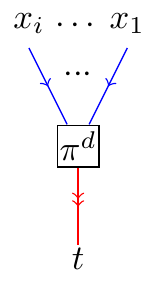}}.
$$
Diagrams of such neutral or normal terms have the additional property of containing no $c$-nodes, and in fact, \emph{all} of the diagrams that we consider below have no $c$-nodes -- so it is worth commenting on the presence of the operation $c : R \to B$ and its associated equations in \Cref{defn:linrefpair}.

\begin{figure}
\begin{center}
\includegraphics{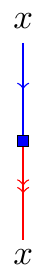}
\qquad\qquad\vrule\qquad\qquad
\includegraphics{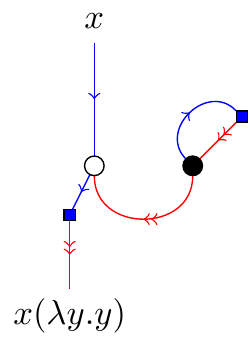}
\qquad
\includegraphics{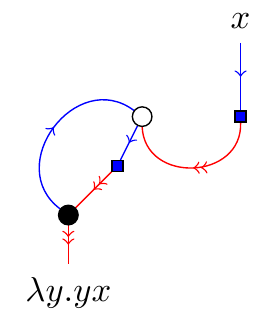}
\end{center}
\hrule
\begin{center}
\includegraphics{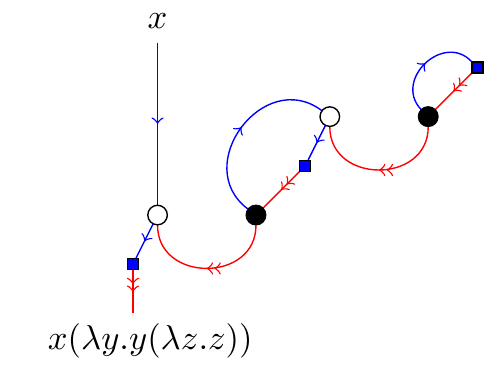}\qquad
\includegraphics{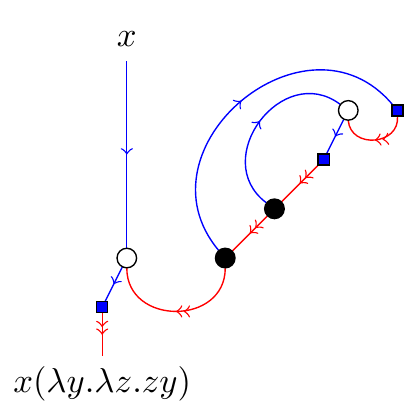}\qquad
\includegraphics{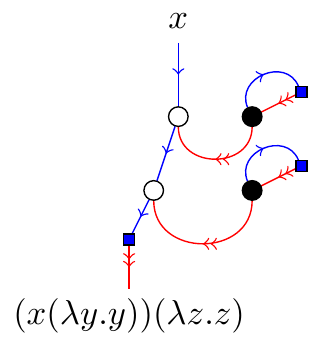}\qquad
\includegraphics{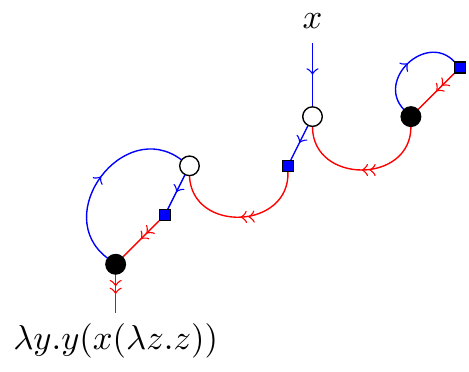}\qquad
\includegraphics{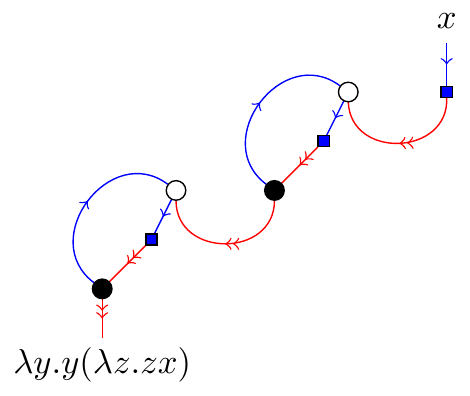}\qquad
\includegraphics{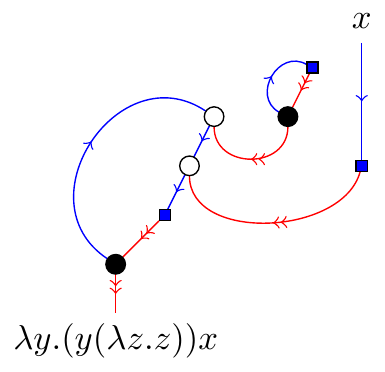}\qquad
\includegraphics{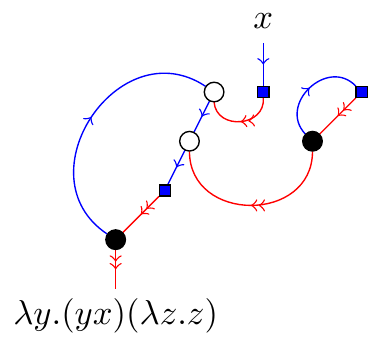}\qquad
\includegraphics{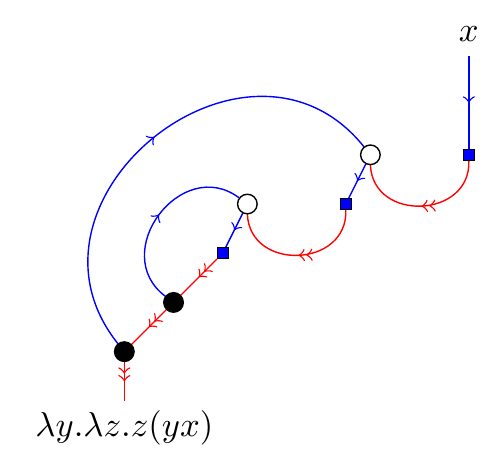}\qquad
\includegraphics{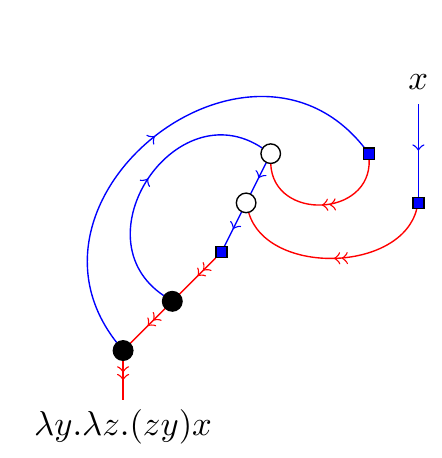}
\end{center}
\caption{String diagrams of the 12 normal planar lambda terms of size $\le 3$ (with one free variable). We annotate the incoming wire with the name of the free variable, and the outgoing wire with the represented lambda term.}
\label{fig:lam<=3}
\end{figure}

In proof theory and type systems, this technique is actually well-established
(cf.~\cite{polakow-pfenning99,daviespfenning00}): starting from a language
restricted to only neutral and normal terms, one can represent arbitrary terms
by adding ``virtual'' coercions from normal to neutral, which can then be
eliminated by a process analogous to cut-elimination for sequent calculus.
In the presence of $c$-nodes, any linear lambda term can be represented by many
different diagrams, but the normalization theorem for linear lambda calculus
implies that these can all be reduced to a unique $c$-node-free one.
However, we are not going to study normalization in this paper, and the only
reason we give the general definition of a linear reflexive pair and its
associated graphical language is because these are natural refinements of the
definition and associated graphical language of a linear reflexive object.
In the next section we will consider string diagrams representing normal linear
lambda terms, hence which are $c$-node-free, and moreover we will be mainly
interested in the planar case.
\Cref{fig:lam<=3} shows all such ($c$-node-free, planar) diagrams for the 12
normal planar lambda terms of size $\le 3$, while \Cref{sec:size4} shows all
diagrams for the 54 normal planar terms of size $= 4$.

\section{Relating normal planar lambda terms to rooted planar maps \\ via Tutte decomposition}
\label{sec:tutte}

In this section we give our main result, a size-preserving bijection between
normal planar lambda terms and rooted planar maps.
By ``normal planar lambda term'', we mean a linear lambda term which
\begin{enumerate}
\item is planar in the sense of \Cref{defn:planar},
\item is equipped with an $R$-coloring in the sense of \Cref{defn:coloring}, and
\item has one free variable.
\end{enumerate}
Our proof relies on an inductive characterization of rooted planar maps originally described by Tutte \cite{tutte1968}, so we begin by recalling his analysis in \Cref{sec:tuttemaps} (for another presentation, see Flajolet and Sedgewick \cite[VII.8.2]{flajolet-sedgewick}).
The idea will then be to reconstruct Tutte's analysis in the setting of linear lambda calculus, to obtain a ``parallel'' decomposition of normal planar lambda terms.
We show how to do this in \Cref{sec:decomp}, using both traditional lambda calculus notation (following the conventions of \Cref{sec:genfun}) as well the string diagrams of \Cref{sec:diagrams}.
Finally, in \Cref{sec:decomp:bij} we explain how to combine these parallel analyses to obtain a size-preserving bijection between rooted planar maps and normal planar lambda terms.

\subsection{Tutte decomposition of rooted planar maps}
\label{sec:tuttemaps}

A \emph{(topological) map} $M$ on a closed, oriented surface $S$~\cite{jones-singerman,landozvonkin} is a partition of $S$ into three finite sets of cells $V$, $E$, and $F$, such that:
 a \emph{vertex} $v\in V$ is a point of $S$,
 an \emph{edge} $e \in E$ is a simple open Jordan arc in $S$ whose extremities are vertices, and
 a \emph{face} $f \in F$ is a connected component of the complement of $V \cup E$ in $S$, homeomorphic to an open disk.

An edge equipped with one of two possible orientations is called a \emph{dart}.
Each dart $d$ has an \emph{opposite dart} $-d$, corresponding to the same edge with the opposite orientation.
The initial vertex (source) of a dart and the face to the left of a
dart are both said to be \emph{incident} to that dart. An \emph{isthmus} (resp.
\emph{loop}) is an edge whose two orientations are incident to the same face
(resp. \emph{vertex}). The \emph{degree} of a vertex or face counts the total
number of darts incident to that vertex or face (so that an edge is counted
twice in the degree of a face if it is an isthmus, and twice in the degree of a
vertex if it is a loop).

A \emph{planar map} is a map on the sphere. Every planar map $M$ has an
underlying graph which is a connected planar graph, possibly with loops and
multiple edges. A degenerate example of a planar map is the \emph{vertex map} --
containing a single vertex, no edges, and a single face -- while any other
planar map must contain at least one edge.
By definition, a \emph{rooting} of a planar map $M$ consists of a choice of a
dart, unless $M$ is the vertex map, in which case it is also considered rooted
by default.
Then, a \emph{rooted planar map} is a planar map equipped with a rooting,
treated up to root-preserving homeomorphism.
Tutte's analysis begins by noting that any rooted planar map $M$ can be
categorized into one of three possible classes:
\begin{enumerate}
\item $M$ is the vertex map.
\item $M$ has an \emph{isthmic} root: deleting the root edge separates the underlying graph into two connected components.
\item $M$ has a \emph{non-isthmic} root: the underlying graph remains connected when the root edge is deleted.
\end{enumerate}
The isthmic and non-isthmic cases are illustrated in \Cref{fig:tuttedecomp},
with root dart $A$ indicated by an arrow.
In these diagrams and more generally, we refer to the face incident to (i.e., to
the left of) $A$ as the \emph{outer face} of $M$ (following the convention that
we always draw rooted planar maps on the page with the ``infinite'' face to the
left of the root).
\begin{figure}
\begin{subfigure}[b]{0.33\textwidth}
\begin{center}
\includegraphics[width=0.9\textwidth]{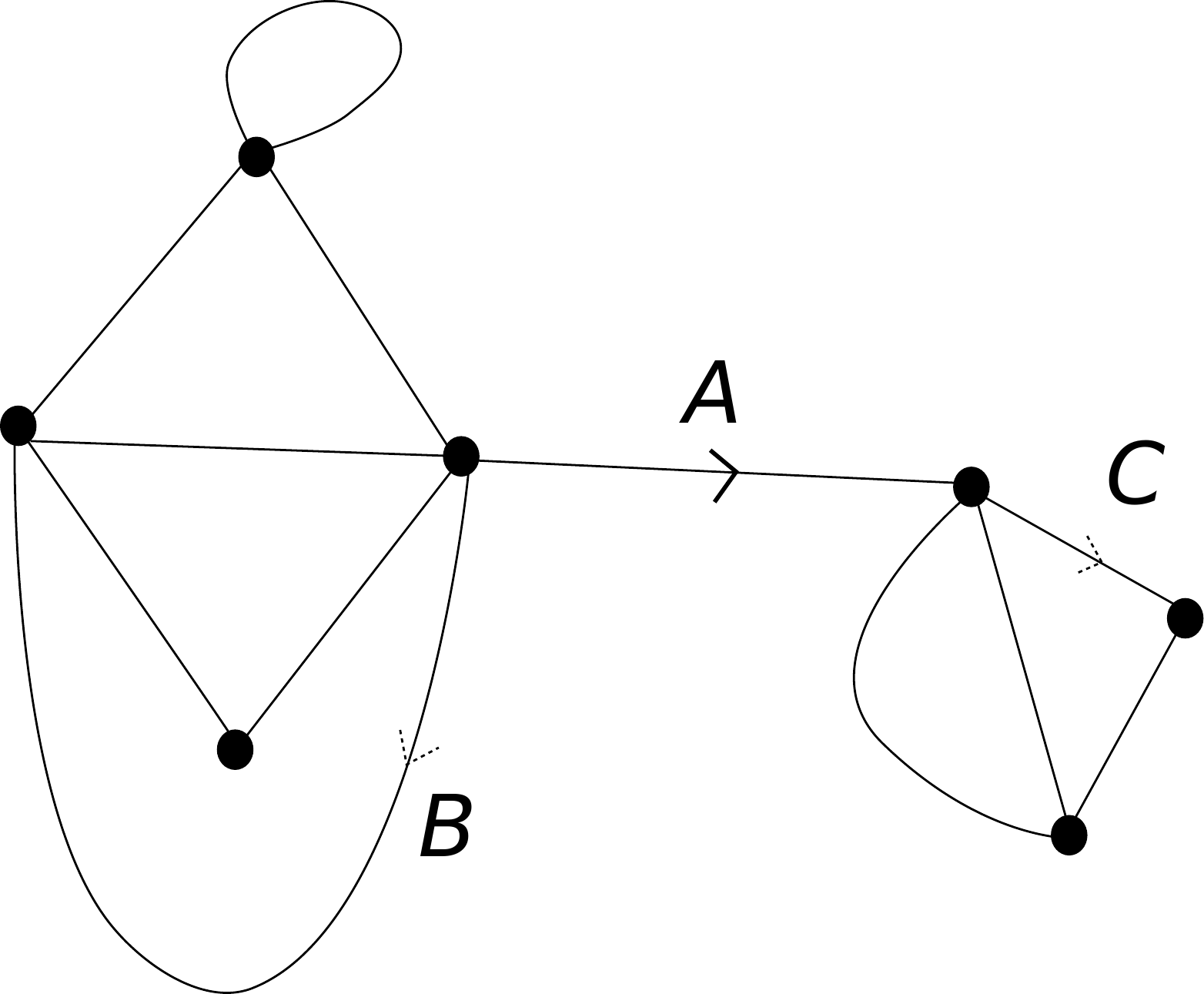}
\end{center}
\caption{Map with isthmic root.}
\label{fig:isthmic}
\end{subfigure}
\begin{subfigure}[b]{0.33\textwidth}
\begin{center}\includegraphics[width=0.9\textwidth]{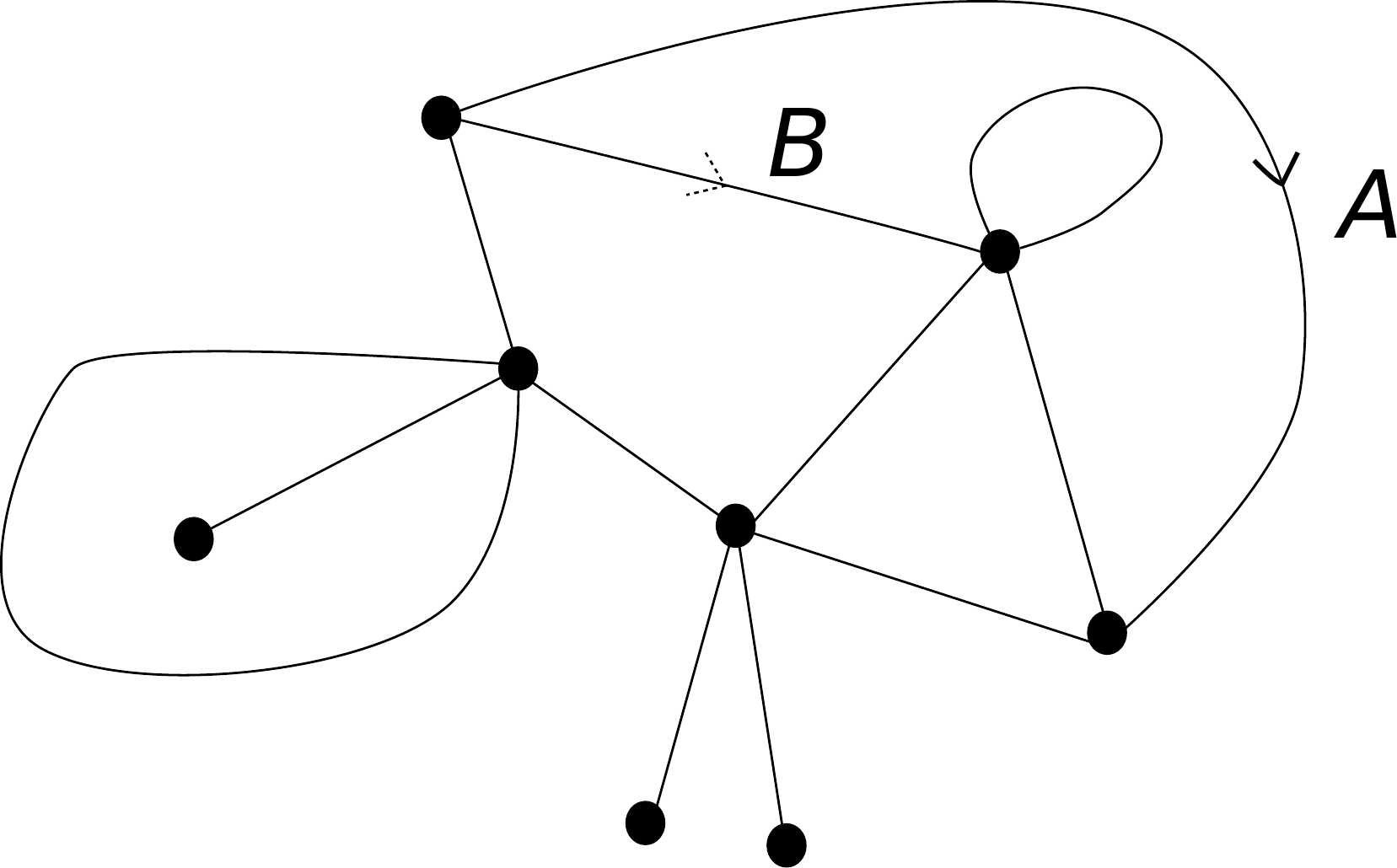}\end{center}
\caption{Map with non-isthmic root.}
\label{fig:nonisthmic1}
\end{subfigure}
\begin{subfigure}[b]{0.32\textwidth}
\begin{center}\includegraphics[width=0.86\textwidth]{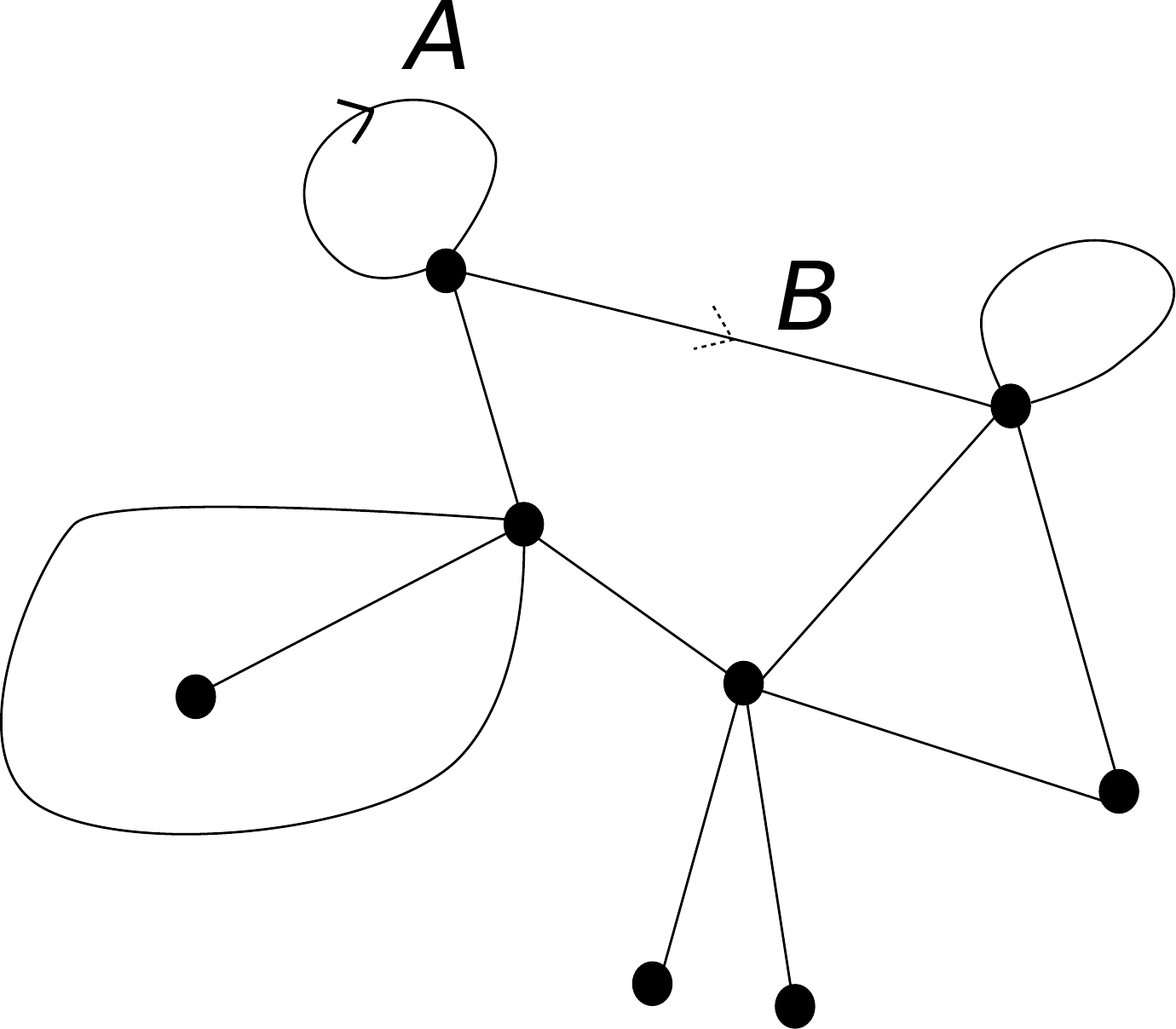}\end{center}
\caption{Map with non-isthmic root.}
\label{fig:nonisthmic2}
\end{subfigure}
\caption{Examples of rooted planar maps (root marked $A$).}
\label{fig:tuttedecomp}
\end{figure}
Tutte's analysis goes on to describe how any map which is not a vertex map may be decomposed in terms of smaller rooted planar maps, by a deterministic procedure:
\begin{enumerate}[label=\({\alph*}]
\item
{\bf Isthmic root.}
Let $M_1$ and $M_2$ be the two planar maps resulting from deleting the isthmic
root $A$. Each of $M_1$ and $M_2$ is either the vertex map, or else may be
rooted by walking along the outer face of $M$ and choosing respectively the dart
immediately following $-A$ and $A$ (marked $B$ and $C$ in \Cref{fig:isthmic}).
\item {\bf Non-isthmic root.}
Let $M_1$ be the planar map resulting from deleting the non-isthmic root $A$.
Again, if it is the vertex map then we are done, and otherwise $M_1$ 
can be rooted by taking the dart (marked $B$ in
\Cref{fig:nonisthmic1}) immediately following $-A$ when walking along its incident face, unless that dart is $-A$ itself, in which case we take the dart  (marked $B$ in \Cref{fig:nonisthmic2}) immediately following $A$ along the outer face.
\end{enumerate}
Note that Tutte \cite{tutte1968} used slightly different conventions for rooting submaps than what we describe here, but for the purpose of counting rooted planar maps (as was Tutte's original application), the precise convention used does not matter so long as it is deterministic.

Now, let us view (a) as an operation $\decomposIsthm$ taking a planar map $M$ with an
isthmic root as input, and decomposing it into a pair of rooted planar maps
$M_1$ and $M_2$.
There is clearly a reverse operation $\composIsthm$, which given any pair of rooted
planar maps $M_1$ and $M_2$ joins them together to create a rooted planar map with an
isthmic root (note that this binary operation is ``anti-commutative'', in the sense that swapping the arguments reverses the orientation of the root dart).
We have
$$
\decomposIsthm(\composIsthm(M_1,M_2)) = (M_1,M_2)
$$
for any pair of rooted planar maps $M_1$ and $M_2$, and conversely
$$
M = \composIsthm(\decomposIsthm(M))
$$
for any planar map $M$ with an isthmic root.
Moreover, we have that the number of edges in $\composIsthm(M_1,M_2)$ is equal
to one plus the sum of the numbers of edges in $M_1$ and $M_2$, and that the
degree of the outer face of $\composIsthm(M_1,M_2)$ is equal to two plus the sum
of the degrees of the outer faces of $M_1$ and $M_2$ (since the outer face is incident to both the root dart and the opposite dart: see \Cref{fig:isthmic}, which
shows a map with outer face degree nine, constructed from two maps with outer
face degrees four and three, respectively).

Similarly, let us view (b) as an operation $\decomposNonIsthm$ taking a rooted
planar map $M$ with a non-isthmic root as input, and producing a rooted planar
map $M_1$ as output. Going in the other direction, there is a \emph{family} of
operations $\composNonIsthm{k}$, which given any rooted planar map $M_1$ with
outer face of degree $\ge k$, constructs a rooted planar map with a non-isthmic
root as follows:
starting at the source vertex $x$ of the root dart, walk backwards (i.e. in the
opposite direction of the root dart) $k$ darts along the outer face of $M_1$
until reaching a vertex $y$, and then add a new edge between $x$ and $y$ with
the new root dart oriented from $x$ to $y$, in such a way that the border of the
new root face is composed in this order of the new root dart and the sequence of
$k$ darts composing the reverse walk.
(For example, the maps in \Cref{fig:nonisthmic1,fig:nonisthmic2} may be
constructed as $\composNonIsthm{8}(M_1)$ and $\composNonIsthm{11}(M_1)$,
respectively, for the same underlying rooted planar map $M_1$ of outer face degree 11.)

 We have that $$\decomposNonIsthm(\composNonIsthm{k}(M_1)) =
M_1$$ for all rooted planar maps $M_1$ and $k$ bounded by the degree of the
outer face of $M_1$, and conversely, that there \emph{exists} a $k$ such that
$$M = \composNonIsthm{k}(\decomposNonIsthm(M))$$ for any rooted planar map $M$ with a non-isthmic root.
Moreover, we have that the number of edges in $\composNonIsthm{k}(M_1)$ is equal to one
plus the number of edges in $M_1$, and that the degree of the outer face of
$\composNonIsthm{k}(M_1)$ is $k+1$.

This combination of observations yields a complete characterization of rooted planar maps, by induction on the number of edges:
\begin{thm}[Tutte \cite{tutte1968}] \label{thm:tutte}
Let $M$ be a rooted planar map with $e(M)$ edges and outer face degree $o(M)$.
Then exactly one of the following cases must hold:
\begin{enumerate}[label=(\roman*)]
\item $M$ is the vertex map and $e(M) = o(M) = 0$.
\item $M = \composIsthm(M_1,M_2)$ for some $M_1$ and $M_2$ such that $e(M) = 1 + e(M_1) + e(M_2)$ and $o(M) = 2 + o(M_1) + o(M_2)$.
\item $M = \composNonIsthm{k}(M_1)$ for some $M_1$ and $0 \le k \le o(M_1)$ such that $e(M) = 1 + e(M_1)$ and $o(M) = k+1$.
\end{enumerate}
\end{thm}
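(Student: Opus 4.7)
The plan is to assemble the theorem from the three ingredients already laid out: the trichotomy of rooted planar maps (vertex map / isthmic root / non-isthmic root), the pair of inverse operations $(\composIsthm, \decomposIsthm)$ for the isthmic case, and the family $(\composNonIsthm{k}, \decomposNonIsthm)$ for the non-isthmic case. Mutual exclusivity and exhaustiveness of cases (i)--(iii) are immediate from the trichotomy: either $M$ has no edges (in which case connectedness forces it to be the vertex map), or the root edge exists and is exactly one of isthmic or non-isthmic.

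I would then verify the arithmetic in each case. For (i), the vertex map has zero edges by definition, and its unique face is incident to no darts, so $e(M) = o(M) = 0$. For (ii), given an isthmic root, I set $(M_1,M_2) \defeq \decomposIsthm(M)$ and invoke $M = \composIsthm(\decomposIsthm(M))$. The edge count is clear since $\composIsthm$ adjoins exactly one new edge (the isthmus) to the disjoint union of underlying edge sets. For the outer face degree, I would argue that because the root edge is an isthmus, both of its orientations are incident to the outer face of $M$; walking along this outer face traverses the root dart, then the outer face of $M_2$, then the opposite root dart, then the outer face of $M_1$, yielding $o(M) = 2 + o(M_1) + o(M_2)$. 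For (iii), given a non-isthmic root, I set $M_1 \defeq \decomposNonIsthm(M)$ and invoke the existence of a $k$ with $M = \composNonIsthm{k}(M_1)$ from the discussion preceding the theorem. The edge count is again immediate, and by construction the outer face of $\composNonIsthm{k}(M_1)$ is bounded by the new root dart together with the $k$ darts of the reverse walk, giving $o(M) = k+1$.

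The step I expect to require the most care is case (ii)'s outer-face-degree computation, and more generally the verification that the rooting conventions for submaps described in paragraphs (a) and (b) are consistent with the stated identities $\decomposIsthm \circ \composIsthm = \id$ and $\decomposNonIsthm \circ \composNonIsthm{k} = \id$ (and the existential converses). These are geometric/topological assertions about how facial walks behave under edge-deletion and edge-addition on the sphere, and they implicitly use that the root dart determines the outer face unambiguously. Once the inversion identities are established, the case analysis above is straightforward, so the bulk of the proof is really an unpacking of the definitions of $\composIsthm, \decomposIsthm, \composNonIsthm{k}, \decomposNonIsthm$ given earlier in \Cref{sec:tuttemaps}, together with the observation that each of these operations interacts with the invariants $e(M)$ and $o(M)$ in the stated linear manner.
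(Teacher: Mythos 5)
Your proposal is correct and follows essentially the same route as the paper, which in fact gives no separate proof of \Cref{thm:tutte}: the theorem is stated as a summary of the immediately preceding observations (the trichotomy, the inversion identities for $\composIsthm/\decomposIsthm$ and $\composNonIsthm{k}/\decomposNonIsthm$, and their effect on edge count and outer face degree). Your unpacking of the outer-face-degree arithmetic in the isthmic case matches the paper's parenthetical justification via the facial walk $A$, $M_2$, $-A$, $M_1$.
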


\subsection{Decomposition of normal planar lambda terms}
\label{sec:decomp}

Here and below, we write ``NLT'' and ``NPT'' as abbreviations for ``normal
linear term'' and ``normal planar term'', respectively, it being implicit that
we always consider lambda terms with exactly one free variable, unless otherwise stated.
(Note that a normal lambda term with an arbitrary number of free variables can
always be seen as one with exactly one free variable, by adding or removing
leading $\lambda$s.)

\subsubsection{A trichotomy on normal linear terms}
\label{sec:decomp:tri}

NLTs (which may or may not be planar) are naturally partitioned into three classes, depending on how the free variable is used.
\begin{definition}\emph{
\label{defn:trichotomy}
We say that $[x]t$ is \definand{the identity term} if $t = x$, that it is \definand{function-open} if $x(u)$ is a subterm of $t$ (for some $u$), and that it is \definand{value-open} if $u(x)$ is a subterm of $t$ (for some $u$).}
\end{definition}
\begin{proposition}\label{prop:trichotomy}
Every NLT is either the identity term, function-open, or value-open (mutually exclusively).
\end{proposition}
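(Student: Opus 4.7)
The plan is to argue directly from linearity, without even invoking normality. The starting observation is that in any linear lambda term $[x]t$ with a single free variable, $x$ must occur exactly once in $t$. This is immediate from the inference rules of \Cref{defn:linear}: the $V$-rule introduces a single occurrence of a single variable, the $A$-rule takes a disjoint concatenation of variable contexts, and the $L$-rule discharges one variable and leaves the others untouched. So free variables of a linear lambda term are never duplicated or erased.

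Given this, I would locate the unique occurrence of $x$ in the underlying lambda skeleton of $t$ and inspect the immediate parent of that leaf. There are three exhaustive possibilities: (i) the occurrence has no parent, so $t$ is just the leaf itself and we are in the identity case $t = x$; (ii) its parent is an $A$-node of which $x$ is the left child, in which case $x(u)$ is a subterm of $t$ where $u$ is the right sibling, giving function-open; or (iii) its parent is an $A$-node of which $x$ is the right child, in which case $u(x)$ is a subterm, giving value-open. The only remaining logical possibility — that the parent is an $L$-node — is ruled out because $x$ would then be bound in $t$ rather than free, contradicting the assumption that $x$ is the free variable of $[x]t$.

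For mutual exclusivity, I would again appeal to linearity: in the identity case $t$ is a single leaf and contains no application subterm at all, so cases (ii) and (iii) cannot hold; and cases (ii) and (iii) cannot hold simultaneously because each would require its own occurrence of $x$, of which there is only one. I do not anticipate any real obstacle; the only care required is to pin down the ``unique occurrence of $x$'' step. This can be done either informally, by referring to a position in the skeleton, or via a short structural induction on the derivation of $[x]t \in \LLin$ — in which case the statement should be mildly strengthened to arbitrary $[\Gamma]t \in \LLin$ together with a distinguished $x \in \Gamma$, so that the $L$-rule case (where we descend under a $\lambda$ and temporarily enlarge the context) goes through cleanly; the $T$-rule case is trivial since transposition does not affect the occurrence of $x$ inside $t$.
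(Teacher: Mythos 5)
Your route is genuinely different from the paper's, and in essence it works. The paper proves the trichotomy by a structural induction on the term, generalized to an arbitrary context $[\Gamma]t$ (exactly the strengthening you anticipate at the end of your sketch), whereas you argue non-inductively: linearity gives a unique occurrence of the free variable $x$, and the three branches of the trichotomy are read off from the position of that occurrence relative to its parent node in the skeleton (no parent, left child of an application, right child of an application). This is arguably more illuminating than the paper's one-line induction, since it makes visible why the classification is exhaustive and mutually exclusive: a single leaf has a single parent, and one occurrence cannot be both the function and the argument of an application. Your treatment of exclusivity is fine.

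The one step that is wrong as stated is the elimination of the fourth case, where the parent of the occurrence of $x$ is an $L$-node. You claim this is impossible ``because $x$ would then be bound in $t$ rather than free,'' but that implication is false in general: in $[x]\lambda y.x$ the occurrence of $x$ sits immediately under a $\lambda$ and $x$ is still free. This is precisely the counterexample the paper's proof singles out as the case that linearity must rule out. The correct argument constrains the \emph{bound} variable of that $L$-node, not the free one: by the $L$-rule of \Cref{defn:linear}, the body of $\lambda y.(\,\cdot\,)$ must use $y$ exactly once, and if that body is the single leaf occupied by $x$, the leaf must be decorated by $y$ itself, forcing $y = x$ and contradicting the freeness of $x$. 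So your conclusion stands, but only via an appeal to the linear use of abstracted variables that your write-up does not actually make at this point; as written, the same justification would ``prove'' the trichotomy for pure lambda terms, where it fails. With that one repair the proof is complete.
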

\begin{proof}
Immediate by induction, after generalizing the induction hypothesis to consider linear lambda terms with an arbitrary number of free variables.
Observe that in pure lambda calculus we have terms such as $[x]\lambda y.x$ which fail to fall into any of these classes, or terms such as $[x]\lambda y.y(xx)$ which classify as both function-open and value-open, but all such counterexamples are ruled out by the requirement that every variable is used exactly once.
\end{proof}
It is informative to restate this classification in terms of linear reflexive pairs and string diagrams.
\begin{proposition}\label{prop:classmorph}
Let $[x]t$ be a NLT, let $\enc{\pi} : B \to R$ be its interpretation in an smcc with a linear reflexive pair, and let $\pi^d$ be its corresponding string diagram.
\begin{enumerate}
\item 
If $[x]t$ is the identity term then $\enc{\pi} = s$ and $\pi^d = \imgcenter{\includegraphics[scale=0.75]{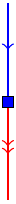}}$ .
\item
If $[x]t$ is function-open then $\enc{\pi} = (a;f)$ for some $f : \resR[B]{R} \to R$ and $\pi^d = \imgcenter{\includegraphics[scale=0.75]{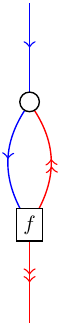}}$ .
\item
If $[x]t$ is value-open then $\enc{\pi} = (s;\lc{a;\plugR};f)$ for some $f : \resL[B]{B} \to R$ and $\pi^d = \imgcenter{\includegraphics[scale=0.75]{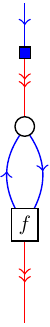}}$ .
\end{enumerate}
\end{proposition}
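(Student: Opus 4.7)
My plan is to invoke \Cref{prop:trichotomy} for the trichotomy itself, and then to verify in each of the three cases the claimed form of $\enc{\pi}$ and $\pi^d$ by tracing the wire emanating from the free variable $x$ through the string diagram. Linearity is crucial here: since $x$ is used exactly once in $t$, the incoming wire labeled $x$ enters exactly one node of $\pi^d$.

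Case 1 is essentially immediate. If $t = x$, the $R$-coloring of $t$'s skeleton must consist of an $s$-rule applied to a $v$-rule, and unfolding the inductive clauses defining $\enc{-}$ and $\pi^d$ from \Cref{sec:diagrams:scott,sec:diagrams:colored} gives $\enc{\pi} = \id_B; s = s$ together with the single-$s$-node diagram.

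For Case 2 (function-open), the wire from $x$ enters the $a$-node corresponding to the subterm $x(u)$ as its left (blue) input. The rest of the diagram --- supplying the red input of this $a$-node with the interpretation of $u$, and carrying the outgoing blue wire through the remaining context up to the red output --- can be repackaged as a single morphism $f : \resR[B]{R} \to R$ by bending the $a$-node's argument wire into an outgoing dual wire using the compact-closed structure. Composing $a : B \to \resR[B]{R}$ with $f$ yields the claimed factorization $\enc{\pi} = a; f$.

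Case 3 (value-open) proceeds by the same kind of diagram analysis, with two adjustments: since $x$ is used as the right input of an $a$-node --- which must be red --- the $x$-wire must first pass through an $s$-node, contributing the initial factor $s$; and since the function input of that $a$-node is supplied from elsewhere in the diagram, we repackage the $a$-node together with the incoming argument as a curried map $\lc{a;\plugR} : R \to \resL[B]{B}$, using the closed-structure adjunction. The rest of the diagram (providing the blue function input and continuing upward to the red output) then becomes a morphism $f : \resL[B]{B} \to R$, giving the decomposition $\enc{\pi} = s; \lc{a;\plugR}; f$. The main obstacle here is the bookkeeping in this third case, where the raw categorical manipulation of currying and evaluation maps is considerably less perspicuous than the visual picture; the chief advantage of the string-diagrammatic presentation is that all three factorizations can be read off directly from where the $x$-wire first enters a node in $\pi^d$.
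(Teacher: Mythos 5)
The paper states this proposition without any proof, treating it as an immediate consequence of \Cref{prop:trichotomy} and the inductive definition of the interpretation, so your wire-tracing argument (locate the unique occurrence of $x$, observe that the first node its wire meets is nothing, an $a$-node in function position, or an $s$-node feeding an argument position, and package the remainder as $f$) is a correct filling-in of exactly the intended justification. The only caveat is that the factorizations are asserted in an arbitrary smcc, so the morphism $f$ in cases (2) and (3) should be constructed using only currying, evaluation, and naturality (which a routine induction on the coloring provides) rather than by literally bending wires in the compact-closed completion --- but the paper itself elides this same distinction when passing to the diagrams $\pi^d$.
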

\Cref{prop:trichotomy,prop:classmorph} apply to arbitrary NLTs, and hence in particular to NPTs.
For example, consider once again the diagrams of the first three NPTs:
\begin{center}
\includegraphics{\DIAGRAMS/lam_1_0.pdf}
\qquad\qquad
\includegraphics{\DIAGRAMS/lam_2_0.pdf}
\qquad\qquad
\includegraphics{\DIAGRAMS/lam_2_1.pdf}
\end{center}
Here we can see that the leftmost diagram corresponds to the identity term, the middle diagram to a function-open term, and the rightmost diagram to a value-open term.
Reading across the rows of the bottom half of \Cref{fig:lam<=3}, we can quickly check that among the nine NPTs of size = 3, the first four are function-open and the next five are value-open.

We now consider how to further decompose the function-open and value-open classes, in the case where the NLT is planar.

\subsubsection{The planar function-open case}
\label{sec:decomp:fn}

\begin{proposition}\label{prop:funopen}
  Let $[x]t$ be a NPT, let $\sem{\pi} : B \to R$ be its interpretation in an smcc with a linear reflexive pair, and let $\pi^d$ be its corresponding string diagram.
If $[x]t$ is function-open, then $\enc{\pi} = (a;(\id_{{\resR[B]{R}}}\mul (\lc{h};\ell));\plugR;g)$ for some $g : B \to R$ and $h : B \to R$, and $\pi^d = \imgcenter{\includegraphics[scale=1]{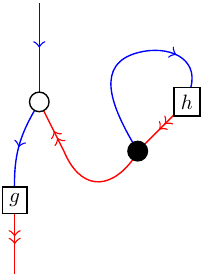}}$ .
\end{proposition}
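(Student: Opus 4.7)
The plan is to decompose $t$ around the unique subterm $x(u)$ guaranteed by the function-open hypothesis. Since $x$ occurs exactly once in $t$ by linearity, this subterm is unique. I first analyze $u$ via the stack discipline of \Cref{skeleton2planar:fig}: the initial stack for $[x]t$ is $[x]$, so when the $V$-leaf for the head $x$ of $x(u)$ is reached during the traversal, $x$ must sit on top of the stack in order to be popped. Since $x$ was initially at the bottom, and pushes/pops only affect the top, the stack at this moment is \emph{exactly} $[x]$, and becomes empty after $x$ is popped. The subtraversal of $u$ therefore begins with an empty stack, which forces $u$ to be closed: every $V$-leaf of $u$ must pop some variable, and the only variables available are those pushed by $u$'s own $\lambda$-binders.

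Being a closed normal form, $u$ cannot be a variable nor a neutral application (both of which require a free variable), so $u = \lambda z. v$ for some normal planar $v$ with single free variable $z$; set $h^* \defeq [z]v \in \LPla$, a NPT, and let $h \defeq \enc{h^*}$. Next, form $t'$ by replacing the subterm $x(u)$ in $t$ with a fresh variable $w$, and let $g^* \defeq [w]t'$. I claim $g^*$ is itself a NPT. Normality is preserved because the $R$-coloring of $t$ treats $x(u)$ as a neutral subterm (an $a$-rule above a $v$-rule for $x$ and an $R$-coloring subderivation for $u$), and this subderivation can be replaced by the single $v$-rule $w \in \SNeu(1)$ without disturbing the rest of the coloring. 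Planarity is preserved because the subtree for $x(u)$ has stack effect $[x] \to []$ (popping $x$ and then traversing the closed $u$), which up to renaming matches the effect $[w] \to []$ of the single leaf $w$; let $g \defeq \enc{g^*}$.

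Finally, verify the categorical decomposition by tracking the inductive definition of $\enc{\pi}$ at the $a$-node corresponding to $x(u)$. This $a$-node takes the variable $x$ on its left (contributing $\id_B$) and the closed normal form $\lambda z. v$ on its right (contributing $\lc{h};\ell : I \to R$), and combines them via the interpretation of the $a$-rule to yield $a;(\id_{\resR[B]{R}} \mul (\lc{h};\ell));\plugR : B \to B$. The ``context above'' $x(u)$ in $t$ is then interpreted by $g = \enc{[w]t'} : B \to R$, and composing these gives the stated formula, with the associated string diagram following directly by the graphical interpretation of each component. The main obstacle is the claim that $[w]t'$ is a NPT, which relies crucially on the uniqueness of $x$'s occurrence (from linearity) and the closedness of $u$ (from planarity at $x$'s use-position); once these are in hand, the decomposition follows mechanically from the inductive semantics of \Cref{sec:diagrams:scott}.
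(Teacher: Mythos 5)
Your proof is correct and takes essentially the same route as the paper's, which likewise observes that since $x$ is the \emph{only} free variable, planarity forces the argument of $x$ to be a closed NPT and hence a lambda abstraction (the paper delegates the outer factorization to \Cref{prop:classmorph}(2)). Your version simply makes explicit the stack-effect argument and the check that the context $[w]t'$ is itself an NPT, details the paper leaves implicit.
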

\begin{proof}
We get half of the factorization by \Cref{prop:classmorph}(2).
For the second half, we reason that since $x$ is the \emph{only} free variable in $t$, the constraint of planarity forces the argument of $x$ to be a \emph{closed} NPT, hence of the form $\lambda y.u$ for some $u$ with one free variable $y$.
\end{proof}
\Cref{prop:funopen} suggests a natural way of performing surgery on the diagram of a function-open NPT, discarding a small piece of $\pi^d$ to obtain a pair of diagrams with the same interface:
\begin{equation}\tag{FO}\label{eqn:FO}
\imgcenter{\includegraphics{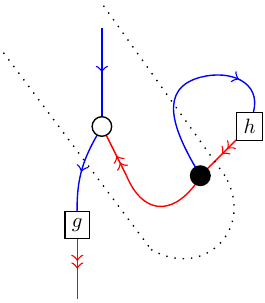}}
\qquad\fundecomp\qquad
\imgcenter{\includegraphics{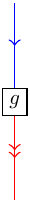}}
\quad+\quad
\imgcenter{\includegraphics{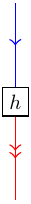}}
\end{equation}
Moreover, this operation is clearly reversible: given any pair of diagrams with one incoming blue wire and one outgoing red wire, we can join them together to obtain a diagram of the original shape:
\newcommand\mkfun{\mathbin{\overset{+}{\imgcenter{\includegraphics[width=4em]{\DIAGRAMS/brest.pdf}}}}}
$$
\imgcenter{\includegraphics{\DIAGRAMS/g.pdf}}
\qquad\mkfun\qquad
\imgcenter{\includegraphics{\DIAGRAMS/h.pdf}}
\qquad=\qquad
\imgcenter{\includegraphics{\DIAGRAMS/isthmic.pdf}}
$$
Finally, if we add annotations to the diagrams,
$$
\imgcenter{\includegraphics{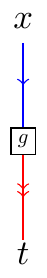}}
\qquad\mkfun\qquad
\imgcenter{\includegraphics{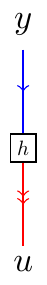}}
\qquad=
\imgcenter{\includegraphics{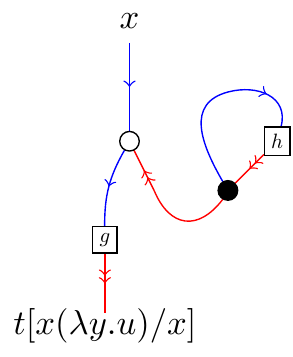}}
$$
then we can verify that this operation is easily implemented on NPTs: given a pair of NPTs $[x]t$ and $[y]u$, we produce a new function-open NPT by replacing $x$ with $x(\lambda y.u)$ in $t$.
For example, combining $[x]x(\lambda y.y)$ and $[x]\lambda y.yx$ in either order yields the following two function-open NPTs (after some renaming of variables):
$$
\imgcenter{\includegraphics[scale=0.9]{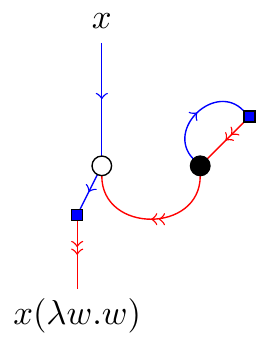}}
\qquad\mkfun\qquad
\imgcenter{\includegraphics[scale=0.9]{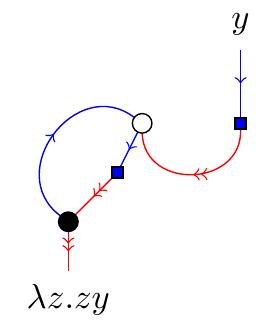}}
\quad=\hspace*{-3em}
\imgcenter{\includegraphics[scale=0.9]{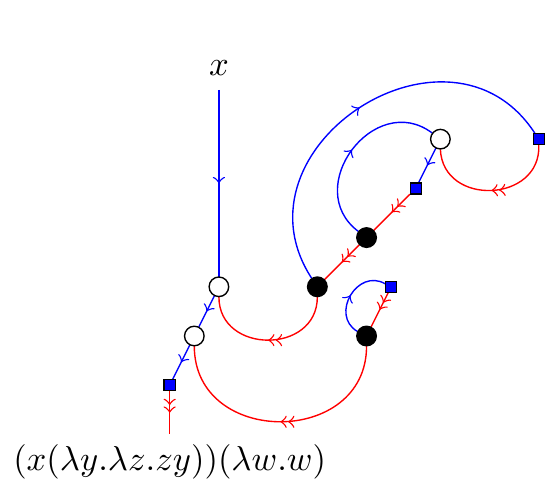}}
$$

$$
\imgcenter{\includegraphics[scale=0.9]{\DIAGRAMS/lam_2_1.pdf}}
\qquad\mkfun\quad
\imgcenter{\includegraphics[scale=0.9]{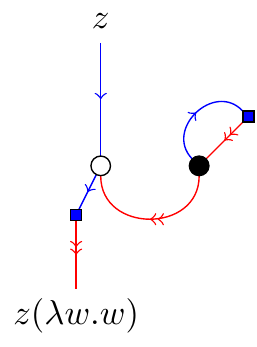}}
\quad=\hspace*{-2em}
\imgcenter{\includegraphics[scale=0.9]{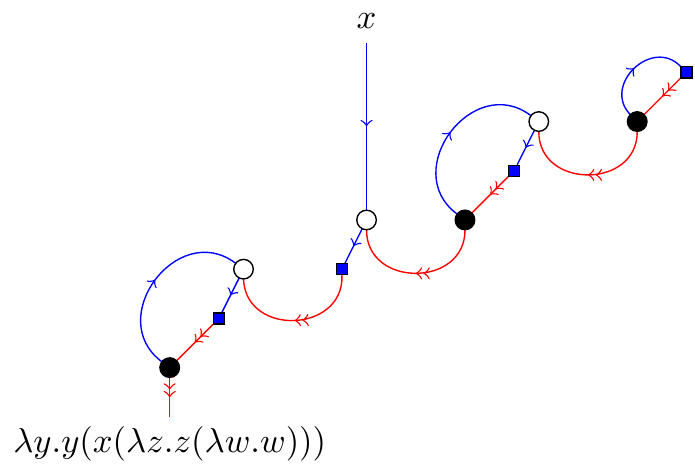}}
$$


\subsubsection{The planar value-open case}
\label{sec:decomp:arg}

We begin with an easy observation that holds in the general value-open case.
\begin{proposition}\label{prop:valopen:lambda}
If $[x]t$ is a value-open NLT then it must begin with a lambda abstraction, i.e., there exists a normal linear term (with two free variables) $[y_1,x]t'$ such that $t = \lambda y_1.t'$.
\end{proposition}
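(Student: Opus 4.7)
The plan is to do a case analysis on the last rule of the $R$-coloring of the skeleton of $t$ (via \Cref{defn:coloring}): it must end in either the $s$-rule, in which case $t$ is also neutral (has a $B$-coloring), or the $\ell$-rule, in which case $t$ is literally of the form $\lambda y_1.t'$ for some $t'$ with an $R$-coloring. I want to rule out the first case by showing that a value-open NLT cannot be neutral.

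So suppose for contradiction that $t$ is neutral. By \Cref{defn:headvar}, $t$ has a well-defined head variable, which must lie in the ambient context; since the only free variable of $[x]t$ is $x$, the head variable is $x$. Walking down from the head through the $B$-coloring, we see that each intermediate rule along this path is an $a$-rule (we always follow the left premise), and the root is reached by passing through zero or more $a$-rules before meeting the $s$/$\ell$-free top. Translating this back to terms, $t$ has the form $x(u_1)(u_2)\cdots(u_n)$ for some $n \geq 0$ and some normal linear terms $u_1,\dots,u_n$. If $n = 0$, then $t = x$, so $[x]t$ is the identity term; if $n \geq 1$, then $x(u_1)$ is a subterm of $t$, so $[x]t$ is function-open. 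By the mutual exclusivity asserted in \Cref{prop:trichotomy}, neither conclusion is compatible with $[x]t$ being value-open, a contradiction.

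Therefore the $R$-coloring of $t$'s skeleton ends in the $\ell$-rule, so $t = \lambda y_1.t'$ for some pseudo lambda term $t'$ whose skeleton admits an $R$-coloring. By reading the $L$-rule of \Cref{defn:linear} backwards, $[x]t \in \LLin$ forces $[y_1,x]t' \in \LLin$, and $t'$ is normal because its skeleton is $R$-colored; this produces the desired $[y_1,x]t'$.

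The only mildly delicate step is the ``walk down from the head'' argument converting a $B$-coloring with head variable $x$ into a syntactic description of $t$ as $x(u_1)\cdots(u_n)$. This is a straightforward induction following \Cref{defn:headvar}, and is essentially the standard head-normal-form observation for neutral terms; I would state it as a short lemma (or inline induction) and then the rest of the argument is immediate.
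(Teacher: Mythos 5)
Your proof is correct and rests on the same key fact as the paper's: the head variable of a neutral term lies in its context, so a neutral term whose only free variable is $x$ must be either $x$ itself or function-open, which is incompatible with being value-open. The paper packages this slightly differently (it passes to the neutral body of $t$ in one step and argues that the number of leading abstractions is positive), but the argument is essentially the same.
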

\begin{proof}
Let $[\Gamma]u$ be the neutral body of $[x]t$, in the sense of \Cref{defn:headvar}.
By construction, $\Gamma$ must be of the form $\Gamma = y_i,\dots,y_1,x$ for some $i \ge 0$.
But since $[x]t$ is value-open, $x$ cannot be applied in $u$, and hence the head variable of $u$ (again in the sense of \Cref{defn:headvar}) is necessarily distinct from $x$.
This implies that $i>0$, and the proposition follows.
\end{proof}
Combining \Cref{prop:valopen:lambda} with \Cref{prop:classmorph}, we obtain the following characterization of value-open NLTs.
\begin{proposition}\label{prop:argopen}
Let $[x]t$ be a NLT, let $\sem{\pi} : B \to R$ be its interpretation in an smcc with a linear reflexive pair, and let $\pi^d$ be its corresponding string diagram.
If $[x]t$ is value-open, then $\enc{\pi} = (s;\lc{a;\plugR};g;\ell)$ for some $g : \resL[B]{B} \to \resL[R]{B}$, and $\pi^d = \imgcenter{\includegraphics[scale=0.8]{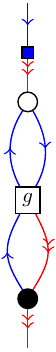}}$ .
\end{proposition}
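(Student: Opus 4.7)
The approach is to combine \Cref{prop:classmorph}(3) with \Cref{prop:valopen:lambda}, in direct analogy with the proof of \Cref{prop:funopen} (which combined \Cref{prop:classmorph}(2) with a planarity argument to pin down the second half of the factorization). Here \Cref{prop:classmorph}(3) already provides the front half: since $[x]t$ is value-open, we have $\enc{\pi} = s;\lc{a;\plugR};f$ for some $f : \resL[B]{B} \to R$, with the free variable $x$ entering the argument position of an $a$-node in the string diagram.

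For the back half, I would invoke \Cref{prop:valopen:lambda} to write $t = \lambda y_1.t'$ for some normal linear term $[y_1,x]t'$. By the inductive definition of the semantic interpretation (the $L$-rule case in \Cref{sec:diagrams:scott}), the outermost lambda forces $\enc{\pi} = \lc[B]{\enc{\pi'}};\ell$, where $\pi'$ is the derivation of $[y_1,x]t'$. Diagrammatically, the outgoing red wire of $\pi^d$ emerges directly from an $\ell$-node placed by this outermost lambda.

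Combining the two observations, the $a$-node witnessing the value-open structure lies strictly inside the scope of the outermost $\ell$-node---because the subterm $u(x)$ appears inside $t'$---so the two components are distinct and sit at opposite ends of the diagram. Consequently we may refine $f = g;\ell$ for some $g : \resL[B]{B} \to \resL[R]{B}$ representing the middle portion of the diagram sandwiched between these two nodes. This yields the claimed factorization $\enc{\pi} = s;\lc{a;\plugR};g;\ell$ along with the corresponding string diagram.

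The only subtle point is reconciling the two factorizations of $\enc{\pi}$, namely showing that $f$ can be chosen to end with exactly the $\ell$ introduced by the outermost lambda. But this is transparent from the inductive build-up of $\pi^d$: the front-end $a$-node and the back-end $\ell$-node are placed by separate, non-overlapping applications of the semantic rules, and $g$ simply names whatever morphism is left in the middle. Everything else in the argument is routine bookkeeping.
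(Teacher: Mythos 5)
Your proposal is correct and takes essentially the same route as the paper: the paper offers no separate proof, simply introducing the proposition with the sentence ``Combining \Cref{prop:valopen:lambda} with \Cref{prop:classmorph}, we obtain the following characterization of value-open NLTs,'' which is precisely the argument you spell out. Your elaboration of the reconciliation of the two factorizations (peeling the outermost $\ell$ off the tail of $f$ to get $f = g;\ell$) is exactly the bookkeeping the paper leaves implicit.
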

Based on this knowledge, here is how we can perform surgery on the diagram of a any value-open NLT to obtain a new diagram with one incoming blue wire and one outgoing red wire:
\begin{equation}\tag{VO}\label{eqn:VO}
\imgcenter{\includegraphics{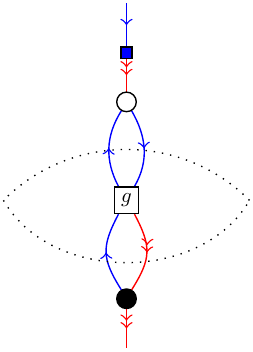}}
\qquad\valdecomp\qquad
\imgcenter{\includegraphics{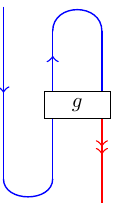}}
\end{equation}
In words, the surgery consists of first removing an $s$-node, an $a$-node, and an $\ell$-node to leave a diagram with four dangling wires, then splicing the two blue wires at the top together, and finally wrapping the blue wire at the bottom back up to the top.
In terms of the categorical semantics, surgery (\ref{eqn:VO}) corresponds to extracting the morphism $g : \resL[B]{B} \to \resL[R]{B}$ given by \Cref{prop:argopen}, pre-composing it with the currying of the identity morphism on $B$,
$$
\xymatrix{I \ar[r]^-{\lc{\id_B}} & \resL[B]{B} \ar[r]^g & \resL[R]{B}}
$$
and then uncurrying to obtain a morphism $B \to R$.
To borrow terminology from the theory of programming languages, this combined operation can be described as ``plugging $g$ with the identity continuation'', and we will notate it below by $\plug{g}{\id_B}$.
Finally, in terms of \Cref{prop:valopen:lambda}, the surgery has the effect of simply removing the outermost ``$\lambda y_1$'' and the application to $x$.

Although the (\ref{eqn:VO}) surgery works for any NLT, it is clear that it preserves the planarity of the original diagram.
For example, here is a demonstration of surgery on the diagram of a value-open NPT of size 6 (yielding a function-open NPT of size 5):
$$
\imgcenter{\includegraphics[scale=0.75]{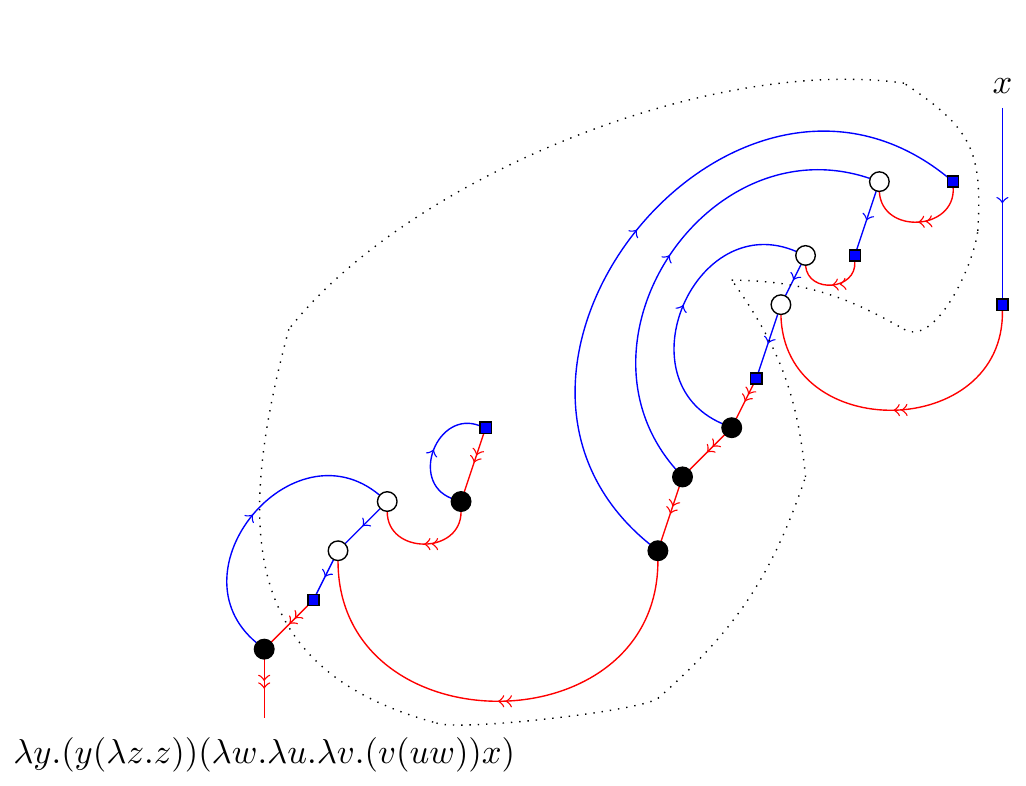}}
\qquad\overset{\leadsto}{\imgcenter{\includegraphics[height=3em]{\DIAGRAMS/urest2.pdf}}}\hspace*{-2em}
\imgcenter{\includegraphics[scale=0.75]{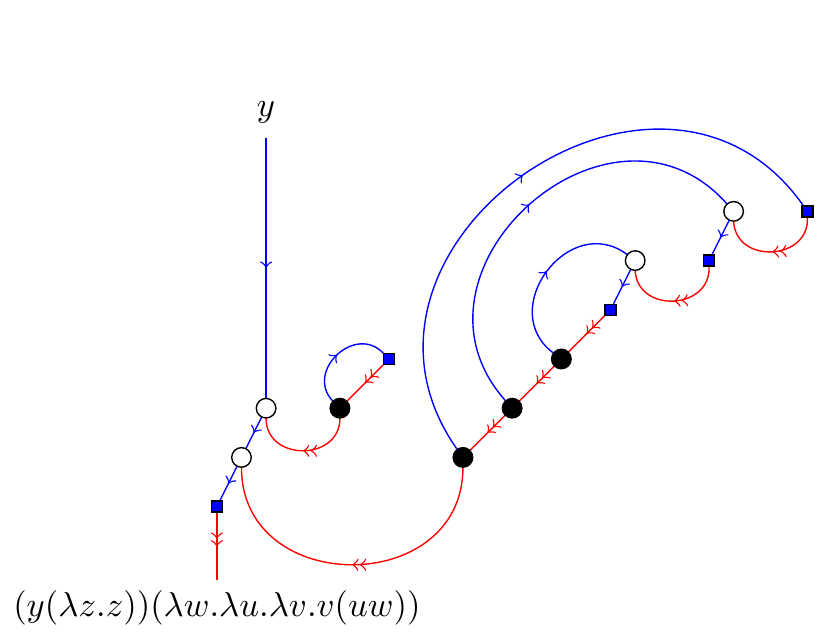}}
$$
In general, the transformation (\ref{eqn:VO}) is \emph{not} reversible: given a diagram with one incoming blue wire and one outgoing red wire representing a morphism $h : B \to R$,
in order to invert the surgery we have to \emph{choose} a particular factorization $h = \plug{g}{\id_B}$ of $h$ as a morphism $g : \resL[B]{B} \to \resL[R]{B}$ plugged with the identity continuation:
$$
\imgcenter{\includegraphics{\DIAGRAMS/h.pdf}} \quad
= \quad \imgcenter{\includegraphics{\DIAGRAMS/gp.pdf}}
$$
Topologically, such a factorization can be seen as grabbing a blue wire somewhere inside the diagram of $h$ as a ``handle'', and pulling it to the outside.
However, in order to produce a diagram representing a normal planar term, we do not want to consider \emph{arbitrary} factorizations $h = \plug{g}{\id_B}$, but only those factorizations in which the diagram of $g$ is also planar, so that the result of inverting (\ref{eqn:VO}) will be another NPT.
A bit of geometric reasoning convinces us that such factorizations should correspond precisely to \emph{blue wires incident to the outer region of the diagram}, where (by analogy to maps) we say that an oriented wire is incident to a region if it has that region to the left, and by ``outer region'' we mean the open half-plane incident to the incoming and outgoing wires.
We shall refer to such blue wires as the \definand{outer neutral handles} of the NPT, deferring a more formal description to \Cref{defn:valence} below.

Consider again the function-open NPT which resulted from value-open surgery above:
$$\imgcenter{\includegraphics[scale=0.8]{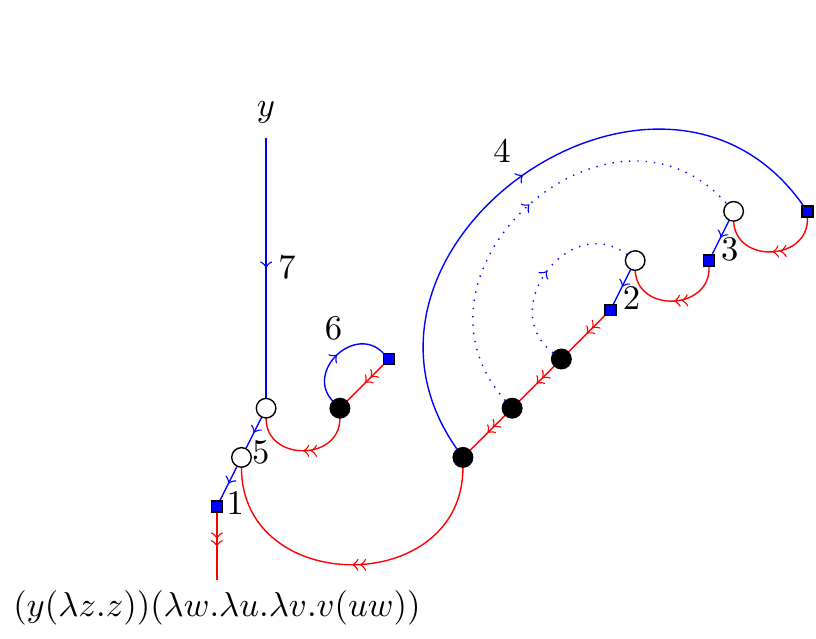}}$$
Here we have numbered all of the outer neutral handles (while dotting out the remaining blue wires), starting from the bottom of the diagram and walking backwards (i.e., counterclockwise) along the outer region.
Each outer neutral handle corresponds to a way of factoring the term by ``focusing'' on a neutral subterm:
\begin{enumerate}
\item $[y]\focus{(y(\lambda z.z))(\lambda w.\lambda u.\lambda v.v(uw))}$
\item $[y](y(\lambda z.z))(\lambda w.\lambda u.\lambda v.\focus{v(uw)})$
\item $[y](y(\lambda z.z))(\lambda w.\lambda u.\lambda v.v(\focus{uw}))$
\item $[y](y(\lambda z.z))(\lambda w.\lambda u.\lambda v.v(u\focus{w}))$
\item $[y](\focus{y(\lambda z.z)})(\lambda w.\lambda u.\lambda v.v(uw))$
\item $[y](y(\lambda z.\focus{z}))(\lambda w.\lambda u.\lambda v.v(uw))$
\item $[y](\focus{y}(\lambda z.z))(\lambda w.\lambda u.\lambda v.v(uw))$
\end{enumerate}
Performing an inverse (\ref{eqn:VO}) operation while focused on any of these subterms yields a different value-open NPT:
\begin{flalign*}
\quad\ \ \ 
t_1 &= [x]\lambda y.((y(\lambda z.z))(\lambda w.\lambda u.\lambda v.v(uw)))x & \\
t_2 &= [x]\lambda y.(y(\lambda z.z))(\lambda w.\lambda u.\lambda v.(v(uw))x) & \\
t_3 &= [x]\lambda y.(y(\lambda z.z))(\lambda w.\lambda u.\lambda v.v((uw)x)) & \\
t_4 &= [x]\lambda y.(y(\lambda z.z))(\lambda w.\lambda u.\lambda v.v(u(w x))) & \\
t_5 &= [x]\lambda y.((y(\lambda z.z))x)(\lambda w.\lambda u.\lambda v.v(uw)) & \\
t_6 &= [x]\lambda y.(y(\lambda z.zx))(\lambda w.\lambda u.\lambda v.v(uw)) & \\
t_7 &= [x]\lambda y.((y x)(\lambda z.z))(\lambda w.\lambda u.\lambda v.v(uw)) &
\end{flalign*}
In turn, performing (\ref{eqn:VO}) on any of the $t_k$ (i.e., erasing $x$ and removing the leading $\lambda y$) yields back the original function-open NPT.
Observe that certain ways of focusing on a neutral subterm are excluded because they do not correspond to \emph{outer} neutral handles.
For example, attempting to perform an inverse (\ref{eqn:VO}) operation starting from the factorization
$$[y](y(\lambda z.z))(\lambda w.\lambda u.\lambda v.v(\focus{u}w))$$
would indeed result in a value-open NLT,
$$[y] (y(\lambda z.z))(\lambda w.\lambda u.\lambda v.v(\focus{u}w))
\quad\leadsto\quad
[x]\lambda y.(y(\lambda z.z))(\lambda w.\lambda u.\lambda v.v((ux)w))
$$
but one which is not planar.
In \Cref{fig:valences} we give another example of a NPT with outer neutral handles indicated, as well as the associated value-open NPTs that arise by inverting (\ref{eqn:VO}).

\begin{figure}
\begin{minipage}{0.45\textwidth}
\begin{center}\includegraphics[scale=0.7]{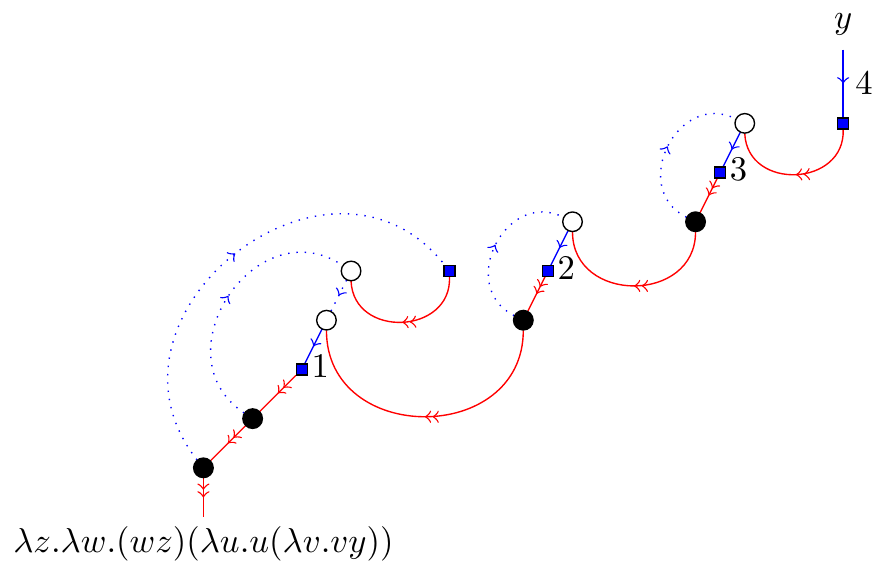}\end{center}
\end{minipage}
\vrule
\quad\begin{minipage}{0.45\textwidth}
\begin{enumerate}\small
\item $[y]\lambda z.\lambda w.\focus{(wz)(\lambda u.u(\lambda v.vy))}$
\item $[y]\lambda z.\lambda w.(wz)(\lambda u.\focus{u(\lambda v.vy)})$
\item $[y]\lambda z.\lambda w.(wz)(\lambda u.u(\lambda v.\focus{vy}))$
\item $[y]\lambda z.\lambda w.(wz)(\lambda u.u(\lambda v.v\focus{y}))$
\end{enumerate}
\begin{flalign*}
\ t_1 &=  [x]\lambda y.\lambda z.\lambda w.((wz)(\lambda u.u(\lambda v.vy)))x & \\
t_2 &=  [x]\lambda y.\lambda z.\lambda w.(wz)(\lambda u.(u(\lambda v.vy))x)& \\
t_3 &=  [x]\lambda y.\lambda z.\lambda w.(wz)(\lambda u.u(\lambda v.(vy)x))& \\
t_4 &=  [x]\lambda y.\lambda z.\lambda w.(wz)(\lambda u.u(\lambda v.v(yx)))&
\end{flalign*}
\end{minipage}
\caption{Diagram of a NPT with outer neutral handles indicated and numbered. On the right, we show the associated factorizations of the NPT into a neutral subterm and its surrounding context, as well as the corresponding value-open NPTs which result from inverting the transformation (\ref{eqn:VO}).}
\label{fig:valences}
\end{figure}%

With that by way of geometric intuition, we can now give a formal specification of the outer neutral handles of a NPT, defining these by induction for any neutral or normal planar term with any number of free variables.
\begin{definition}\label{defn:valence}\emph{
Let $[\Gamma]t$ be a neutral or normal planar lambda term.
A \definand{neutral handle} of $[\Gamma]t$ is a factorization of $t$ into a neutral subterm and its surrounding context.
The set $\ONH{\pi}$ of \definand{outer neutral handles} is defined by induction on the coloring $\pi$ of $[\Gamma]t$, as follows: 
\begin{description}
\item[{Case $\pi^d = \imgcenter{\includegraphics[scale=0.7]{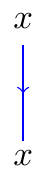}}$}] Then $\ONH{\pi} = \set{\focus{x}}$ .
\item[{Case $\pi^d = \imgcenter{\includegraphics[scale=0.6]{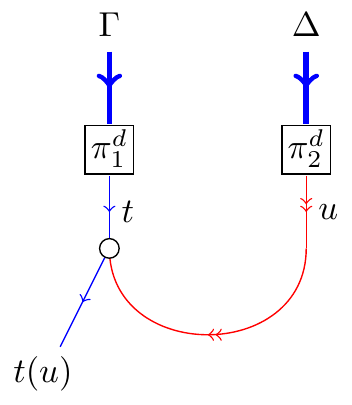}}$}]\ \\
Then $\ONH{\pi} = \begin{cases}\set{\focus{t(u)}} \cup \set{t(H) \mid H \in \ONH{\pi_2}} \cup \set{H(u) \mid H \in \ONH{\pi_1}}  & \text{if }|\D| = 0 \\ \set{\focus{t(u)}} \cup \set{t(H) \mid H \in \ONH{\pi_2}} & \text{if }|\D| > 0\end{cases}$ .
\item[{Case $\pi^d = \imgcenter{\includegraphics[scale=0.6]{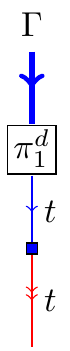}}$}] Then $\ONH{\pi} = \set{H \mid H \in \ONH{\pi_1}}$ .
\item[{Case $\pi^d = \imgcenter{\includegraphics[scale=0.6]{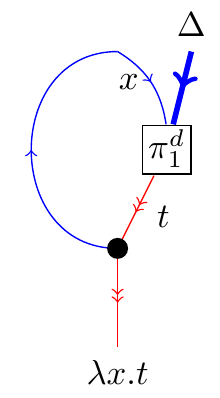}}$}] Then $\ONH{\pi} = \set{\lambda x.H \mid H \in \ONH{\pi_1}}$ .
\end{description}}
\end{definition}

\subsection{The size-preserving bijection}
\label{sec:decomp:bij}

Let us write $\composFunOpen(t_1,t_2)$ for the binary operation taking a pair of NPTs $t_1$ and $t_2$ and joining them together to form a function-open NPT by the procedure described in \Cref{sec:decomp:fn}.
Similarly, we write $\composValOpen{k}(t_1)$ for the operation taking a NPT $t_1$ with $\ge k$ outer neutral handles and factoring it along the $k$th to produce a value-open NPT by the procedure described in \Cref{sec:decomp:arg}.
We now establish a lambda calculus analogue of Theorem \ref{thm:tutte}:

\begin{thm}
\label{thm:tuttelam}
Let $[x]t$ be a NPT with $R$-coloring $\pi$.
Then exactly one of the following cases must hold:
\begin{enumerate}[label=(\roman*)]
\item $[x]t$ is the identity term and $|\pi| = |\ONH{\pi}| = 1$.
\item $[x]t = \composFunOpen([x_1]t_1,[x_2]t_2)$ for some $[x_1]t_1$ and $[x_2]t_2$ (with $R$-colorings $\pi_1$ and $\pi_2$) such that $|\pi| = |\pi_1| + |\pi_2|$ and $|\ONH{\pi}| = 1 + |\ONH{\pi_1}| + |\ONH{\pi_2}|$.
\item $[x]t = \composValOpen{k}([x_1]t_1)$ for some $t_1$ (with $R$-coloring $\pi_1$) and $1 \le k \le |\ONH{\pi_1}|$ such that $|\pi| = 1 + |\pi_1|$ and $|\ONH{\pi}| = k+1$.
\end{enumerate}
\end{thm}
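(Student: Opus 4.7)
The plan is to leverage \Cref{prop:trichotomy}, which already partitions normal linear terms into three mutually exclusive classes (identity, function-open, value-open), and to verify that these three classes correspond exactly to the three cases of the theorem, with the claimed bookkeeping on size and on the number of outer neutral handles.

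For the identity case $t = x$, the $R$-coloring $\pi$ consists of a single $s$-rule applied to a $v$-rule, so $|\pi| = 1$, and the base case of \Cref{defn:valence} yields $\ONH{\pi} = \set{\focus{x}}$, hence $|\ONH{\pi}| = 1$. For case (ii), if $[x]t$ is function-open I would use \Cref{prop:funopen} to obtain the canonical factorization of $\pi^d$ depicted in (\ref{eqn:FO}) into two sub-diagrams, which by the inductive diagram-to-term translation of \Cref{sec:diagrams} correspond to NPTs $[x_1]t_1$, $[x_2]t_2$ with $R$-colorings $\pi_1$, $\pi_2$; this realises the inverse $\decomposFunOpen$ of $\composFunOpen$, and invertibility is immediate from the fact that the extra $a$- and $\ell$-nodes reintroduced by $\composFunOpen$ are precisely those removed by $\decomposFunOpen$. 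The size identity $|\pi| = |\pi_1| + |\pi_2|$ is then immediate since $\composFunOpen$ introduces no new $s$-rules. For the handle count, the outermost neutral subterm of $t$ has the shape $x(\lambda x_2.t_2)\cdots$, and unfolding \Cref{defn:valence} at the topmost $a$-rule contributes the whole-term handle (the ``$+1$''), the handles of $\pi_1$ extended by application on the left, and the handles of $\pi_2$ wrapped inside $\lambda x_2$, giving $|\ONH{\pi}| = 1 + |\ONH{\pi_1}| + |\ONH{\pi_2}|$.

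For case (iii), if $[x]t$ is value-open, \Cref{prop:valopen:lambda} forces $t = \lambda y_1.t'$, and \Cref{prop:argopen} pins down the diagram shape. The surgery $\decomposValOpen$ produces a NPT $[x_1]t_1$ with coloring $\pi_1$, and the crucial bookkeeping is: (a) removing the leading $\lambda y_1$ and the occurrence of $x$ that it binds deletes exactly one $s$-rule (the one above the leaf labelled $x$, by \Cref{prop:size}), giving $|\pi| = 1 + |\pi_1|$; (b) the $x$-leaf in $t$ survives surgery as a distinguished neutral subterm of $t_1$, namely the one enumerated as the $k$th outer neutral handle counterclockwise from the bottom, which uniquely determines $k$ and hence the factorization $[x]t = \composValOpen{k}([x_1]t_1)$. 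The identity $|\ONH{\pi}| = k+1$ is the heart of the argument, and I would prove it by tracing the reverse surgery geometrically: the $k-1$ handles strictly preceding the grabbed one in the counterclockwise enumeration remain on the outer region of the new diagram; one new handle $\focus{x}$ appears at the inserted leaf; one new handle corresponds to the whole neutral body $\cdots x$; while the $|\ONH{\pi_1}| - k$ handles strictly following the grabbed one are enclosed inside the new $\lambda y_1$-abstraction and therefore drop out of the outer region. Mutual exclusion of the three cases is inherited directly from \Cref{prop:trichotomy}.

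The step I expect to be the main obstacle is the handle count $|\ONH{\pi}| = k+1$ in case (iii). The geometric picture is compelling, but making the ``$k-1$ survive, the remaining $|\ONH{\pi_1}|-k$ are swallowed'' argument precise will require either an induction on $\pi_1$ that tracks how \Cref{defn:valence} interacts with the (\ref{eqn:VO}) surgery, or else a standalone lemma identifying outer neutral handles of a coloring with boundary-incident blue wires of its string diagram, so that the handle count becomes a transparent consequence of planar topology. Once this is in hand, everything else is routine bookkeeping on top of the propositions of \Cref{sec:decomp}.
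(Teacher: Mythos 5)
Your overall strategy is the same as the paper's: appeal to \Cref{prop:trichotomy} for the case split and mutual exclusion, and then verify that $\composFunOpen$ and $\composValOpen{k}$ have the stated effect on the number of $s$-nodes and on the number of outer neutral handles. The identity case and both size counts are handled correctly, and your instinct that the value-open handle count $|\ONH{\pi}|=k+1$ is the delicate point is shared by the paper, which disposes of it ``by inspection'' of the string diagrams; the standalone lemma you propose (outer neutral handles of $\pi$ correspond to blue wires of $\pi^d$ incident to the outer region) is exactly the bridge that makes that inspection rigorous, and is the geometric reading already set up informally in \Cref{sec:decomp:arg}.

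There is, however, one step that fails as stated: in case (ii) you assert that ``the outermost neutral subterm of $t$ has the shape $x(\lambda x_2.t_2)\cdots$'' and derive the handle count by a single unfolding of \Cref{defn:valence} at the topmost $a$-rule. This is wrong in general: $\composFunOpen([x_1]t_1,[x_2]t_2)$ substitutes $x(\lambda x_2.t_2)$ for the free variable $x_1$ of $t_1$, and that occurrence may be buried arbitrarily deep --- for instance $[x]\lambda y.y(\lambda z.z(x(\lambda w.w)))$ is function-open but its neutral body is headed by $y$, not $x$. Consequently the new $a$-node need not be the topmost one, the ``$+1$'' handle is not the whole-term handle $\focus{t(u)}$ but the new leaf handle $\focus{x}$ feeding the $a$-node, and the handles of $\pi_1$ are not ``extended by application on the left''; rather $\focus{x_1}\in\ONH{\pi_1}$ becomes $\focus{x(\lambda x_2.t_2)}$ while the remaining handles are carried along by the substitution. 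The count $1+|\ONH{\pi_1}|+|\ONH{\pi_2}|$ is still correct, but justifying it requires the same global argument you already plan for case (iii): either the geometric lemma above (the new $a$-node and its incoming blue wire lie on the outer region at the position formerly occupied by the leaf $x_1$, and no previously outer handle is occluded), or an induction on $\pi_1$ tracking \Cref{defn:valence} through the substitution. With that repair, your proof coincides with the paper's.
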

\begin{proof}
Following \Cref{prop:trichotomy} and the discussions in \Cref{sec:decomp:fn,sec:decomp:arg}, what is left to verify is that the operations $\composFunOpen$ and $\composValOpen{k}$ have the right effect on the numbers of $s$-nodes and outer neutral handles.
Consider the $\composFunOpen$ operation yielding a function-open NPT, as expressed in string diagrams:
$$
\imgcenter{\includegraphics{\DIAGRAMS/g.pdf}}
\qquad\mkfun\qquad
\imgcenter{\includegraphics{\DIAGRAMS/h.pdf}}
\qquad=\qquad
\imgcenter{\includegraphics{\DIAGRAMS/isthmic.pdf}}
$$
By inspection, the resulting diagram includes all and only the $s$-nodes coming from the two input diagrams, and all of the outer neutral handles plus one additional one leading into the $a$-node.
Likewise, consider the $\composValOpen{k}$ operation yielding a value-open NPT:
$$
\imgcenter{\includegraphics{\DIAGRAMS/h.pdf}}
\quad=\quad
\imgcenter{\includegraphics{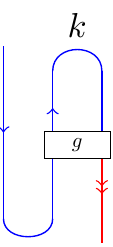}}
\qquad\Rightarrow\qquad
\imgcenter{\includegraphics[height=5em]{\DIAGRAMS/urest2.pdf}}(k)\quad
\imgcenter{\includegraphics{\DIAGRAMS/h.pdf}}
\quad=\quad
\imgcenter{\includegraphics{\DIAGRAMS/non-isthmic.pdf}}
$$
By inspection, the resulting diagram has one additional $s$-node, and exactly $k+1$ outer neutral handles, corresponding to the first $k$ outer neutral handles of the input diagram plus one additional one leading into the $s$-node.
\end{proof}

\begin{thm}\label{thm:main-bijection}
There is a one-to-one correspondence between 
rooted planar maps with $n$ edges and outer face degree $d$, and NPTs with $n+1$ $s$-nodes and $d+1$ outer neutral handles.
\end{thm}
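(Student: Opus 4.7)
The plan is to establish the bijection by strong induction on $n$, building the correspondence explicitly using the parallel trichotomies provided by \Cref{thm:tutte,thm:tuttelam}. Writing $\Phi$ for the map from rooted planar maps to NPTs (and $\Psi$ for its inverse), the base case $n = 0$ forces $M$ to be the vertex map with $d = 0$, while on the lambda side the only NPT with $|\pi| = 1$ is the identity term $[x]x$, which has $|\ONH{\pi}| = 1$; so setting $\Phi(\text{vertex map}) = [x]x$ is forced and matches the parameters.

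For the inductive step with $n \ge 1$, \Cref{thm:tutte} splits $M$ into the isthmic or non-isthmic case (the vertex case being ruled out), while \Cref{thm:tuttelam} makes the parallel split of $\Phi(M)$ into function-open or value-open (the identity case being ruled out, since $|\pi| = n+1 \ge 2$). In the isthmic case I would decompose $M = \composIsthm(M_1, M_2)$ with $e(M_i) < n$, apply the inductive hypothesis to get $\Phi(M_i)$ with $|\pi_i| = e(M_i) + 1$ and $|\ONH{\pi_i}| = o(M_i) + 1$, and define $\Phi(M) \defeq \composFunOpen(\Phi(M_1), \Phi(M_2))$. \Cref{thm:tuttelam}(ii) then gives $|\pi| = |\pi_1| + |\pi_2| = e(M_1) + e(M_2) + 2 = e(M) + 1$ and $|\ONH{\pi}| = 1 + |\ONH{\pi_1}| + |\ONH{\pi_2}| = o(M_1) + o(M_2) + 3 = o(M) + 1$, matching the parameters demanded by \Cref{thm:tutte}(ii).

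In the non-isthmic case, $M = \composNonIsthm{k}(M_1)$ with $0 \le k \le o(M_1)$ and $e(M_1) = n-1$; I would set $\Phi(M) \defeq \composValOpen{k+1}(\Phi(M_1))$. The off-by-one shift is dictated by the fact that \Cref{thm:tuttelam}(iii) takes an index $k'$ with $1 \le k' \le |\ONH{\pi_1}| = o(M_1) + 1$, so the ranges $\{0, \dots, o(M_1)\}$ and $\{1, \dots, o(M_1)+1\}$ are put into bijection via $k' = k+1$. The parameter count works out: $|\pi| = 1 + |\pi_1| = e(M)+1$, and $|\ONH{\pi}| = k'+1 = k+2 = o(M)+1$. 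The inverse $\Psi$ is built symmetrically, using $\decomposFunOpen$ and $\decomposValOpen$ on the NPT side to guide the choice of $\decomposIsthm$ or $\decomposNonIsthm$ on the map side.

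The main obstacle will be to verify that $\Phi$ and $\Psi$ are mutually inverse, which reduces to three facts: the compose/decompose pairs on each side are inverse operations, the trichotomies are exclusive, and the index shift $k \leftrightarrow k+1$ is genuinely a bijection of choice-sets of equal cardinality. The first two facts are noted in \Cref{sec:tuttemaps,sec:decomp} and immediate from \Cref{thm:tutte,thm:tuttelam}; the delicate point—and where I expect to spend the most care—is justifying that the geometric labelling of outer neutral handles specified in \Cref{defn:valence} (walking counterclockwise around the outer region starting from the incoming wire) is compatible, under the inductive bijection, with Tutte's convention of locating the new root face by walking backwards from the source of the root dart along the outer face of $M_1$. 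Once that geometric compatibility is pinned down, the rest of the argument is purely inductive bookkeeping on the two parallel trichotomies.
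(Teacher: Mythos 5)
Your proposal is correct and takes essentially the same route as the paper's proof, which likewise proceeds by induction on the number of edges / size, playing \Cref{thm:tutte,thm:tuttelam} in parallel and invoking exactly the off-by-one arithmetic identities you compute; your write-up simply makes the recursion and the index shift $k \mapsto k+1$ explicit. The one place you anticipate difficulty---geometric compatibility between the numbering of outer neutral handles and Tutte's convention for rooting submaps---is not actually required: the bijection is defined purely by the recursive pairing of the two deterministic decompositions, so any fixed deterministic convention on each side suffices (as the paper itself remarks about Tutte's rooting conventions).
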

\begin{proof}
By playing Theorems \ref{thm:tutte} nd \ref{thm:tuttelam} in parallel, to decompose a rooted planar map/NPT and then recompose it as the corresponding NPT/rooted planar map.
(This uses an induction on the number of edges of a rooted planar map, and on the size of an NPT.)
Note that the off-by-one offset between number of edges/outer face degree and size/number of outer neutral handles means that we need to apply the arithmetic identities $(n+1)+(n'+1) = 1+(n+n'+1)$ and $1+(d+1)+(d'+1) = (2+d+d')+1$ in the isthmic/function-open case.
\end{proof}
\begin{figure}
\begin{center}
\begin{subfigure}[b]{0.33\textwidth}
\qquad\qquad
\imgcenter{\includegraphics[scale=0.6]{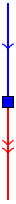}}
\quad
\imgcenter{\includegraphics[scale=1.5]{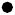}}
\end{subfigure}
\begin{subfigure}[b]{0.33\textwidth}
\qquad
\imgcenter{\includegraphics[scale=0.6]{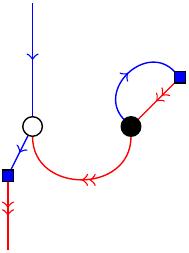}}
\quad
\imgcenter{\includegraphics[scale=1.5]{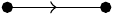}}
\end{subfigure}
\begin{subfigure}[b]{0.32\textwidth}
\qquad
\imgcenter{\includegraphics[scale=0.6]{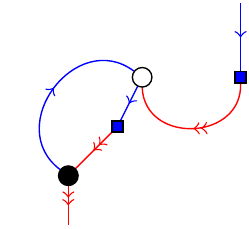}}
\imgcenter{\includegraphics[scale=1.5]{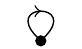}}
\end{subfigure}

\begin{subfigure}[b]{0.33\textwidth}
\imgcenter{\includegraphics[scale=0.6]{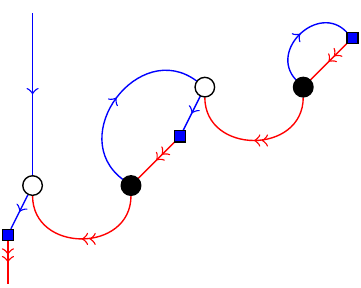}}
\quad\imgcenter{\includegraphics[scale=1.5]{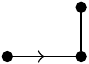}}
\end{subfigure}
\begin{subfigure}[b]{0.33\textwidth}
\qquad
\imgcenter{\includegraphics[scale=0.6]{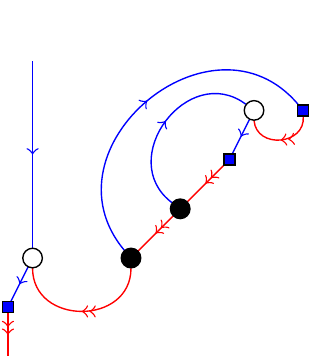}}
\imgcenter{\includegraphics[scale=1.5]{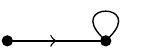}}
\end{subfigure}
\begin{subfigure}[b]{0.32\textwidth}
\quad\imgcenter{\includegraphics[scale=0.6]{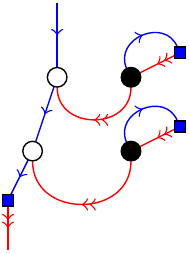}}
\quad
\imgcenter{\includegraphics[scale=1.5]{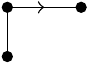}}
\end{subfigure}

\begin{subfigure}[b]{0.33\textwidth}
\imgcenter{\includegraphics[scale=0.6]{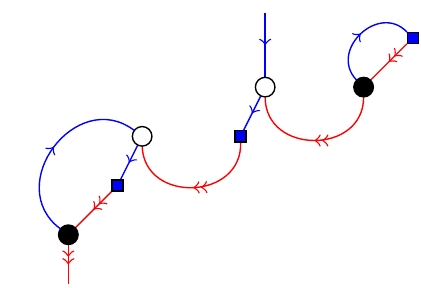}}
\imgcenter{\includegraphics[scale=1.5]{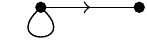}}
\end{subfigure}
\begin{subfigure}[b]{0.33\textwidth}
\quad\imgcenter{\includegraphics[scale=0.6]{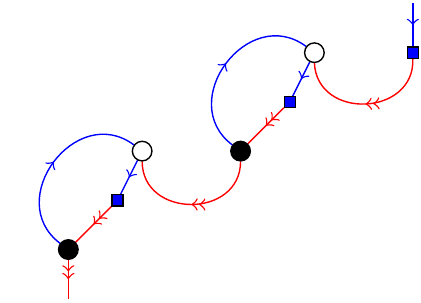}}
\ \ 
\imgcenter{\includegraphics[scale=1.5]{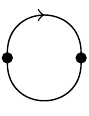}}
\end{subfigure}
\begin{subfigure}[b]{0.32\textwidth}
\imgcenter{\includegraphics[scale=0.6]{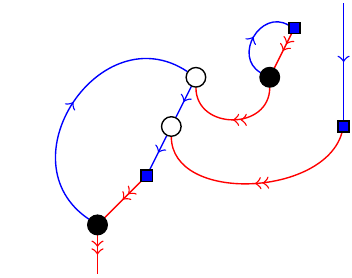}}
\quad
\imgcenter{\includegraphics[scale=1.5]{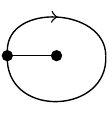}}
\end{subfigure}

\begin{subfigure}[b]{0.32\textwidth}
\imgcenter{\includegraphics[scale=0.6]{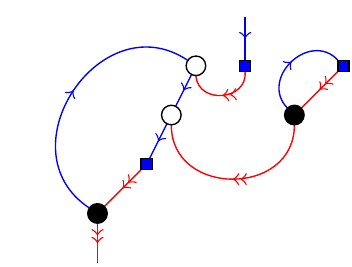}}
\imgcenter{\includegraphics[scale=1.5]{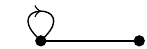}}
\end{subfigure}
\begin{subfigure}[b]{0.32\textwidth}
\imgcenter{\includegraphics[scale=0.6]{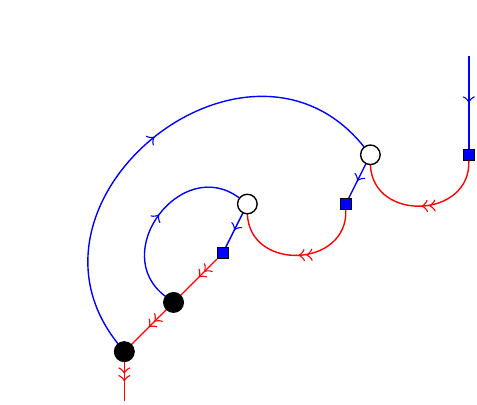}}
\imgcenter{\includegraphics[scale=1.5]{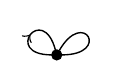}}
\end{subfigure}
\begin{subfigure}[b]{0.32\textwidth}
\imgcenter{\includegraphics[scale=0.6]{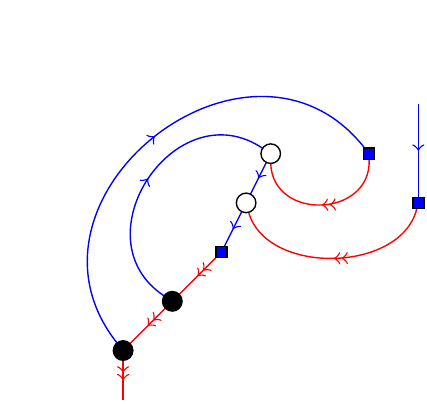}}
\imgcenter{\includegraphics[scale=1.5]{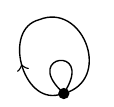}}
\end{subfigure}
\end{center}
\caption{Correspondence between the NPTs of size $\le 3$ and rooted planar maps with $\le 2$ edges.}
\label{fig:lam3map2}
\end{figure}
\begin{figure}
$$
\infer[\composValOpen{2}/\composNonIsthm{1}]{\includegraphics[scale=0.8]{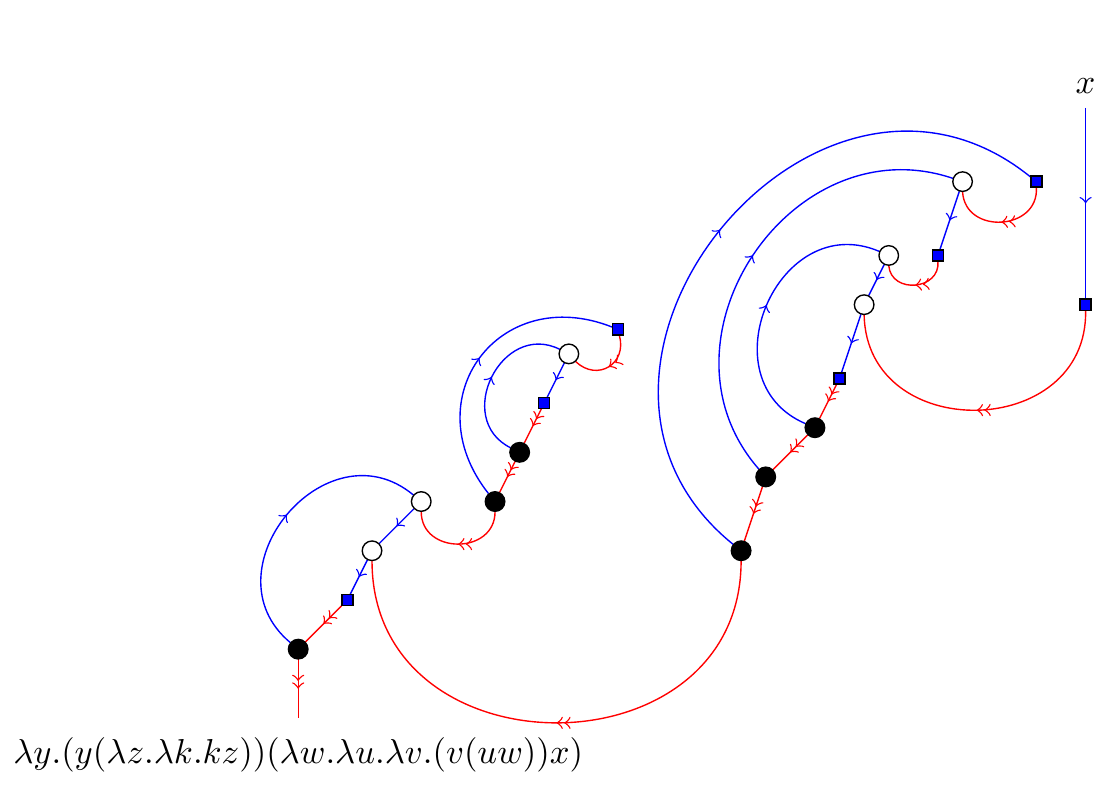} \quad\raisebox{2em}{\includegraphics[scale=0.4]{\DIAGRAMS/animation/g6.pdf}} %
}{
\infer[\composFunOpen/\composIsthm]{\includegraphics[scale=0.7]{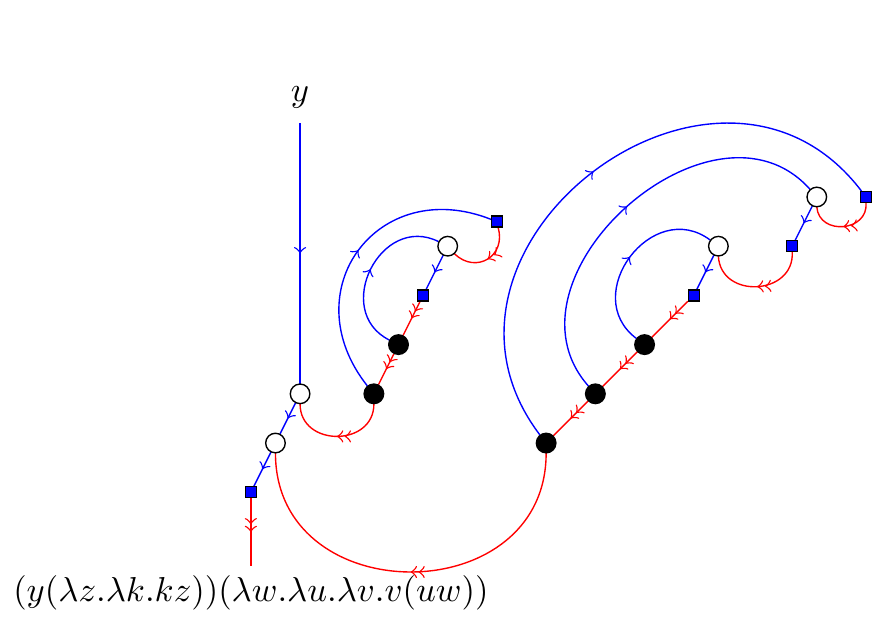} \quad\raisebox{2em}{\includegraphics[scale=0.35]{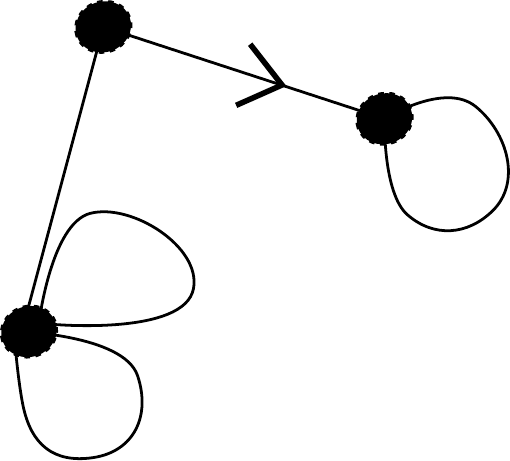}} %
}{
 \infer[\composFunOpen/\composIsthm]{\includegraphics[scale=0.6]{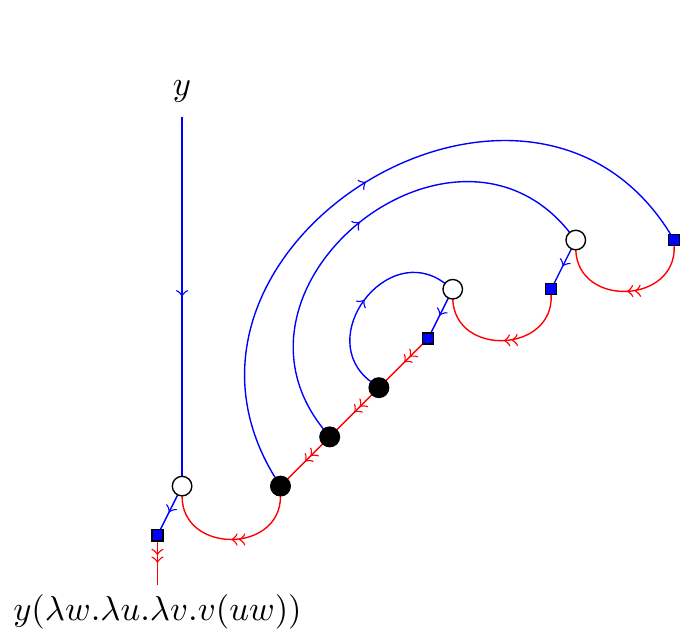} \quad\raisebox{1.5em}{\includegraphics[scale=0.3]{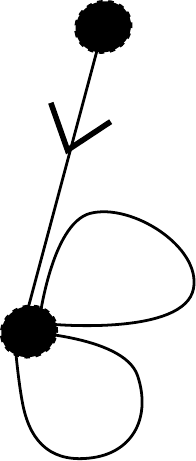}}%
}{
  \includegraphics[scale=0.5]{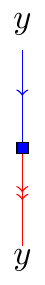}\quad \raisebox{1.5em}{\includegraphics[scale=0.25]{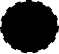}}%
 &\qquad&
  \infer[\composValOpen{2}/\composNonIsthm{1}]{\includegraphics[scale=0.5]{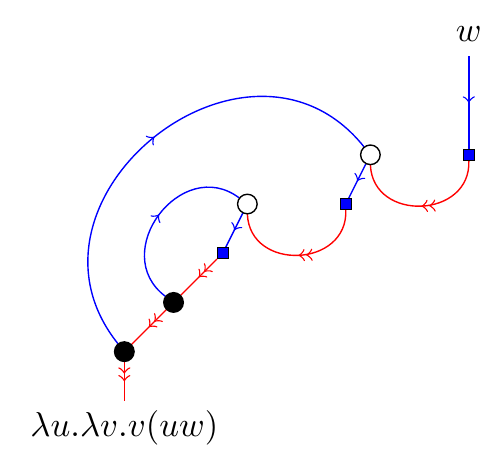} \quad\raisebox{1em}{\includegraphics[scale=0.25]{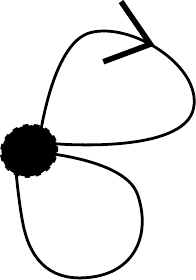}}%
}{
  \infer[\composValOpen{1}/\composNonIsthm{0}]{\includegraphics[scale=0.4]{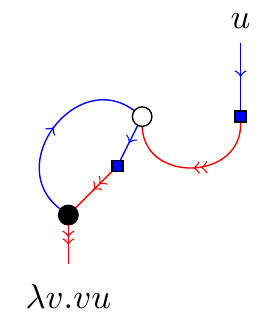} \quad\raisebox{1em}{\includegraphics[scale=0.2]{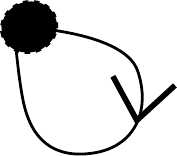}}}{
  \includegraphics[scale=0.3]{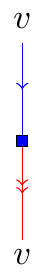}\quad \raisebox{1em}{\includegraphics[scale=0.15]{\DIAGRAMS/animation/g0.pdf}}
 }}}
 &\qquad&
 \infer[\composValOpen{1}/\composNonIsthm{0}]{
  \includegraphics[scale=0.6]{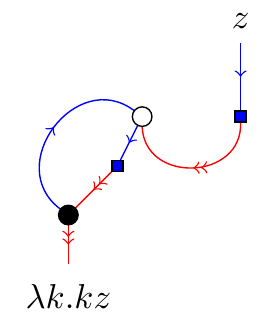} \quad\raisebox{1em}{\includegraphics[scale=0.3]{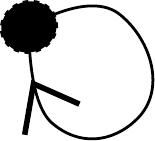}}
}{\includegraphics[scale=0.5]{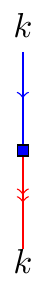} \quad\raisebox{2em}{\includegraphics[scale=0.25]{\DIAGRAMS/animation/g0.pdf}}}}}
$$

\caption{Full decomposition of a normal planar lambda term with seven $s$-nodes and three outer neutral handles, in parallel with the corresponding rooted planar map with six edges and outer face degree two.}
\label{fig:fulltutte}
\end{figure}
In \Cref{fig:lam3map2}, we show the result of applying this bijection to all NPTs of size at most three, while in \Cref{fig:fulltutte}, we give an illustration of the proof of Theorem \ref{thm:main-bijection} in action, animating the full decomposition of a particular NPT (with 7 $s$-nodes and 3 outer neutral handles) in parallel with the decomposition of the corresponding rooted planar map (\# edges = 6, outer face degree = 2).

\begin{cor}\label{cor:1to1}
The following families of objects are all in size-preserving bijection:
\begin{itemize}
\item rooted planar maps
\item normal planar lambda terms
\item $R$-colorings of lambda skeletons
\end{itemize}
\end{cor}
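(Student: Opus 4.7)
The proof is essentially a composition of bijections that have been established or are implicit in the preceding material. I would proceed in two steps.

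First, I would exhibit the bijection between normal planar lambda terms and $R$-colorings of lambda skeletons. By \Cref{prop:skelplam:unique,prop:skelplam:exists}, the map sending a planar lambda term $[\Gamma]t$ to its underlying lambda skeleton is a bijection (up to $\alpha$-equivalence) between planar lambda terms and lambda skeletons. Moreover, the $R$-coloring of an NPT, per \Cref{defn:coloring}, is a property of the underlying skeleton rather than of its variable annotations, and as noted after \Cref{defn:size}, any skeleton admits at most one $R$-coloring. Hence the assignment ``NPT $\mapsto$ ($R$-colored skeleton)'' is a bijection, and \Cref{defn:size} declares the size of an NPT to be precisely the size of its associated $R$-coloring, so this bijection is size-preserving on the nose.

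Second, I would invoke \Cref{thm:main-bijection} to get the bijection between rooted planar maps and normal planar lambda terms. Summing over the outer-face-degree / outer-neutral-handle statistic, rooted planar maps with $n$ edges correspond bijectively to NPTs with $n+1$ $s$-nodes, i.e., of size $n+1$, so ``size-preserving'' here is understood up to this uniform offset of $1$ (equivalently, by taking the size of a rooted planar map to be $e(M)+1$, as is natural given that the unique NPT of size $1$ corresponds to the vertex map with zero edges).

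Composing the two bijections yields the corollary. I do not anticipate any genuine obstacle, since all the real work has been done: \Cref{thm:main-bijection} handles the Tutte-style correspondence and \Cref{prop:skelplam:unique,prop:skelplam:exists} together with the uniqueness of colorings handle the reduction of NPTs to colored skeletons. The only point requiring care is simply to make explicit that the three notions of ``size'' (number of edges plus one for maps, size of coloring for NPTs, number of $s$-nodes for colored skeletons) are matched up consistently by the bijections, which is immediate from the statements just cited.
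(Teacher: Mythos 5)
Your proposal is correct and matches the paper's intent exactly: the corollary is stated without proof precisely because it is the composition of \Cref{thm:main-bijection} with the identification of NPTs with $R$-colored skeletons furnished by \Cref{prop:skelplam:unique,prop:skelplam:exists}, the uniqueness of colorings, and \Cref{defn:size}. Your explicit attention to the off-by-one convention relating the number of edges of a map to the size of the corresponding NPT is the only point of substance, and you handle it the same way the paper does in its earlier corollary to \Cref{prop:genfnB0}.
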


\section*{Acknowledgments}

We thank Alexis Saurin, Paul-André Melliès, Maciej Do{\l}\c{e}ga, and Beniamino Accattoli for discussions and pointers to related work.
We also thank Pierre Lescanne for an invitation to speak about this work at the 8th Workshop on Computational Logic and Applications at Lyon in March 2015.
The string diagrams in the paper were manipulated with the help of the TikZiT diagram editor.

\bibliographystyle{abbrvnat}

\begin{thebibliography}{}

\bibitem{abramsky-2009}
Samson Abramsky. Temperley-Lieb Algebra: From Knot Theory to Logic and Computation via Quantum Mechanics. In {\em Mathematics of Quantum Computing and Technology} (eds. G. Chen, L. Kauffman and S. Lomonaco), 415--458, 2008.

\bibitem{ariolablom97}
Zena M. Ariola and Stefan Blom. Cyclic Lambda Calculi. In {\em Proceedings of the Third International Symposium on Theoretical Aspects of Computer Science}, 77--106, Sendai, Japan, 1997.


\bibitem{baez-stay-rosetta}
John Baez and Mike Stay. Physics, Topology, Logic and Computation: A Rosetta Stone. In {\em New Structures for Physics} (ed. Bob Coecke), Springer Lecture Notes in Physics 813, 95--174, 2011.

\bibitem{barendregt1984}
H. P. Barendregt. {\em The Lambda Calculus: Its Syntax and Semantics}, Studies in Logic 103, second, revised edition, North-Holland, Amsterdam, 1984.

\bibitem{bodini-et-al}
O. Bodini, D. Gardy, and A. Jacquot. Asymptotics and random sampling for BCI and BCK lambda terms. {\em Theoretical Computer Science}, 502:227--238, 2013.

\bibitem{buliga2013GLC}
Marius Buliga. Graphic Lambda Calculus. {\em Complex Systems}, 22(4):311--360, 2013.

\bibitem{david+2013}
René David, Katarzyna Grygiel, Jakub Kozik, Christophe Raffalli, Guillaume Theyssier, and Marek Zaionc. Asymptotically almost all lambda terms are strongly normalizing. {\em Logical Methods in Computer Science} 9(1:02):1--30, 2013.

\bibitem{daviespfenning00}
Rowan Davies and Frank Pfenning. Intersection types and computational effects. In {\em Proceedings of the Fifth ACM SIGPLAN International Conference on Functional Programming}, 198--208, Portland, Oregon, USA, 2000.


\bibitem{flajolet-sedgewick}
Philippe Flajolet and Robert Sedgewick. Analytic Combinatorics. Cambridge University Press, 2009.

\bibitem{girard-linear-logic}
Jean-Yves Girard. Linear logic. {\em Theoretical Computer Science}, 50:1--102, 1987.

\bibitem{grygiel-idziak-zaionc}
Katarzyna Grygiel, Pawel M. Idziak, Marek Zaionc.
How big is BCI fragment of BCK logic. {\em Journal of Logic and Computation} 23(3):673--691, 2013.

\bibitem{grygiel-lescanne}
Katarzyna Grygiel and Pierre Lescanne. Counting and generating lambda terms. {\em Journal of Functional Programming}, 23(5):594--628, 2013.

\bibitem{guerrini2001}
Stefano Guerrini. Proof nets and the lambda-calculus. In {\em Linear Logic in Computer Science} (eds. Thomas Ehrhard, Paul Ruet, Jean-Yves Girard and Philip Scott), 65--118, Cambridge University Press, 2004.


\bibitem{hyland-lambda-calculus}
Martin Hyland. Classical lambda calculus in modern dress. To appear in {\em Mathematical Structures in Computer Science}.  arXiv:1211.5762

\bibitem{jones-singerman}
Gareth A. Jones and David Singerman. Theory of maps on orientable surfaces. {\em Proceedings of the London Mathematical Society}, 37:273--307, 1978.

\bibitem{joyal-street-i}
André Joyal and Ross Street. The geometry of tensor calculus I. {\em Advances in Mathematics}, 102:20--78, 1993.



\bibitem{lambekscott}
Joachim Lambek and Philip Scott. {\em Introduction to Higher-order Categorical Logic.} Cambridge University Press, 1986.

\bibitem{landozvonkin}
Sergei K. Lando and Alexander K. Zvonkin. {\em Graphs on Surfaces and Their Applications}, Encyclopaedia of Mathematical Sciences 141, Springer-Verlag, 2004.


\bibitem{maclane}
Saunders Mac Lane. {\em Categories for the Working Mathematician}. Springer, 1971.

\bibitem{mairson2002dilbert}
Harry G. Mairson. From Hilbert Spaces to Dilbert Spaces: Context Semantics Made Simple. In {\em Proceedings of the 22nd Conference on Foundations of Software Technology and Theoretical Computer Science}, 2--17, Kanpur, India, 2002.

\bibitem{mairson2004}
Harry G. Mairson. Linear lambda calculus and PTIME-completeness. {\em Journal of Functional Programming} 14(6), 623--633, 2004.

\bibitem{mellies2013frobenius}
Paul-André Melliès. Dialogue categories and Frobenius monoids. In {\em Computation, Logic, Games, and Quantum Foundations: Essays Dedicated to Samson Abramsky on the Occasion of his 60th Birthday} (eds. Bob Coecke, Luke Ong, Prakash Panangaden), Springer LNCS 7860, 197--224, 2013.

\bibitem{mz15popl}
Paul-André Melliès and Noam Zeilberger. Functors are Type Refinement Systems.
In {\em Proceedings of the 42nd {ACM} {SIGPLAN-SIGACT} Symposium on Principles of Programming}, 3--16, Mumbai, India, 2015.

\bibitem{oeis}
OEIS Foundation Inc. (2011), {\em The On-Line Encyclopedia of Integer Sequences}, published electronically at http://oeis.org.

\bibitem{pfenning93types}
Frank Pfenning. Refinement Types for Logical Frameworks. In {\em Informal Proceedings of the Workshop on Types for Proofs and Programs} (ed. Herman Geuvers), 285--299, Nijmegen, The Netherlands, May 1993. 


\bibitem{polakow-pfenning99}
Jeff Polakow and Frank Pfenning. Relating Natural Deduction and Sequent Calculus for Intuitionistic Non-Commutative Linear Logic.  In {\em Proceedings of the 15th Conference on Mathematical Foundations of Programming Semantics}, ENTCS 20, 1999.

\bibitem{reynolds98topl}
John C. Reynolds. {\em Theories of Programming Languages.} Cambridge University Press, 1998.

\bibitem{scott1976}
Dana S. Scott. Data types as lattices. {\em SIAM Journal on Computing}, 5(3):522--587, September 1976.

\bibitem{scott1980}
Dana S. Scott. Relating theories of the $\lambda$-calculus.   In {\em To H.B. Curry: Essays on Combinatory Logic, Lambda-Calculus and Formalism} (eds. Hindley and Seldin), Academic Press, 403--450, 1980.

\bibitem{seely1987}
R. A. G. Seely. Modelling Computations: A 2-Categorical Framework. In {\em Proceedings of the Second Annual IEEE Symposium on Logic in Computer Science}, 65--71, Ithaca, NY, USA, 1987.

\bibitem{selinger-survey}
Peter Selinger.  A survey of graphical languages for monoidal categories. In {\em New Structures for Physics} (ed. Bob Coecke), Springer Lecture Notes in Physics 813, 289--355, 2011.

\bibitem{statman74phd}
Richard Statman. {\em Structural Complexity of Proofs}, Ph.D. thesis, Stanford University, 1974.


\bibitem{tutte1963}
W. T. Tutte. A census of planar maps. {\em Canadian Journal of Mathematics}, 15:249--271, 1963.

\bibitem{tutte1968}
W. T. Tutte. On the enumeration of planar maps. {\em Bulletin of the American Mathematical Society}, 74:64--74, 1968.



\end{thebibliography}

\pagebreak

\appendix

\section{All normal planar lambda terms of size four}
\label{sec:size4}

\noindent Unannotated string diagrams:

\begin{center}
1.~\includegraphics[scale=0.5]{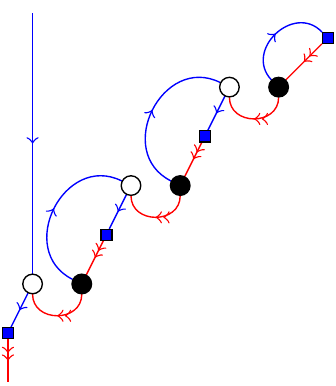}\quad
2.~\includegraphics[scale=0.5]{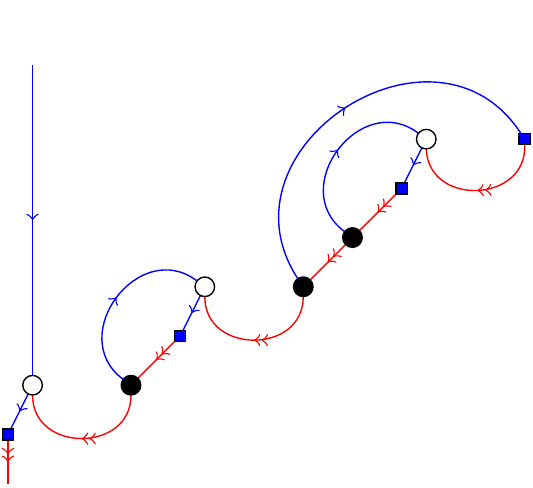}\quad
3.~\includegraphics[scale=0.5]{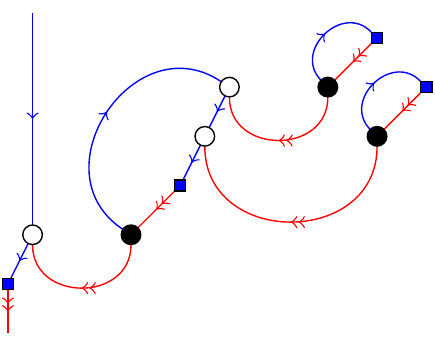}\quad
4.~\includegraphics[scale=0.5]{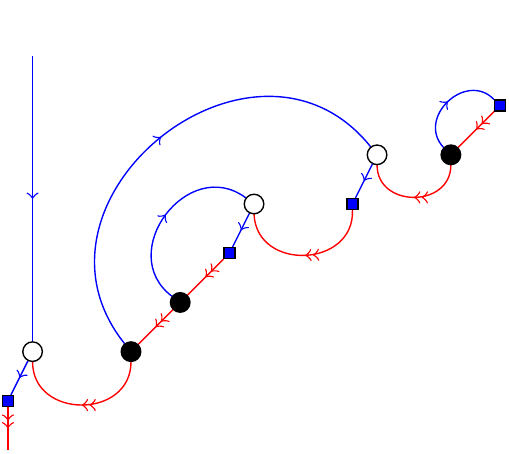}\quad
5.~\includegraphics[scale=0.5]{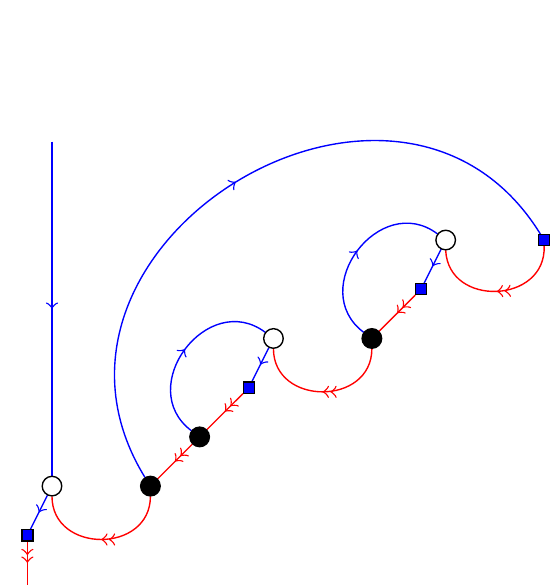}\quad
\hspace*{-1em}6.~\includegraphics[scale=0.5]{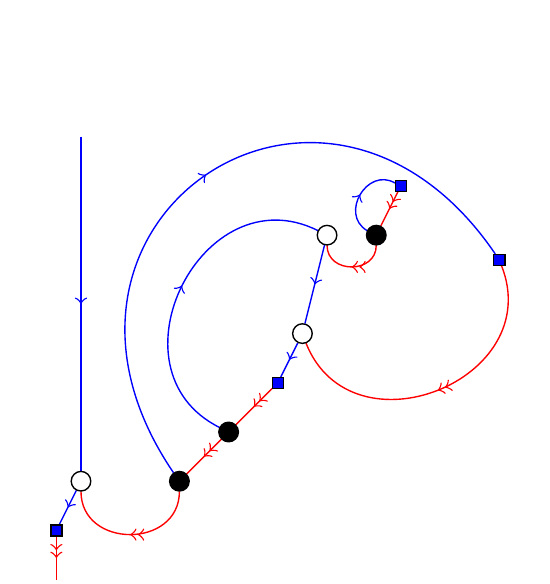}\quad
7.~\includegraphics[scale=0.5]{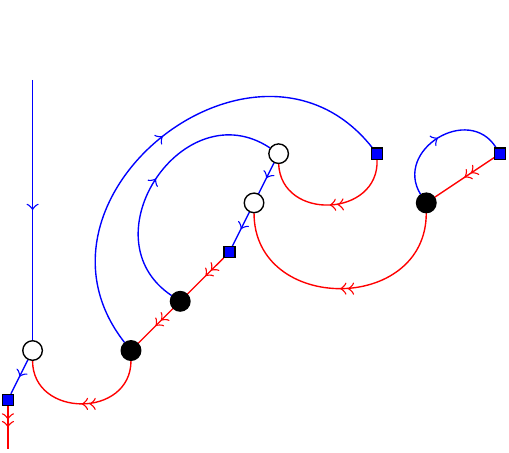}\quad
8.~\includegraphics[scale=0.5]{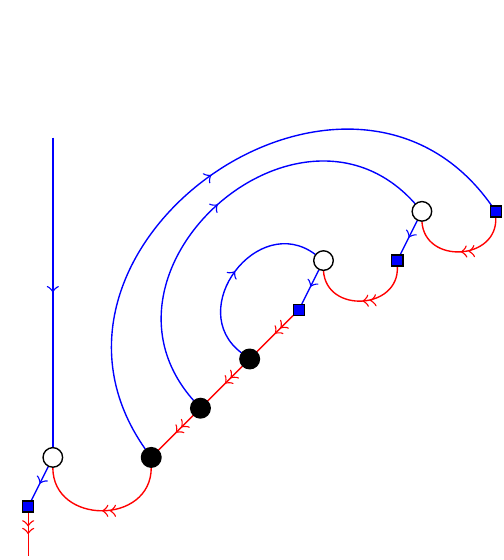}\quad 
9.~\includegraphics[scale=0.5]{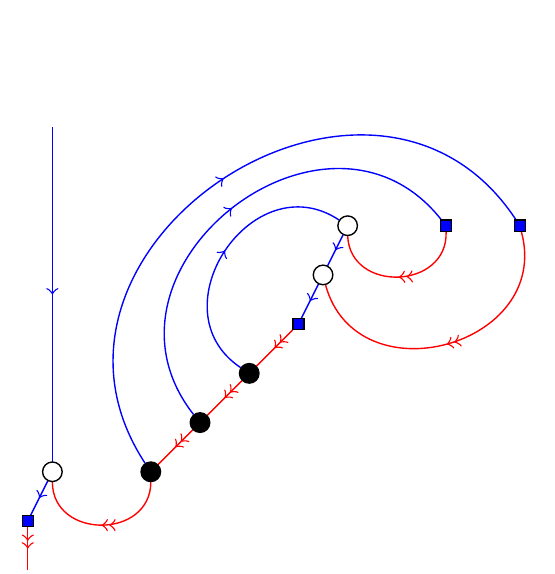}\quad
10.~\includegraphics[scale=0.5]{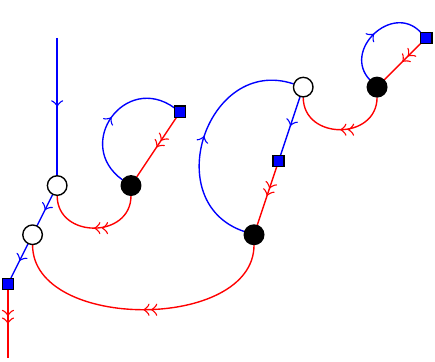}\quad
11.~\includegraphics[scale=0.5]{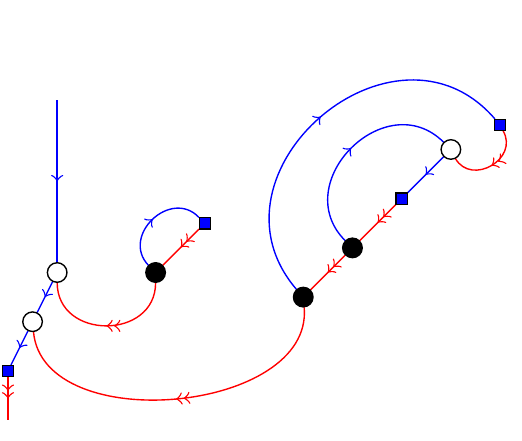}\quad
12.~\includegraphics[scale=0.5]{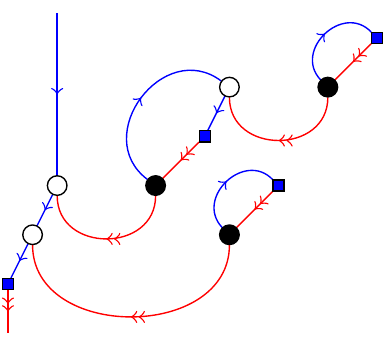}\quad
13.~\includegraphics[scale=0.5]{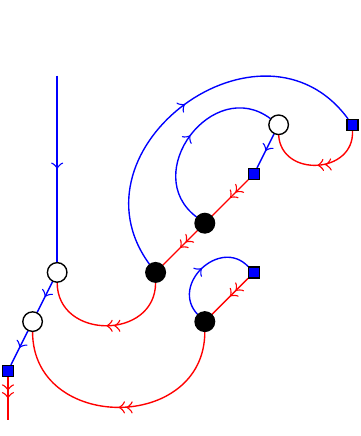}\quad
14.~\includegraphics[scale=0.5]{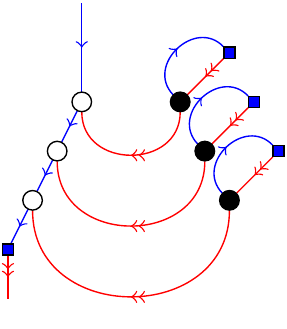}\quad
15.~\includegraphics[scale=0.5]{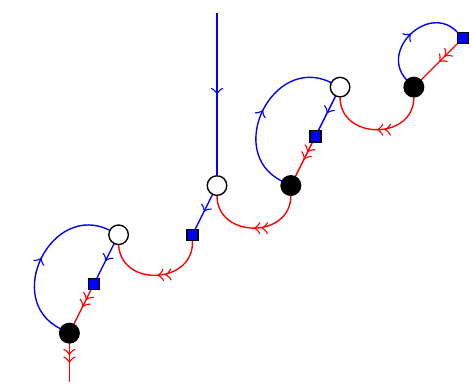}\quad
\hspace*{-1em}16.~\includegraphics[scale=0.5]{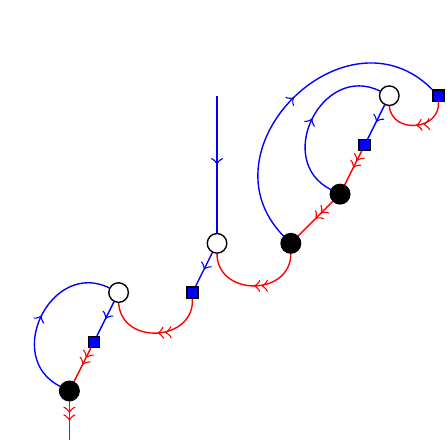}\quad
\hspace*{-1em}17.~\includegraphics[scale=0.5]{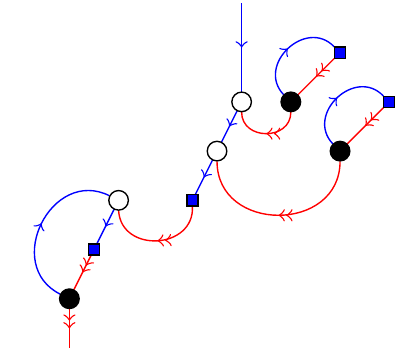}\\
18.~\includegraphics[scale=0.5]{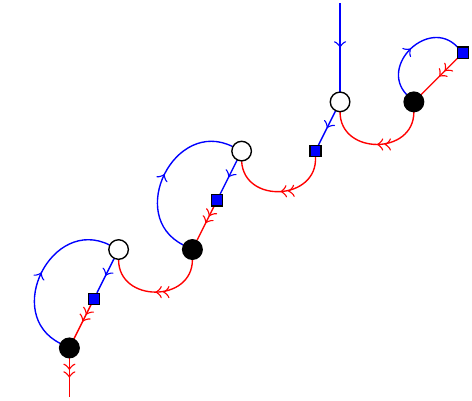}\quad
\hspace*{-1em}19.~\includegraphics[scale=0.5]{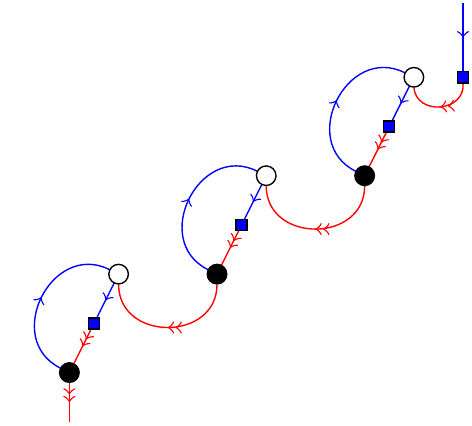}\quad
\hspace*{-1em}20.~\includegraphics[scale=0.5]{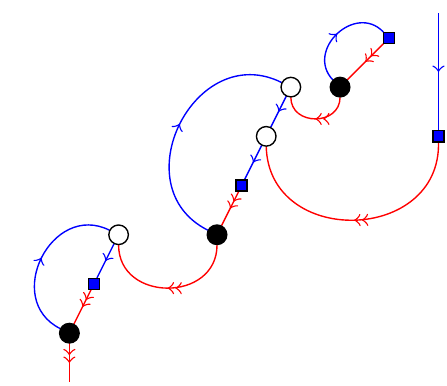}\quad
\hspace*{-1em}21.~\includegraphics[scale=0.5]{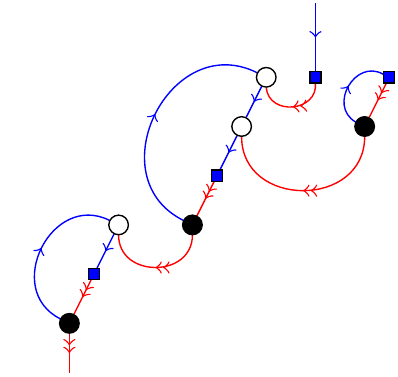}\quad
\hspace*{-1em}22.~\includegraphics[scale=0.5]{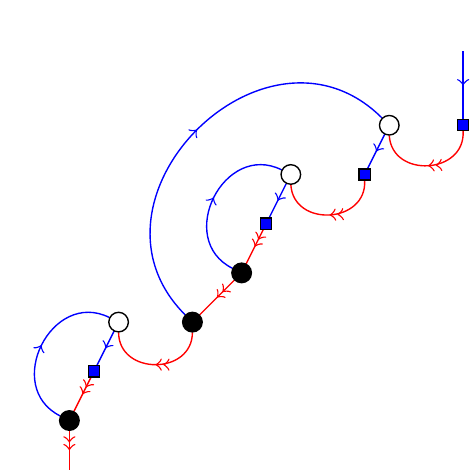}\\
\hspace*{-1em}23.~\includegraphics[scale=0.5]{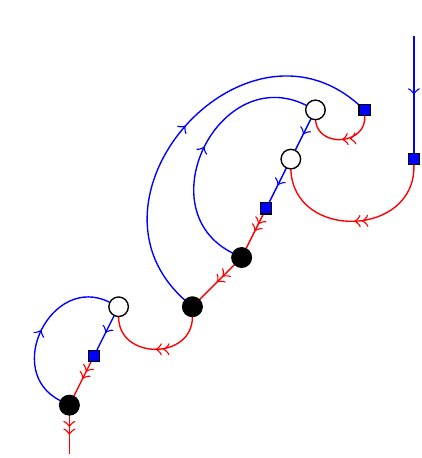}\quad
24.~\hspace*{-0.5em}\includegraphics[scale=0.5]{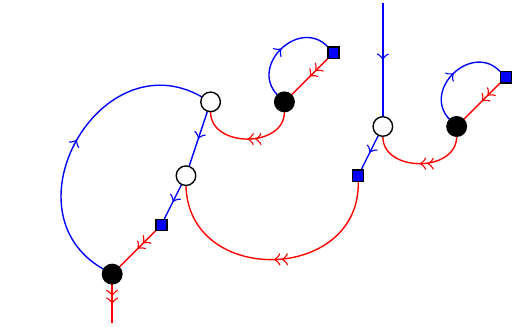}\quad
25.~\hspace*{-0.5em}\includegraphics[scale=0.5]{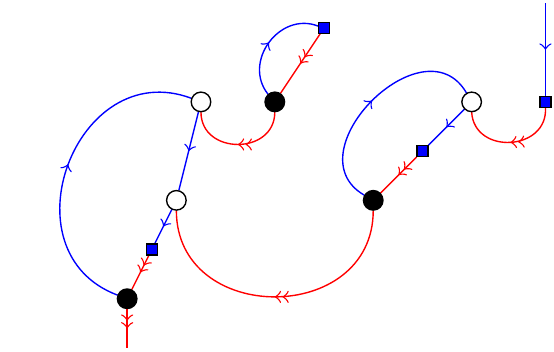}\quad
26.~\hspace*{-0.5em}\includegraphics[scale=0.5]{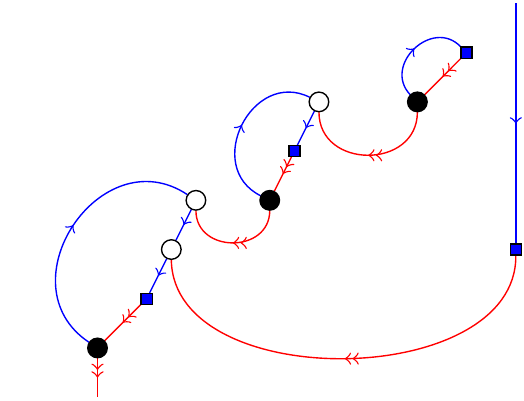}\quad
27.~\includegraphics[scale=0.5]{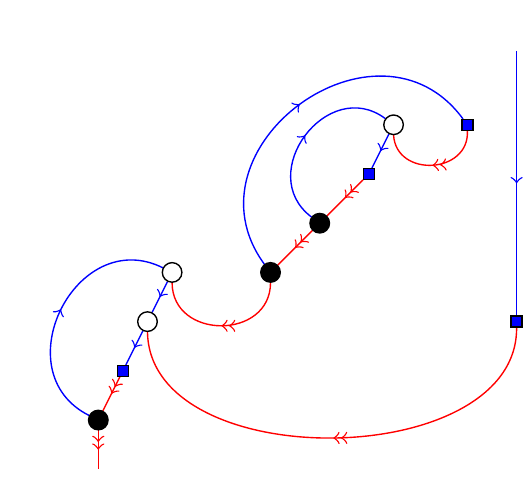}\quad
28.~\hspace*{-0.5em}\includegraphics[scale=0.5]{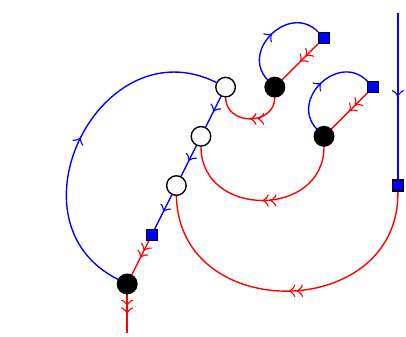}\quad
29.~\hspace*{-0.5em}\includegraphics[scale=0.5]{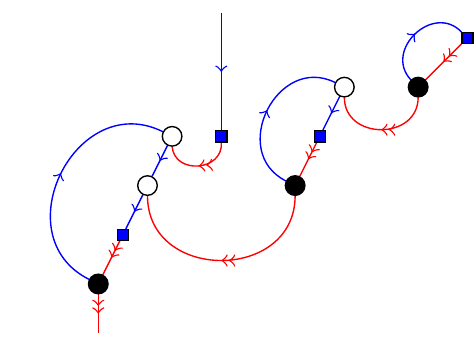}\quad
30.~\hspace*{-0.5em}\includegraphics[scale=0.5]{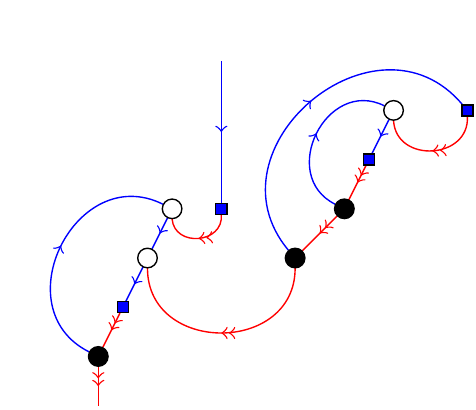}\quad
31.~\hspace*{-0.5em}\includegraphics[scale=0.5]{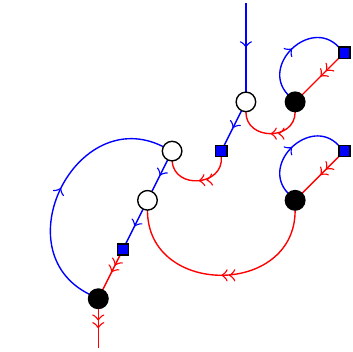}\quad
32.~\hspace*{-0.5em}\includegraphics[scale=0.5]{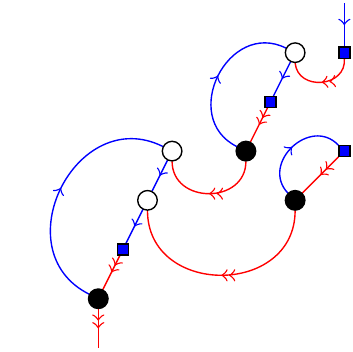}\quad
33.\hspace*{-0.5em}~\includegraphics[scale=0.5]{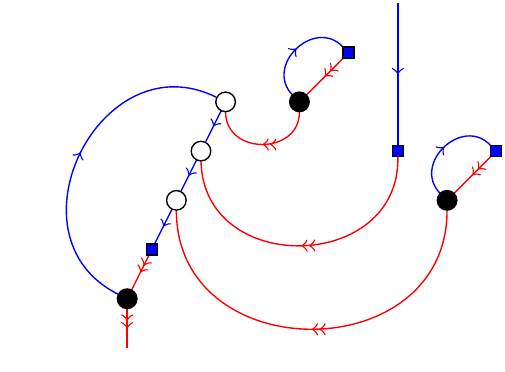}\quad
34.\hspace*{-0.5em}~\includegraphics[scale=0.5]{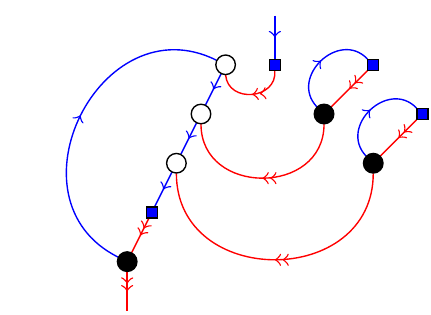}\quad
35.\hspace*{-0.5em}~\includegraphics[scale=0.5]{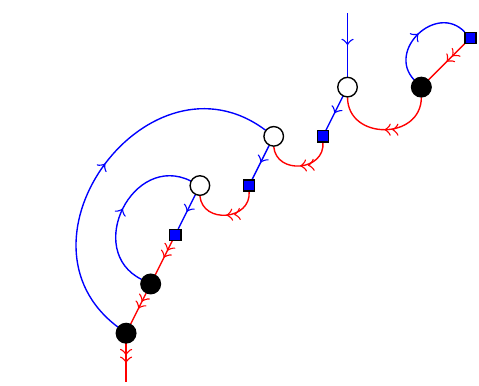}\quad
36.\hspace*{-0.5em}~\includegraphics[scale=0.5]{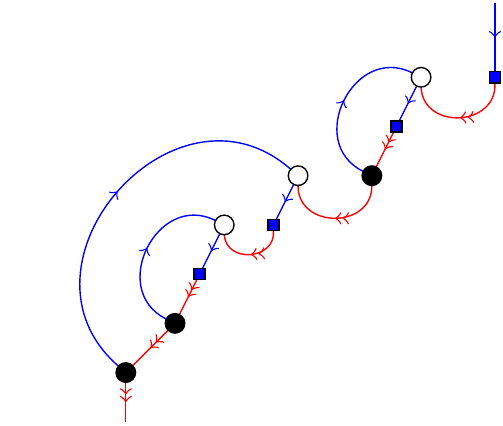}\quad
37.\hspace*{-1em}~\includegraphics[scale=0.5]{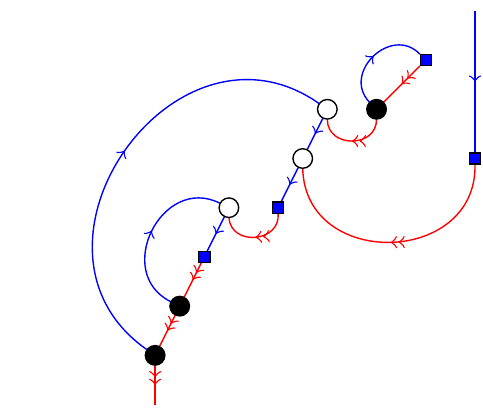}\quad
38.\hspace*{-1em}~\includegraphics[scale=0.5]{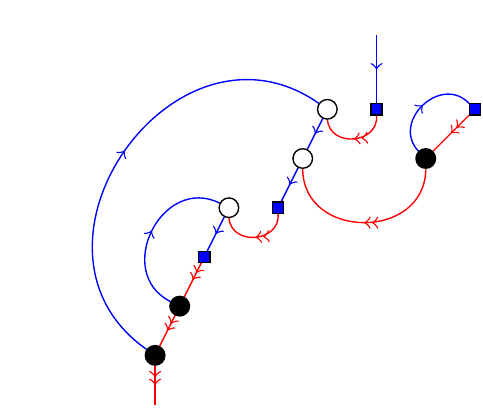}\quad
39.\hspace*{-1.5em}~\includegraphics[scale=0.5]{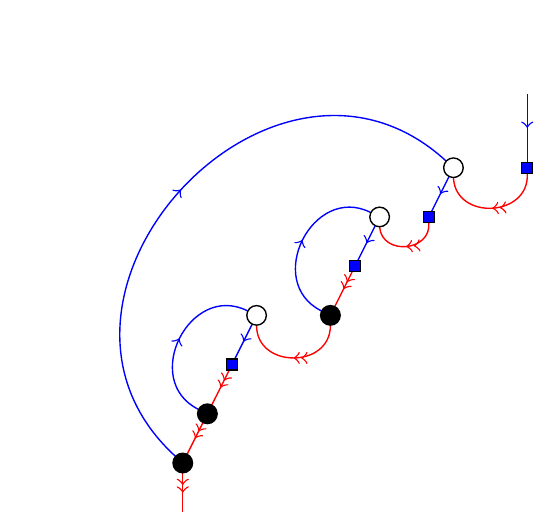}\quad
40.\hspace*{-1.5em}~\includegraphics[scale=0.5]{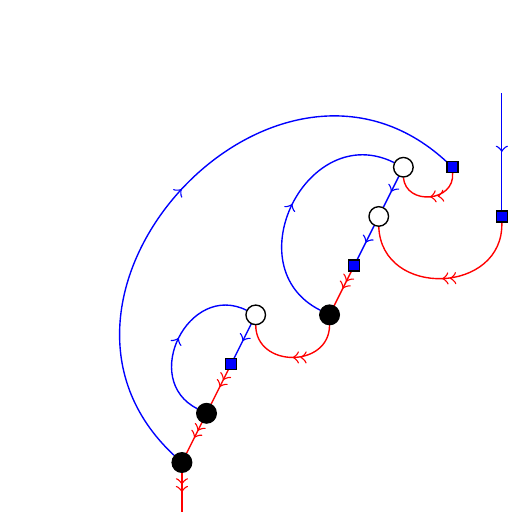}\quad
41.\hspace*{-1.5em}~\includegraphics[scale=0.5]{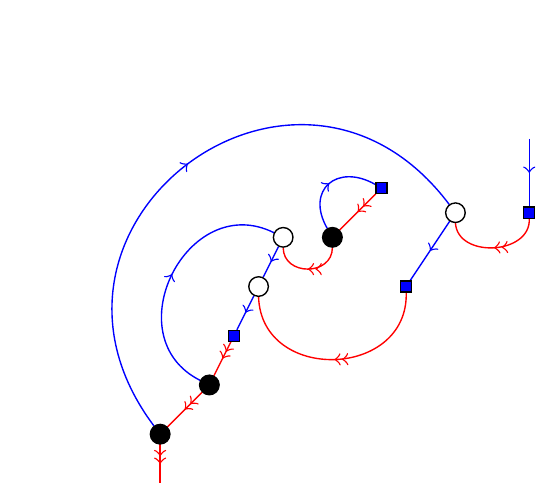}\quad
42.\hspace*{-1em}~\includegraphics[scale=0.5]{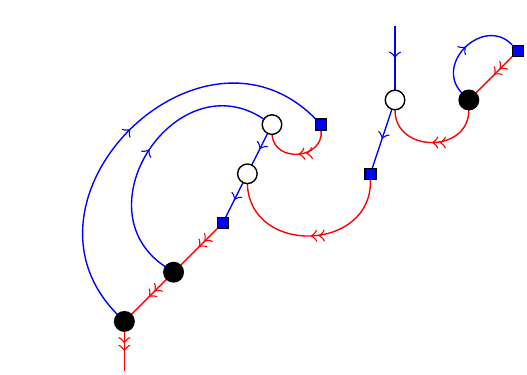}\quad
43.\hspace*{-1em}~\includegraphics[scale=0.5]{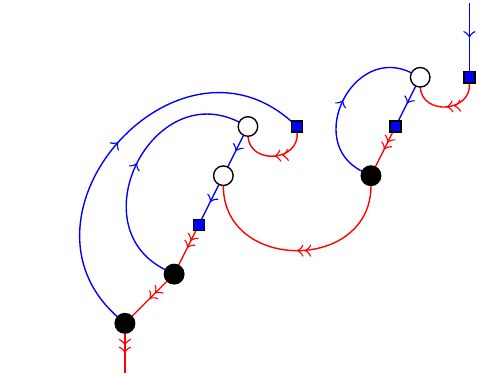}\quad
44.\hspace*{-1em}~\includegraphics[scale=0.5]{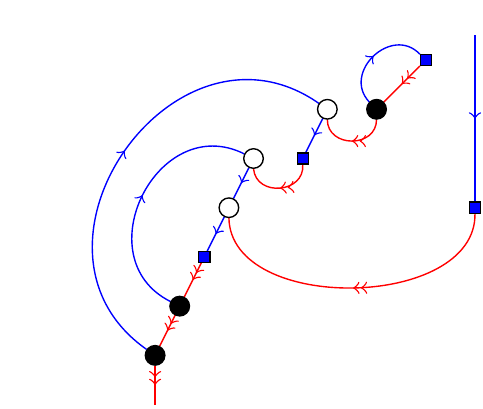}\quad
45.\hspace*{-1.5em}~\includegraphics[scale=0.5]{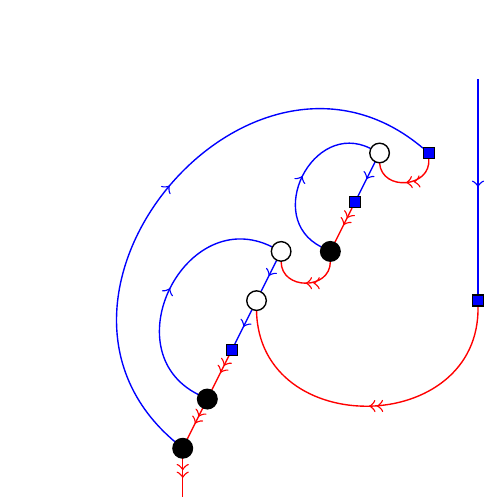}\quad
46.\hspace*{-1.5em}~\includegraphics[scale=0.5]{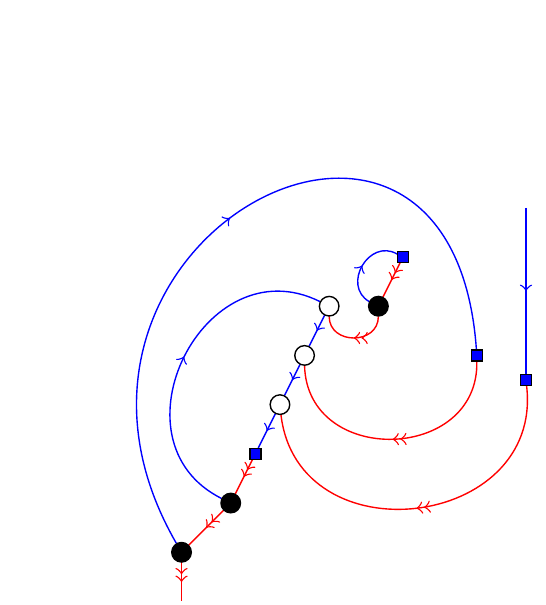}\quad
47.\hspace*{-1.5em}~\includegraphics[scale=0.5]{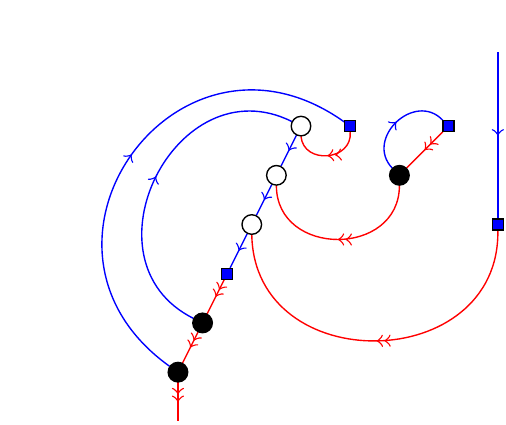}\quad
48.\hspace*{-1.5em}~\includegraphics[scale=0.5]{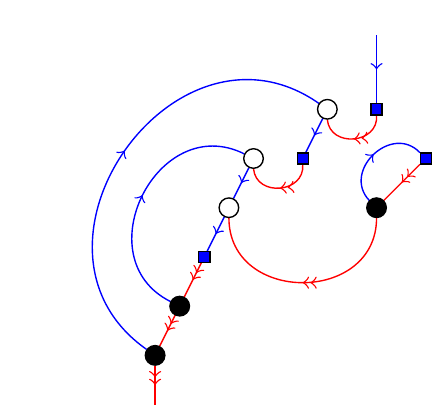}\quad
49.\hspace*{-1.5em}~\includegraphics[scale=0.5]{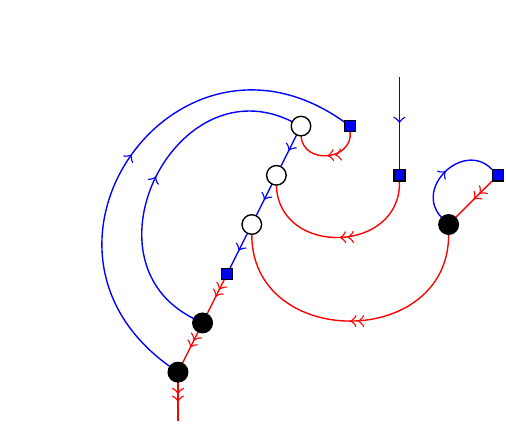}\quad
50.\hspace*{-1.5em}~\includegraphics[scale=0.5]{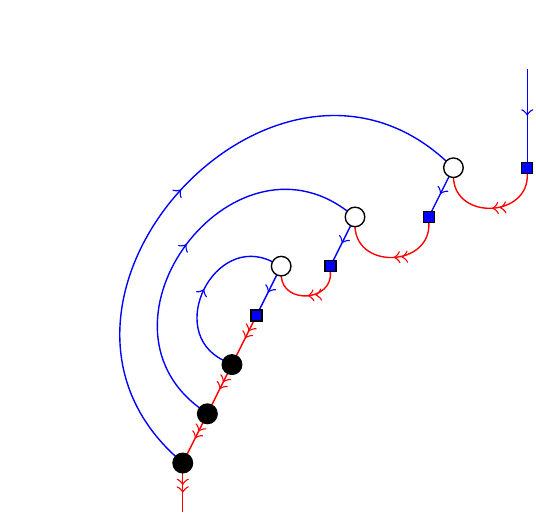}\quad
51.\hspace*{-1.5em}~\includegraphics[scale=0.5]{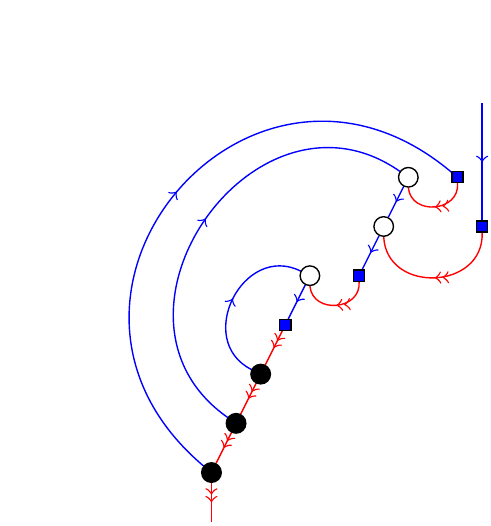}\quad
52.\hspace*{-1.5em}~\includegraphics[scale=0.5]{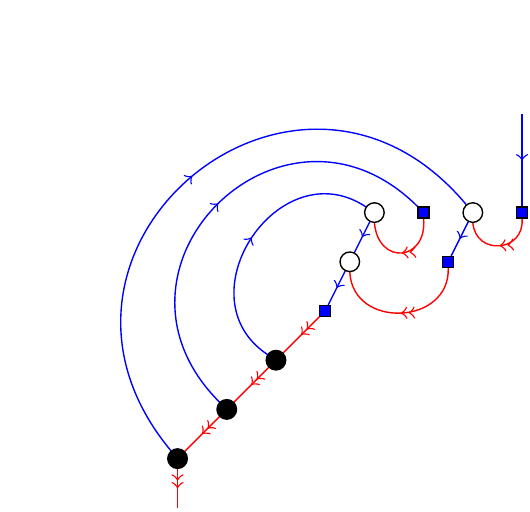}\quad
53.\hspace*{-1.5em}~\includegraphics[scale=0.5]{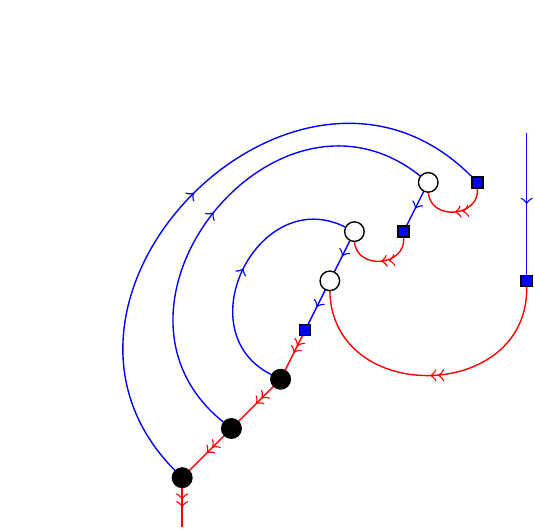}\quad
54.\hspace*{-3em}~\includegraphics[scale=0.5]{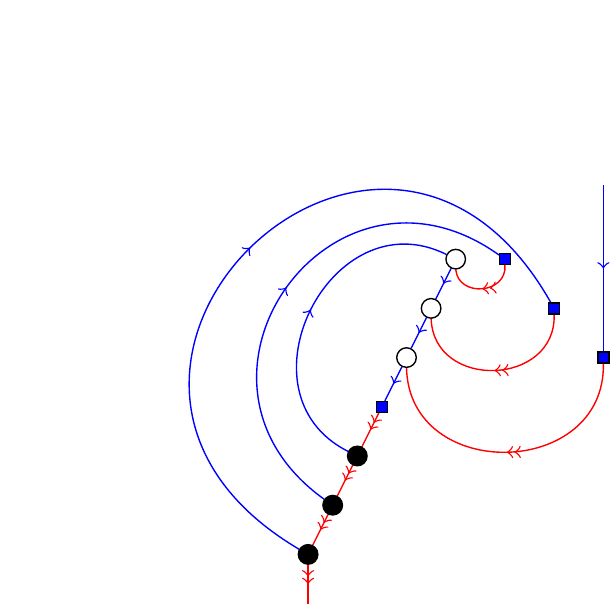}
\end{center}\bigskip


\noindent Corresponding normal planar lambda terms (with one free variable $x$):

\begin{multicols}{3}
\begin{enumerate}
\item $x(\lambda y.y(\lambda z.z(\lambda w.w)))$
\item $x(\lambda y.y(\lambda z.\lambda w.wz))$
\item $x(\lambda y.(y(\lambda z.z))(\lambda w.w))$
\item $x(\lambda y.\lambda z.z(y(\lambda w.w)))$
\item $x(\lambda y.\lambda z.z(\lambda w.wy))$
\item $x(\lambda y.\lambda z.(z(\lambda w.w))y)$
\item $x(\lambda y.\lambda z.(zy)(\lambda w.w))$
\item $x(\lambda y.\lambda z.\lambda w.w(zy))$
\item $x(\lambda y.\lambda z.\lambda w.(wz)y)$
\item $(x(\lambda y.y))(\lambda z.z(\lambda w.w))$
\item $(x(\lambda y.y))(\lambda z.\lambda w.wz)$
\item $(x(\lambda y.y(\lambda z.z)))(\lambda w.w)$
\item $(x(\lambda y.\lambda z.zy))(\lambda w.w)$
\item $((x(\lambda y.y))(\lambda z.z))(\lambda w.w)$
\item $\lambda y.y(x(\lambda z.z(\lambda w.w)))$
\item $\lambda y.y(x(\lambda z.\lambda w.wz))$
\item $\lambda y.y((x(\lambda z.z))(\lambda w.w))$
\item $\lambda y.y(\lambda z.z(x(\lambda w.w)))$
\item $\lambda y.y(\lambda z.z(\lambda w.wx))$
\item $\lambda y.y(\lambda z.(z(\lambda w.w))x)$
\item $\lambda y.y(\lambda z.(zx)(\lambda w.w))$
\item $\lambda y.y(\lambda z.\lambda w.w(zx))$
\item $\lambda y.y(\lambda z.\lambda w.(wz)x)$
\item $\lambda y.(y(\lambda z.z))(x(\lambda w.w))$
\item $\lambda y.(y(\lambda z.z))(\lambda w.wx)$
\item $\lambda y.(y(\lambda z.z(\lambda w.w)))x$
\item $\lambda y.(y(\lambda z.\lambda w.wz))x$
\item $\lambda y.((y(\lambda z.z))(\lambda w.w))x$
\item $\lambda y.(yx)(\lambda z.z(\lambda w.w))$
\item $\lambda y.(yx)(\lambda z.\lambda w.wz)$
\item $\lambda y.(y(x(\lambda z.z)))(\lambda w.w)$
\item $\lambda y.(y(\lambda z.zx))(\lambda w.w)$
\item $\lambda y.((y(\lambda z.z))x)(\lambda w.w)$
\item $\lambda y.((yx)(\lambda z.z))(\lambda w.w)$
\item $\lambda y.\lambda z.z(y(x(\lambda w.w)))$
\item $\lambda y.\lambda z.z(y(\lambda w.wx))$
\item $\lambda y.\lambda z.z((y(\lambda w.w))x)$
\item $\lambda y.\lambda z.z((yx)(\lambda w.w))$
\item $\lambda y.\lambda z.z(\lambda w.w(yx))$
\item $\lambda y.\lambda z.z(\lambda w.(wy)x)$
\item $\lambda y.\lambda z.(z(\lambda w.w))(yx)$
\item $\lambda y.\lambda z.(zy)(x(\lambda w.w))$
\item $\lambda y.\lambda z.(zy)(\lambda w.wx)$
\item $\lambda y.\lambda z.(z(y(\lambda w.w)))x$
\item $\lambda y.\lambda z.(z(\lambda w.wy))x$
\item $\lambda y.\lambda z.(z(\lambda w.w)y)x$
\item $\lambda y.\lambda z.((zy)(\lambda w.w))x$
\item $\lambda y.\lambda z.(z(yx))(\lambda w.w)$
\item $\lambda y.\lambda z.((zy)x)(\lambda w.w)$
\item $\lambda y.\lambda z.\lambda w.w(z(yx))$
\item $\lambda y.\lambda z.\lambda w.w((zy)x)$
\item $\lambda y.\lambda z.\lambda w.(wz)(yx)$
\item $\lambda y.\lambda z.\lambda w.(w(zy))x$
\item $\lambda y.\lambda z.\lambda w.((wz)y)x$
\end{enumerate}
\end{multicols}

\end{document}